\newcommand{\norm}[1]{\left\| #1 \right\|}  
\newcommand{\scprd}[1]{\left\langle #1 \right\rangle}  
\newcommand{\ddt}[1]{\frac{d} {dt} #1}   
\newcommand{\ddtl}[1]{\frac{d^l}{dt^l} #1}  
\newcommand{\N}{\mathbb{N}}  
\newcommand{\R}{\mathbb{R}}
\newcommand{\C}{\mathbb{C}}
\newcommand{\ran}{\operatorname{ran}}
\newcommand{\rk}{\operatorname{rk}}
\newcommand{\supp}{\operatorname{supp}}
\newcommand{\dist}{\operatorname{dist}}
\newcommand{\eps}{\varepsilon}
\renewcommand{\phi}{\varphi}
\newcommand{\ol}{\overline}
\renewcommand{\Im}{\operatorname{Im}}
\numberwithin{equation}{section}
\newtheorem{thm}{Theorem}[section]
\newtheorem{cor}[thm]{Corollary}
\newtheorem{prop}[thm]{Proposition}
\newtheorem{lm}[thm]{Lemma}
\newtheorem{cond}[thm]{Condition}
\theoremstyle{definition} 
\theoremstyle{definition}
\title{Adiabatic theorems for general linear operators with time-dependent domains}
\author{Jochen Schmid\\  
\small Institut f\"ur Mathematik, Universit\"at W\"urzburg, 97074 W\"urzburg, Germany\\
\small jochen.schmid@mathematik.uni-wuerzburg.de}   
\date{}
\begin{document}

\maketitle

\begin{abstract}
\small{ \noindent 
We establish adiabatic theorems with and without spectral gap condition for general -- typically dissipative -- 
linear operators $A(t): D(A(t)) \subset X \to X$ with time-dependent domains $D(A(t))$ in some Banach space $X$. In these theorems, we do not require the considered spectral values $\lambda(t)$ of $A(t)$ to be (weakly) semisimple. We then apply our general theorems to the special case of skew-adjoint operators $A(t) = 1/i A_{a(t)}$ defined by symmetric sesquilinear forms $a(t)$ 
and thus generalize, in a very simple way, the only adiabatic theorem for operators with time-dependent domains known so far. 
}
\end{abstract}

{ \small \noindent \emph{Subject classification (2010) and key words:} 34E15, 34G10, 35Q41, 47D06
\\
Adiabatic theorems for general linear operators, dissipative operators, time-dependent domains, non-semisimple spectral values, spectral gap, time-dependent symmetric sesquilinear forms
}

\section{Introduction} \label{sect:intro}

%



Adiabatic theory -- or, more precisely, time-adiabatic theory for linear operators with time-dependent domains -- 
is concerned with slowly time-varying systems described by 
evolution equations
\begin{align} \label{eq: awp, adtheorie 0}
x' = A(\eps s) x \quad (s \in [s_0,1/\eps]) \quad \text{and} \quad x(s_0) = y, 
\end{align}
where $A(t): D(A(t)) \subset X \to X$ for $t \in [0,1]$ is a densely defined closed linear operator with domain $D(A(t))$ in a Banach space $X$ and where $\eps \in (0,\infty)$ is some (small) slowness parameter.
Smaller and smaller values of $\eps$ mean that $A(\eps s)$ depends more and more slowly on time $s$ or, in other words, that the typical time 
where $A(\eps \,.\,)$ varies 
appreciably gets larger and larger. 
%
Such slowly time-varying systems arise, for instance, when an electric or magnetic potential is slowly switched on or in approximate molecular dynamics (in the context of the Born--Oppenheimer approximation).
%
It is common and convenient in adiabatic theory to rescale time as $t = \eps s$ and to consider the equivalent rescaled evolution equations 
\begin{align} \label{eq: awp, adtheorie}
x' = \frac 1 \eps A(t) x \quad (t \in [t_0,1]) \quad \text{and} \quad x(t_0) = y
\end{align}
with initial times $t_0 \in (0,1]$ and initial values $y \in D(A(t_0))$. 
It is further assumed 
that these evolution equations are well-posed, that is, for every initial time $t_0 \in (0,1]$ and every initial value $y \in D(A(t_0))$ the initial value problem~\eqref{eq: awp, adtheorie} has a unique classical solution $x_{\eps}(\,.\,,t_0,y)$ and $x_{\eps}(\,.\,,t_0,y)$ 
continuously depends on $t_0$ and $y$. A bit more concisely and conveniently, the well-posedness of~\eqref{eq: awp, adtheorie} can be characterized by the existence of a unique so-called evolution system $U_{\eps}$ for $\frac 1 \eps A$ on the spaces $D(A(t))$, that is, a two-parameter family of bounded solution operators $U_{\eps}(t,t_0)$ in $X$ determined by $U_{\eps}(t,t_0)y = x_{\eps}(t,t_0,y)$ for $y \in D(A(t_0))$ and $t_0 \le t$. 
\smallskip


Adiabatic theory is further concerned with curves of spectral values $\lambda(t) \in \sigma(A(t))$, mostly eigenvalues, 
of the operators $A(t)$. In the classical special case of skew-adjoint operators $A(t)$ (that is, operators of the form $1/i$ times a self-adjoint operator $A_0(t)$), these spectral values $\lambda(t) = 1/i \, \lambda_0(t)$ could correspond to the ground-state energy $\lambda_0(t)$ of $A_0(t)$, for instance.
If $\lambda(t)$ is isolated in the spectrum $\sigma(A(t))$ of $A(t)$ for every $t \in [0,1]$, one speaks of a spectral gap. And such a spectral gap, in turn, is called uniform or non-uniform depending on whether or not 
\begin{align} \label{eq: ausglage und -frage, glm sl}
\inf_{t \in [0,1]} \dist \big( \lambda(t), \sigma(A(t))\setminus \{\lambda(t)\}  \big) > 0.
\end{align}
Some typical spectral situations are illustrated below for the special case of skew-adjoint operators $A(t)$: the spectrum $\sigma(A(t))$ is plotted on the vertical axis $i \R$ against the horizontal $t$-axis and the red line represents the considered spectral values $\lambda(t)$. In the first two figures, we have a spectral gap which is uniform in the first and non-uniform in the second picture. And the third figure depicts a situation without spectral gap.

\vfill
\begin{figure}[htbp]%
\centering
\begin{subfigure}[b]{0.3\textwidth}
\centering
\includegraphics[width=\columnwidth]{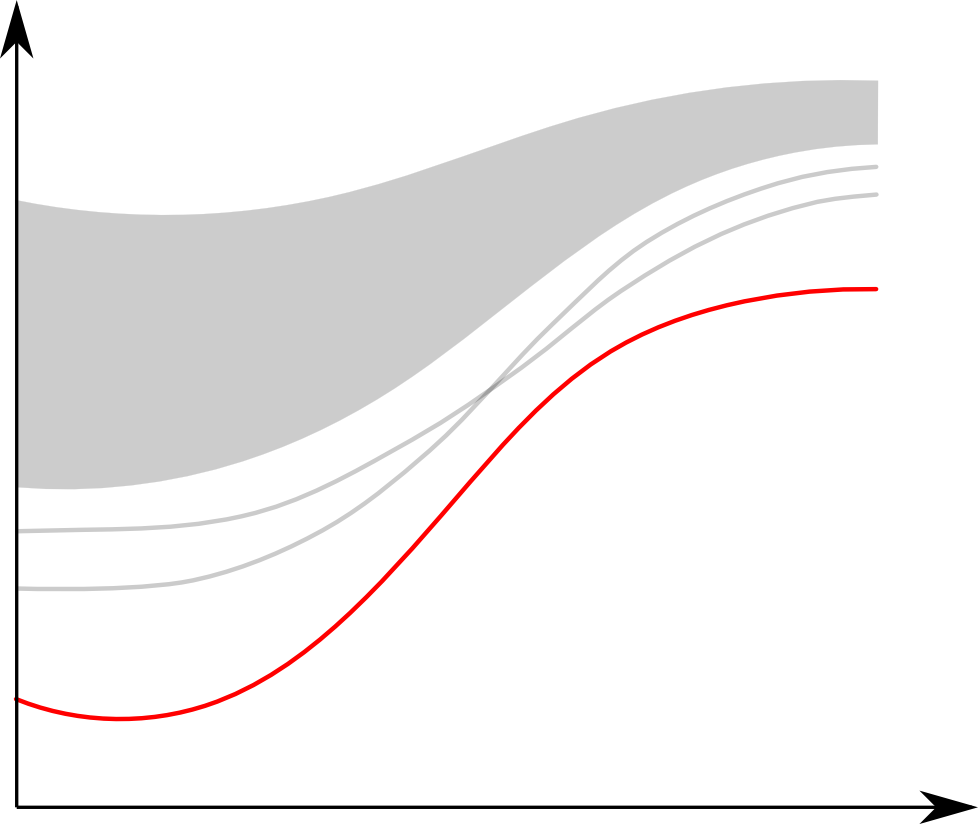}
\end{subfigure}
\hfill
\begin{subfigure}[b]{0.3\textwidth}
\centering
\includegraphics[width=\columnwidth]{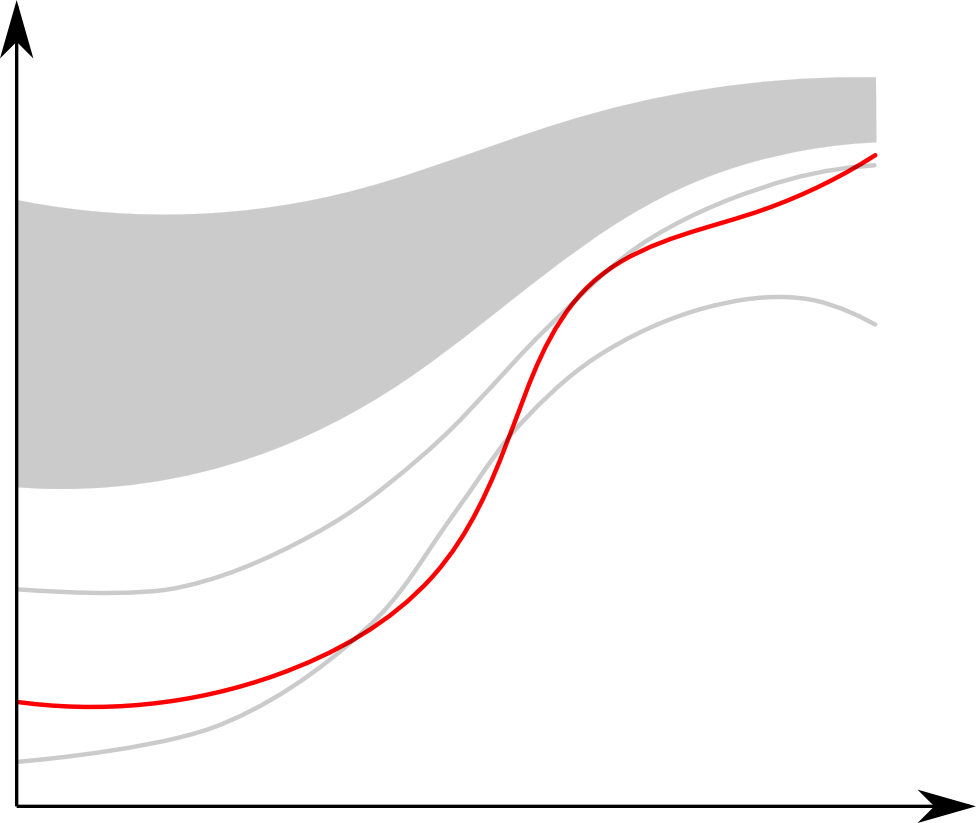}
\end{subfigure}
\hfill
\begin{subfigure}[b]{0.3\textwidth}
\centering
\includegraphics[width=\columnwidth]{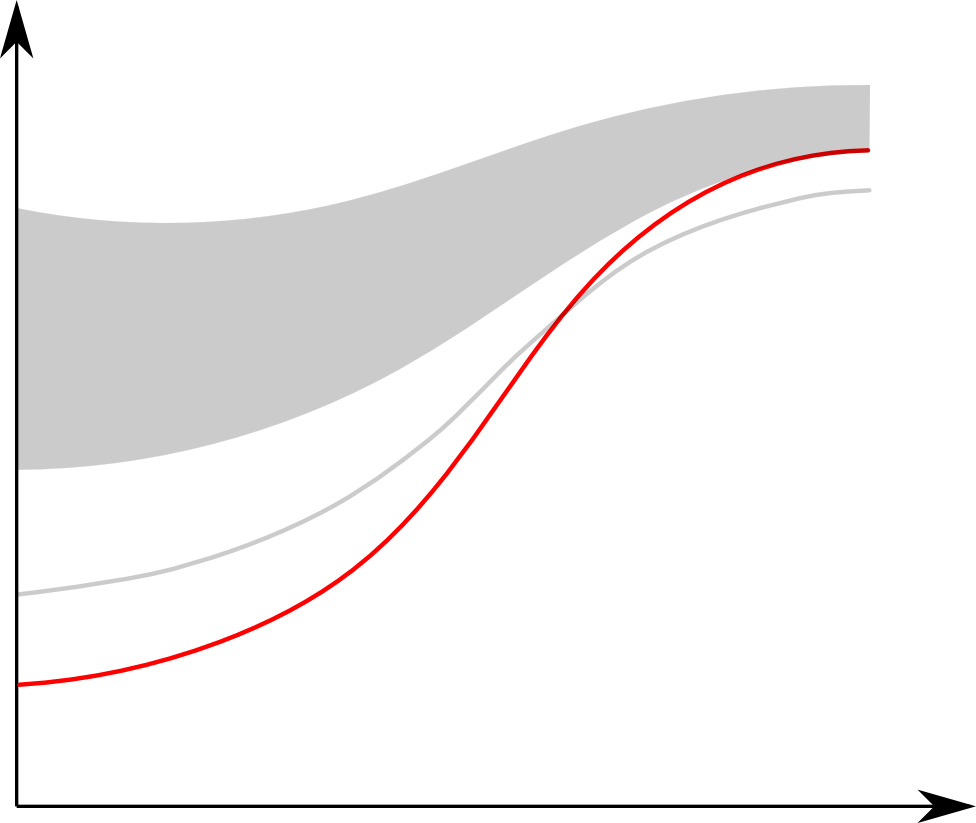}
\end{subfigure}
\end{figure}

%


What adiabatic theory is interested in is how certain distinguished solutions to~\eqref{eq: awp, adtheorie} behave in the singular limit where the slowness parameter $\eps$ tends to $0$. In more specific terms, the basic goal of adiabatic theory can be described -- for skew-adjoint and then for general operators -- as follows. 
In the special case of skew-adjoint operators $A(t)$, one wants to show that for small values of $\eps$ and every $t$ the solution operator $U_{\eps}(t,0)$ 
takes eigenvectors of $A(0)$ corresponding to $\lambda(0)$ into eigenvectors of $A(t)$ corresponding to $\lambda(t)$ -- up to small errors in $\eps$. Shorter and more precisely, one wants to show that 
\begin{align}  \label{eq: aussage des adsatzes, schiefsa}
(1-P(t)) U_{\eps}(t,0) P(0) \longrightarrow 0 \qquad (\eps \searrow 0)
\end{align}
for all $t \in [0,1]$, where $P(t)$ for (almost) every $t$ is the canonical spectral projection of $A(t)$ corresponding to $\lambda(t)$. 
It is defined via the spectral measure $P^{A(t)}$ of $A(t)$, namely $P(t) = P^{A(t)}(\{\lambda(t)\})$, and it is the orthogonal projection yielding the decomposition of $X$ into $P(t)X = \ker(A(t)-\lambda(t))$ and $(1-P(t))X = \ol{\ran}(A(t)-\lambda(t))$.
In the case of general operators $A(t)$, one again wants to show that
\begin{align}  \label{eq: aussage des adsatzes}
(1-P(t)) U_{\eps}(t,0) P(0) \longrightarrow 0 \qquad (\eps \searrow 0)
\end{align}
for all $t \in [0,1]$, where now $P(t)$ for (almost) every $t$ is a suitable general 
spectral projection of $A(t)$ corresponding to $\lambda(t)$. 
In the case with spectral gap, suitable 
spectral projections are 
the so-called associated projections, which yield the decomposition
\begin{align} \label{eq: zerl, sl}
P(t)X = \ker(A(t)-\lambda(t))^{m(t)} \quad \text{and} \quad (1-P(t))X = \ran(A(t)-\lambda(t))^{m(t)}
\end{align}
for some $m(t) \in \N$ provided $\lambda(t)$ is a pole of $(\,.\,-A(t))^{-1}$.
In the case without spectral gap, suitable 
spectral projections are 
the so-called weakly associated projections, which yield the decomposition
\begin{align}  \label{eq: zerl, ohne sl}
P(t)X = \ker(A(t)-\lambda(t))^{m(t)} \quad \text{and} \quad (1-P(t))X = \ol{\ran}(A(t)-\lambda(t))^{m(t)}
\end{align}
for some $m(t) \in \N$.
An adiabatic theorem is now simply a theorem that gives conditions on $A(t)$, $\lambda(t)$, $P(t)$ under which the convergence~\eqref{eq: aussage des adsatzes} holds true. 
%
%
Sometimes, we will distinguish quantitative and qualitative adiabatic theorems depending on whether they 
give information on the rate of convergence in~\eqref{eq: aussage des adsatzes} or not. Specifically, if the rate of convergence in~\eqref{eq: aussage des adsatzes} can be shown to be of polynomial order $\eps^n$ or even exponential order $e^{-c/\eps}$ in $\eps$, we will speak of an adiabatic theorem of higher order. 
\smallskip


Adiabatic theory has a long history going back to the first days of quantum theory and many authors have contributed to it since then. 
In the first decades after 1928, all adiabatic theorems were exclusively concerned with skew-adjoint operators $A(t)$ and until 1998 they all required a spectral gap condition. See, for instance, \cite{BornFock28}, \cite{Kato50}, \cite{JoyePfister93}, \cite{Nenciu93} for the case with spectral gap and \cite{AvronElgart99}, \cite{Bornemann98}, \cite{Teufel01} \cite{FishmanSoffer16}, for instance, for the case without spectral gap.
In the last decade, various adiabatic theorems for more general operators $A(t)$ have been established and again, just like in the special case of skew-adjoint operators, the case with spectral gap has been treated first. A major motivation for these general adiabatic theorems 
comes from applications to open quantum systems which, unlike closed quantum systems, cannot be described by skew-adjoint operators anymore. 
See, for instance, \cite{AbouSalem07}, \cite{Joye07}, \cite{HansonJoyePautratRaquepas17} for the case with spectral gap and \cite{AvronGraf12}, \cite{dipl}, \cite{JaksicPillet14}, for instance, for the case without spectral gap.
A detailed historical overview can be found in~\cite{diss}, for instance.
\smallskip
 
%
So far, however, almost all adiabatic theorems from the literature require the domains of the operators $A(t)$ to be time-independent, that is, $D(A(t)) = D$ for all $t \in [0,1]$. In fact, there is only one adiabatic theorem, namely the one from~\cite{Bornemann98}, where the operators $A(t)$ are allowed to have time-dependent domains. In this result, skew-adjoint operators $A(t) = 1/i A_{a(t)}$ defined by symmetric sesquilinear forms $a(t)$ with time-independent (form) domain are considered along with spectral values $\lambda(t)$ that are assumed to belong to the discrete spectrum of $A(t)$ (hence, in particular, isolated).
Such operators arise, for instance, as Schr\"odinger operators $-\Delta + V(t)$ (sum in the form sense) with time-dependent Rollnik potentials $V(t)$. 
\smallskip


In this paper, we establish adiabatic theorems with and without spectral gap condition (including an adiabatic theorem of higher order) for general linear operators $A(t): D(A(t)) \subset X \to X$ with time-dependent domains $D(A(t))$. 
In particular, we do not require the considered spectral values $\lambda(t)$ to be semisimple (case with spectral gap) or weakly semisimple (case without spectral gap), that is, we do not require $m(t) = 1$ in the decomposition~\eqref{eq: zerl, sl} or~\eqref{eq: zerl, ohne sl}, respectively.
With these theorems, we extend the adiabatic theorems from~\cite{zeitunabh} and~\cite{JoyePfister93} and in our proofs we can closely follow the proofs from~\cite{zeitunabh} and~\cite{JoyePfister93}. 
We then apply our general theorems 
to the special case of skew-adjoint operators $A(t) = 1/i A_{a(t)}$ defined by closed semibounded symmetric sesquilinear forms $a(t)$ with time-independent (form) domain. In that way, we obtain among other things the following adiabatic theorem without spectral gap condition, which generalizes the adiabatic theorem from~\cite{Bornemann98}. In simplified form, it can be formulated as follows (with $I := [0,1]$). See~\cite{Schmid15qmath}. If $A(t)  = 1/i A_{a(t)}$ and $a(t)$ are as above and $\lambda(t)$ for every $t \in I$ is an eigenvalue of $A(t)$ and if $P(t)$ for almost every $t \in I$ is weakly associated with $A(t)$ and $\lambda(t)$ and of finite rank, then -- under suitable regularity assumptions -- 
one has 
\begin{align}
\sup_{t \in I} \norm{(1-P(t))U_{\eps}(t,0)P(0)}, \quad \sup_{t \in I} \norm{P(t)U_{\eps}(t,0)(1-P(0))} \longrightarrow 0
\end{align}
as $\eps \searrow 0$. In the above relation, $U_{\eps}$ denotes the evolution system for $\frac{1}{\eps} A$ on the spaces $D(A(t))$. 
Apart from yielding more general results, our strategy of proof is also considerably simpler than the -- completetly different -- method of proof from~\cite{Bornemann98}. 
\smallskip


In Section~\ref{sect:preliminaries} we recall 
the most important preliminaries needed for our adiabatic theorems, namely on well-posedness and evolution systems (Section~\ref{sect:wp-evol-syst}), on associated and weakly associated projections (Section~\ref{sect:spectral-proj}), and on adiabatic evolution systems (Section~\ref{sect:ad-evol}). Section~\ref{sect:ad-thms-general} contains our adiabatic theorems for general operators $A(t)$, while Section~\ref{sect:ad-thms-skew-adjoint} is devoted to the special case of skew-adjoint operators $A(t) = 1/i A_{a(t)}$ defined by symmetric sesquilinear forms $a(t)$. 
%
%
%
%
In the entire paper, we will use the following notational conventions. $X$, $Y$, $Z$ will denote Banach spaces, $H$ 
a Hilbert space over $\C$ and $\norm{\,.\,}_{X,Y}$ will stand for the operator norm on $L(X,Y)$, the space of bounded linear operators from $X$ to $Y$. If $X=Y$, we will usually simply write $\norm{\,.\,}$ for $\norm{\,.\,}_{X,X}$.
Also, we abbreviate 
\begin{align*}
I := [0,1] \qquad \text{and} \qquad \Delta := \{(s,t) \in I^2: s \le t\}
\end{align*}
and for evolution systems $U$ defined on $\Delta$ we will write $U(t) := U(t,0)$ for brevity. 
And finally, whenever a family of linear operators $A(t): D(A(t)) \subset X \to X$ is given and the evolution  system for $\frac{1}{\eps} A$ on the spaces $D(A(t))$ exists, it will be denoted by $U_{\eps}$. 

\section{Some preliminaries} \label{sect:preliminaries}

\subsection{Well-posedness and evolution systems} \label{sect:wp-evol-syst}

In this section, we recall from~\cite{EngelNagel} the concepts of well-posedness and (solving) evolution systems for non-autonomous linear evolution equations 
\begin{align} \label{eq: ivp, vorber}
x' = A(t)x  \quad (t \in [s,b]) \quad \text{and} \quad  x(s) = y
\end{align}
with densely defined linear operators $A(t): D(A(t)) \subset X \to X$  ($t \in [a,b]$) and initial values $y \in D(A(s))$ at initial times $s \in [a,b)$. 
\smallskip

Well-posedness of evolution equations~\eqref{eq: ivp, vorber} means, of course, something like unique (classical) solvability with continuous dependence on the initial data. 
In precise terms, the initial value problems~\eqref{eq: ivp, vorber} for $A$ are called \emph{well-posed on (the spaces) $D(A(t))$} if and only if there exists a \emph{(solving) evolution system for $A$ on (the spaces) $D(A(t))$}. 
Such an evolution system for $A$ on $D(A(t))$ is, by definition, 
a family $U$ of bounded operators $U(t,s)$ in $X$ for $(s,t) \in \Delta_J := \{ (s,t) \in J^2: s \le t \}$ such that 
\begin{itemize}
\item [(i)] for every $s \in [a,b)$ and $y \in D(A(s))$, the map $[s,b] \ni t \mapsto U(t,s)y$ is a continuously differentiable solution to the initial value problem~\eqref{eq: ivp, vorber}, that is,  
a continuously differentiable map $x: [s,b] \to X$ such that $x(t) \in D(A(t))$ and $x'(t) = A(t) x(t)$ for all $t \in [s,b]$ and $x(s) = y$,
\item[(ii)] $U(t,s) U(s,r) = U(t,r)$ for all $(r,s), (s,t) \in \Delta_J$ 
and $\Delta_J \ni (s,t) \mapsto U(t,s)x$ is continuous for all $x \in X$.
\end{itemize}
%
\smallskip

If, for a given family $A$ of densely defined operators $A(t): D(A(t)) \subset X \to X$, there exists any solving evolution system, then it is already unique. In order to see this we need the following simple lemma, which will always be used when the difference of two evolution systems has to be dealt with. 

\begin{lm} \label{lm: zeitentw rechtsseit db}
Suppose $A(t): D(A(t)) \subset X \to X$ is a densely defined linear operator for every $t \in J$ and that $U$ is an evolution system for $A$ on $D(A(t))$. 
Then, for every $s_0 \in [a,t)$ and every $x_0 \in D(A(s_0))$, the map $[a,t] \ni s \mapsto U(t,s)x_0$ is right differentiable at $s_0$ with right derivative $-U(t,s_0) A(s_0) x_0$. 
\end{lm}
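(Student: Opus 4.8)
The goal is to show that $s \mapsto U(t,s)x_0$ has right derivative $-U(t,s_0)A(s_0)x_0$ at $s_0$. The plan is to write the difference quotient as a telescoping combination that lets us exploit the two defining properties of an evolution system: that $r \mapsto U(r,s_0)x_0$ solves the initial value problem (property (i)), and the cocycle/continuity property (ii).

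First I would fix $s_0 \in [a,t)$ and $x_0 \in D(A(s_0))$, and for $h>0$ small enough that $s_0+h \le t$, write
\begin{align*}
\frac{U(t,s_0+h)x_0 - U(t,s_0)x_0}{h}
= U(t,s_0+h)\,\frac{x_0 - U(s_0+h,s_0)x_0}{h},
\end{align*}
using the cocycle identity $U(t,s_0)=U(t,s_0+h)U(s_0+h,s_0)$ from (ii). Now set $x(r) := U(r,s_0)x_0$ for $r\in[s_0,t]$; by (i) this is continuously differentiable with $x'(r)=A(r)x(r)$ and $x(s_0)=x_0$, so
\begin{align*}
\frac{x_0 - U(s_0+h,s_0)x_0}{h} = -\frac{x(s_0+h)-x(s_0)}{h} \longrightarrow -x'(s_0) = -A(s_0)x_0
\end{align*}
as $h \searrow 0$.

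The remaining, and main, step is to pass to the limit in the product. The factor $U(t,s_0+h)$ is applied to a vector converging to $-A(s_0)x_0$, so I would split
\begin{align*}
U(t,s_0+h)\frac{x_0 - U(s_0+h,s_0)x_0}{h}
&= U(t,s_0+h)\Big(\tfrac{x_0 - U(s_0+h,s_0)x_0}{h} + A(s_0)x_0\Big)\\
&\quad - U(t,s_0+h)A(s_0)x_0,
\end{align*}
and handle the two terms separately. For the second term, continuity of $h\mapsto U(t,s_0+h)A(s_0)x_0$ from (ii) gives convergence to $-U(t,s_0)A(s_0)x_0$, which is exactly the claimed right derivative. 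For the first term, the vector in parentheses tends to $0$, so I need the operators $U(t,s_0+h)$ to be \emph{uniformly bounded} in $h$ near $0$; this is the one point requiring an argument rather than a direct appeal to the axioms. I expect this to be the main obstacle, and I would resolve it by a uniform boundedness / Banach--Steinbaus argument: for each $x\in X$ the map $\Delta_J\ni(s,t)\mapsto U(t,s)x$ is continuous by (ii), hence $\{U(\sigma,\tau)x : (\tau,\sigma)\in\Delta_J\}$ is bounded for every $x$ (a continuous image of a compact set), so by the uniform boundedness principle $\sup_{(\tau,\sigma)\in\Delta_J}\norm{U(\sigma,\tau)} =: M < \infty$. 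With this bound the first term is at most $M$ times a null sequence, hence tends to $0$, and combining the two limits yields the assertion.
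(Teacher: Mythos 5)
Your proof is correct and follows essentially the same route as the paper: both rewrite the difference quotient via the cocycle identity as $-U(t,s_0+h)\bigl(U(s_0+h,s_0)x_0-x_0\bigr)/h$ and pass to the limit using property (i) and the strong continuity from (ii). The only difference is that you explicitly justify the uniform boundedness of $U(t,s_0+h)$ via Banach--Steinhaus on the compact set $\Delta_J$, a detail the paper leaves implicit in its appeal to strong continuity.
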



\begin{proof}
Since $U(t,s)U(s,r) = U(t,r)$ for $(r,s), (s,t) \in \Delta_J$ and since $\Delta_J \ni (s,t) \mapsto U(t,s)$ is strongly continuous, we obtain for every $s_0 \in [a,t)$ and $x_0 \in D(A(s_0))$ that
\begin{align*}
\frac{ U(t,s_0+h)x_0 - U(t,s_0)x_0 }{h} &= -U(t,s_0+h) \frac{ U(s_0+h,s_0)x_0 - x_0 }{h} \\
&\longrightarrow -U(t,s_0)A(s_0)x_0 
\end{align*}
as $h \searrow 0$, as desired. 
\end{proof}

\begin{cor} \label{cor: zeitentw eind} 
Suppose $A(t): D(A(t)) \subset X \to X$ is a densely defined linear operator for every $t \in J$. If $U$ and $V$ are two evolution systems for $A$ on $D(A(t))$, then $U = V$.
\end{cor}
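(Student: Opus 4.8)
The plan is to fix $(s,t) \in \Delta_J$ and $x \in D(A(s))$ and to show that the auxiliary map $g \colon [s,t] \to X$, $g(r) := V(t,r) U(r,s) x$, is constant; evaluating at the endpoints then gives $V(t,s)x = g(s) = g(t) = U(t,s)x$, and since $D(A(s))$ is dense in $X$ and $U(t,s), V(t,s)$ are bounded, this yields $U(t,s) = V(t,s)$. As $(s,t) \in \Delta_J$ is arbitrary, $U = V$ follows.

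First I would record that $g$ is continuous: $r \mapsto U(r,s)x$ is continuous by property~(i) (it is even continuously differentiable), $r \mapsto V(t,r)$ is strongly continuous by property~(ii), and, by the uniform boundedness principle applied to the strongly continuous family $\{V(t,r) : r \in [s,t]\}$ over the compact interval $[s,t]$, one has $\sup_{r \in [s,t]} \norm{V(t,r)} < \infty$. Combining these via the elementary estimate $\norm{V(t,r')U(r',s)x - V(t,r)U(r,s)x} \le \norm{V(t,r')}\,\norm{U(r',s)x - U(r,s)x} + \norm{(V(t,r') - V(t,r))U(r,s)x}$ gives continuity of $g$.

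Next I would compute the right derivative of $g$ at an arbitrary $r_0 \in [s,t)$. Writing the difference quotient as
\[
V(t,r_0+h)\,\frac{U(r_0+h,s)x - U(r_0,s)x}{h} \;+\; \frac{V(t,r_0+h) - V(t,r_0)}{h}\, U(r_0,s)x ,
\]
the first summand converges to $V(t,r_0) A(r_0) U(r_0,s)x$ as $h \searrow 0$ — using that $\frac{1}{h}\bigl(U(r_0+h,s)x - U(r_0,s)x\bigr) \to A(r_0)U(r_0,s)x$ by property~(i), that $V(t,r_0+h) \to V(t,r_0)$ strongly, and the local boundedness of $\norm{V(t,r_0+h)}$ just noted — while the second summand converges to $-V(t,r_0) A(r_0) U(r_0,s)x$ by Lemma~\ref{lm: zeitentw rechtsseit db} applied with $x_0 := U(r_0,s)x \in D(A(r_0))$ (the membership holding by property~(i)). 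The two limits cancel, so $g$ is right differentiable on $[s,t)$ with right derivative identically $0$.

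Finally I would invoke the standard fact that a continuous map $[s,t] \to X$ whose right derivative vanishes on $[s,t)$ is constant (for instance by composing with functionals $\phi \in X^*$ and reducing to the scalar mean-value-type statement). This gives $g(s) = g(t)$ and hence the claim. The only genuinely delicate point is the passage to the limit in the product defining the first summand above, which is precisely why the local boundedness of $\{V(t,r)\}$ — a consequence of the uniform boundedness principle — must be isolated first; the rest is bookkeeping.
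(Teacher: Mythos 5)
Your proof is correct and follows essentially the same route as the paper: both interpolate between the two evolution systems via the auxiliary map $\tau \mapsto (\text{outer})(t,\tau)(\text{inner})(\tau,s)x$, use Lemma~\ref{lm: zeitentw rechtsseit db} to see that its right derivative vanishes, and conclude by the ``vanishing right derivative implies constant'' fact together with density of $D(A(s))$. The only (immaterial, by symmetry of the claim) difference is that you take $V$ as the outer and $U$ as the inner factor, whereas the paper does the reverse; your explicit appeal to the uniform boundedness principle just makes precise a point the paper leaves implicit.
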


\begin{proof}
If $U$ and $V$ are two evolution systems for $A$ on the spaces $D(A(t))$, then for every $(s,t) \in \Delta_J$ with $s < t$ and $y \in D(A(s))$ the map $[s,t] \ni \tau \mapsto U(t,\tau)V(\tau,s)y$ is continuous and right differentiable with vanishing right derivative by virtue of Lemma~\ref{lm: zeitentw rechtsseit db}. With the help of Corollary~2.1.2 of~\cite{Pazy} it then follows that
\begin{align*}
V(t,s)y - U(t,s)y = U(t,\tau)V(\tau,s)y \big|_{\tau=s}^{\tau=t} = 0,
\end{align*}
which by the density of $D(A(s))$ in $X$ implies $U(\,.\,,s) = V(\,.\,,s)$. Since $s$ was arbitrary in $[a,b)$ we obtain $U = V$, as desired.
\end{proof}

We will also need the following perturbation result. It provides an estimate for the evolution system $V$ for a perturbed family $A + B$ based on a corresponding estimate for the evolution $U$ for the unperturbed family $A$, provided both these evolutions  exist. 

\begin{prop} \label{prop: störreihe für gestörte zeitentw}
Suppose that $A(t): D(A(t)) \subset X \to X$ is a densely defined linear operator for every $t \in I$ and that $t \mapsto B(t) \in L(X)$ is strongly continuous. Suppose further that there is an evolution system $U$ for $A$ on $D(A(t))$ and an evolution system $V$ for $A+B$ on $D(A(t))$ and that $\norm{ U(t,s) } \le M e^{ \omega (t-s) }$ for all $(s,t) \in \Delta$ and some $M \in [1,\infty)$ and $\omega \in \R$. Then
\begin{align*}
\norm{ V(t,s) } \le M e^{ (\omega + M b)(t-s) }
\end{align*}
for all $(s,t) \in \Delta$, where $b := \sup_{t \in I} \norm{ B(t)}$. 
\end{prop}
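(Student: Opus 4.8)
The plan is to derive a Duhamel-type integral equation relating $V$ and $U$ and then run a Gronwall argument. First I would fix $(s,t) \in \Delta$ and $y \in D(A(s))$ and consider the map $[s,t] \ni \tau \mapsto U(t,\tau)V(\tau,s)y$. By Lemma~\ref{lm: zeitentw rechtsseit db}, the factor $\tau \mapsto U(t,\tau)x_0$ is right differentiable with right derivative $-U(t,\tau)A(\tau)x_0$, and since $\tau \mapsto V(\tau,s)y$ solves $x' = (A(\tau)+B(\tau))x$, a product-rule computation (valid for the right derivative, using strong continuity of all factors and that $V(\tau,s)y \in D(A(\tau))$) shows that $\tau \mapsto U(t,\tau)V(\tau,s)y$ is continuous and right differentiable with right derivative $U(t,\tau)B(\tau)V(\tau,s)y$. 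Invoking Corollary~2.1.2 of~\cite{Pazy} (as in the proof of Corollary~\ref{cor: zeitentw eind}) to integrate a continuous function with prescribed right derivative, we get
\begin{align*}
V(t,s)y - U(t,s)y = \int_s^t U(t,\tau)B(\tau)V(\tau,s)y \, d\tau.
\end{align*}
Since $D(A(s))$ is dense in $X$ and all operators involved are bounded, this identity extends to all $y \in X$, hence holds as an operator identity $V(t,s) = U(t,s) + \int_s^t U(t,\tau)B(\tau)V(\tau,s)\, d\tau$.

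Next I would take norms. Using $\norm{U(t,\tau)} \le M e^{\omega(t-\tau)}$ and $\norm{B(\tau)} \le b$, and writing $\psi(\tau) := e^{-\omega(\tau - s)}\norm{V(\tau,s)}$, the integral inequality becomes, after multiplying through by $e^{-\omega(t-s)}$,
\begin{align*}
\psi(t) \le M + Mb \int_s^t \psi(\tau)\, d\tau.
\end{align*}
The function $\tau \mapsto \norm{V(\tau,s)}$ is bounded on the compact set $[s,t]$ (e.g. by strong continuity of $V$ together with the uniform boundedness principle, or simply because $V$ is an evolution system on a compact interval), so $\psi$ is a bounded measurable function and Gronwall's inequality applies, yielding $\psi(t) \le M e^{Mb(t-s)}$, i.e. $\norm{V(t,s)} \le M e^{(\omega + Mb)(t-s)}$, which is the claim.

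**Main obstacle.** The only genuinely delicate point is justifying the product rule for the right derivative of $\tau \mapsto U(t,\tau)V(\tau,s)y$ and the passage from "continuous with prescribed right derivative" to the integral identity — that is, making sure the hypotheses of Corollary~2.1.2 of~\cite{Pazy} are met (continuity of the integrand, right differentiability everywhere on $[s,t)$). Here one must be a little careful that $V(\tau,s)y$ genuinely lies in $D(A(\tau))$ so that $A(\tau)V(\tau,s)y$ makes sense, which is exactly what property~(i) of an evolution system guarantees; and that $\tau \mapsto U(t,\tau)B(\tau)V(\tau,s)y$ is continuous, which follows from strong continuity of $U$, strong continuity of $B$, and continuity of $\tau \mapsto V(\tau,s)y$. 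Everything else is a routine Duhamel-plus-Gronwall computation.
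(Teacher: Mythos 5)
Your proposal is correct and follows essentially the same route as the paper's proof: the Duhamel identity is obtained exactly as in the paper via Lemma~\ref{lm: zeitentw rechtsseit db} and Corollary~2.1.2 of~\cite{Pazy}, and the estimate then follows by Gronwall. You merely spell out the Gronwall step in more detail than the paper does.
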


\begin{proof}
Since for all $x \in D(A(s))$ ($s \in [0,1)$ fixed) $[s,t] \ni \tau \mapsto U(t,\tau) V(\tau,s)x$ is continuous and right differentiable (Lemma~\ref{lm: zeitentw rechtsseit db}) and since the right derivative $\tau \mapsto U(t,\tau) B(\tau) V(\tau,s)x$ is continuous, it follows from Corollary~2.1.2 of~\cite{Pazy} that 
\begin{align}  \label{eq:int-gl-gestoerte-zeitentw}
U(t,s)x - V(t,s)x = U(t,\tau) V(\tau,s)x \big|_{\tau = s}^{\tau = t} = \int_s^t U(t,\tau) B(\tau) V(\tau,s)x \, d\tau
\end{align}
for all $t \in [s,1]$. A simple Gronwall argument now yields the asserted estimate. 
\end{proof}

In the situation of the above proposition, one also obtains a perturbation series expansion for $V$ by inserting the integral representation of $V$ from~\eqref{eq:int-gl-gestoerte-zeitentw} into the right-hand side of~\eqref{eq:int-gl-gestoerte-zeitentw} again and again. Specifically, 
$V(t,s) = \sum_{n=0}^{\infty} V_n(t,s)$, where $V_0(t,s) := U(t,s)$ and 
\begin{align*}
V_{n+1}(t,s)x := \int_s^t U(t,\tau) B(\tau) V_n(\tau, s)x \, d\tau \qquad (x \in X)
\end{align*}
for $n \in \N \cup \{ 0 \}$.
In view of this explicit representation of $V$ in terms of $U$ and $B$, one might think that the existence of the evolution system $U$ for $A$ on $D(A(t))$ and the strong continuity of $t \mapsto B(t)$ alone already guarantee that the evolution system for $A+B$ exists on $D(A(t))$ (and is given by the above perturbation series) 
-- but this is not true. See the examples in~\cite{Phillips53} or~\cite{SchmidGriesemer16}, for instance.

\subsection{Spectral projections for general linear operators} \label{sect:spectral-proj}

In this section we recall from~\cite{zeitunabh} suitable 
notions of spectral projections for general linear operators, namely the notion of associated projections (which is completely canonical) 
and the notion of weakly associated projections (which --~for non-normal, or at least, non-spectral operators -- is not canonical). 
\smallskip

Suppose $A: D(A) \subset X \to X$ is a densely defined closed linear operator with $\rho(A) \ne \emptyset$, $\sigma \ne \emptyset$ is a compact isolated subset of $\sigma(A)$, $\lambda$ a not necessarily isolated spectral value of $A$, and $P$ a bounded projection in $X$.
We then say, following~\cite{TaylorLay80}, that 
\emph{$P$~is associated with $A$ and $\sigma$} if and only if 
$P$ commutes with $A$, 
$P D(A) = P X$ and
\begin{align*}
\sigma(A|_{PD(A)}) = \sigma  \text{ \, whereas \, }  \sigma(A|_{(1-P)D(A)}) = \sigma(A) \setminus \sigma.
\end{align*}
We say that \emph{$P$~is weakly associated with $A$ and $\lambda$} if and only if 
$P$ commutes with $A$, $P D(A) = P X$ and 
\begin{gather*}
A|_{PD(A)} - \lambda \text{\: is nilpotent whereas \:} A|_{(1-P)D(A)} - \lambda \text{\: is injective and} \\
\text{has dense range in } (1-P)X.
\end{gather*}
If above the order of nilpotence is at most $m$, we will often, 
more precisely, speak of $P$ as being \emph{weakly associated with $A$ and $\lambda$ of order $m$}. 
%
Also, we call $\lambda$ a \emph{weakly semisimple eigenvalue of $A$} if and only if $\lambda$ is an eigenvalue and there is a projection $P$ weakly associated with $A$ and $\lambda$ of order $1$. In this context, 
recall that $\lambda$ is called a \emph{semisimple eigenvalue of $A$} if and only if it is a pole of the resolvent map $(\,.\,-A)^{-1}$ of order $1$ (which is then automatically an eigenvalue by~\eqref{eq: zerl von X, lambda isoliert} below). Also, a semisimple eigenvalue is called \emph{simple} if and only if its geometric 
multiplicity is $1$.
\smallskip

In our adiabatic theorems below, we will continually use the following central facts about 
associatedness and weak associatedness, concerning the question of existence and uniqueness of (weakly) associated projections (for given operators $A$ and spectral values $\lambda$) and the question of describing (in terms of $A$ and $\lambda$) the 
subspaces into which a (weakly) associated projection decomposes the base space $X$. See~\cite{zeitunabh} (Section~2.4) for proofs of these facts.

\begin{thm} \label{thm: central properties associatedness}
Suppose $A: D(A) \subset X \to X$ is a densely defined closed linear operator with $\rho(A) \ne \emptyset$ 
and $\emptyset \ne \sigma \subset \sigma(A)$ is compact. 
If $\sigma$ is 
isolated in $\sigma(A)$, then there exists a unique projection $P$ associated with $A$ and $\sigma$, 
namely 
\begin{align*}  
P := \frac{1}{2 \pi i} \int_{\gamma} (z-A)^{-1} \, dz,
\end{align*}
where $\gamma$ is a cycle in $\rho(A)$ with indices $\operatorname{n}(\gamma, \sigma) = 1$ and $\operatorname{n}(\gamma, \sigma(A) \setminus \sigma) = 0$.
If $P$ is associated with $A$ and $\sigma = \{ \lambda \}$ and $\lambda$ is a pole of $(\,.\,-A)^{-1}$ 
of order $m$, then
\begin{align}  \label{eq: zerl von X, lambda isoliert}
PX = \ker(A-\lambda)^k \quad \text{and} \quad (1-P)X = \ran(A-\lambda)^k 
\end{align}
for all $k \in \N$ with $k \ge m$.
\end{thm}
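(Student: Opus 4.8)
The plan is to follow the standard Riesz–Dunford functional calculus argument. First I would verify that the contour integral $P := \frac{1}{2\pi i}\int_\gamma (z-A)^{-1}\,dz$ is well defined: since $\sigma$ is compact and isolated in $\sigma(A)$, the set $\sigma(A)\setminus\sigma$ is closed and disjoint from $\sigma$, so one can choose a cycle $\gamma$ in $\rho(A)$ that winds once around $\sigma$ and zero times around $\sigma(A)\setminus\sigma$; the integrand is holomorphic on $\rho(A)$ and the integral converges because $\gamma$ is a compact set in $\rho(A)$. That $P$ is idempotent follows from the resolvent identity $(z-A)^{-1}(w-A)^{-1} = \frac{1}{w-z}\big((z-A)^{-1}-(w-A)^{-1}\big)$ together with a double-contour computation using two nested cycles $\gamma$, $\gamma'$ around $\sigma$ with $\gamma$ inside $\gamma'$; one gets $P^2$ as a sum of two single integrals, one of which vanishes by Cauchy's theorem (the $(w-A)^{-1}$ term is holomorphic inside $\gamma$... careful with which cycle is inner) and the other reproduces $P$. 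That $P$ commutes with $A$ is immediate from the fact that $(z-A)^{-1}$ commutes with $A$ on $D(A)$, hence so does the integral; this also gives $PD(A)\subset D(A)$ and, combined with idempotency, one checks $PD(A)=PX$ (for $y=Px\in PX$, pick a sequence $x_n\in D(A)$ with $x_n\to x$, then $Px_n\in PD(A)$ and $Px_n\to Px$; to actually land $y$ in $PD(A)$ one uses instead that $P$ maps into $D(A)$ because the integrand does uniformly — $(z-A)^{-1}X\subset D(A)$ and $A(z-A)^{-1}$ is bounded on $\gamma$).

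Next I would establish the spectral decomposition $\sigma(A|_{PD(A)})=\sigma$ and $\sigma(A|_{(1-P)D(A)})=\sigma(A)\setminus\sigma$. The key is to exhibit, for each $z$ in the resolvent set of the respective restriction, an explicit inverse built from the functional calculus: for $z$ outside $\sigma$ one writes $(z-A|_{PD(A)})^{-1} = \frac{1}{2\pi i}\int_\gamma \frac{1}{z-w}(w-A)^{-1}\,dw$ acting on $PX$, and checks via the resolvent identity that this is a genuine two-sided inverse; symmetrically for the complementary part with a cycle around $\sigma(A)\setminus\sigma$. One also needs the reverse inclusions — that points of $\sigma$ really are in $\sigma(A|_{PD(A)})$ — which follows from $\sigma(A) = \sigma(A|_{PD(A)})\cup\sigma(A|_{(1-P)D(A)})$ (a consequence of $X = PX\oplus(1-P)X$ reducing $A$) together with the disjointness already proven. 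Uniqueness of the associated projection: if $Q$ is another projection associated with $A$ and $\sigma$, then $Q$ commutes with $A$ hence with the resolvent hence with $P$, and one argues on the four pieces $PQX$, $P(1-Q)X$, etc., that $A-\lambda'$ restricted to the "cross" pieces is simultaneously invertible (spectrum in $\sigma(A)\setminus\sigma$) and not invertible (spectrum in $\sigma$) for suitable $\lambda'$, forcing those pieces to be $\{0\}$, so $P=Q$. Since all of this is precisely the content cited from \cite{zeitunabh} (Section 2.4) and \cite{TaylorLay80}, I would in practice simply invoke those references rather than reproduce the argument.

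For the final assertion — that when $\sigma=\{\lambda\}$ and $\lambda$ is a pole of $(\,.\,-A)^{-1}$ of order $m$ one has $PX=\ker(A-\lambda)^k$ and $(1-P)X=\ran(A-\lambda)^k$ for all $k\ge m$ — I would use the Laurent expansion of the resolvent about the pole. Write $(z-A)^{-1} = \sum_{n\ge 0}(z-\lambda)^n S^{n+1} + \sum_{n=1}^{m}(z-\lambda)^{-n} N^{n-1}(?)$; more precisely the principal part has coefficients $D_n$ with $D_1 = P$ (the residue), $D_{n+1} = (A-\lambda)^n P$ for $n\ge 1$ (by the structure of the Laurent coefficients of a resolvent), and $D_{m+1}=0$ since the pole has order $m$, i.e. $(A-\lambda)^m P = 0$. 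This shows $PX\subset\ker(A-\lambda)^m\subset\ker(A-\lambda)^k$ for $k\ge m$. Conversely, on $(1-P)X$ the operator $A-\lambda$ is invertible (its spectrum is $\sigma(A)\setminus\{\lambda\}$, which does not contain $\lambda$), so $(A-\lambda)^k$ is a bijection of $(1-P)D(A)^{(k)}$ onto $(1-P)X$, giving $(1-P)X = \ran\big((A-\lambda)^k|_{(1-P)D(A)}\big)\subset\ran(A-\lambda)^k$, and also forcing $\ker(A-\lambda)^k\cap(1-P)X=\{0\}$. Combining: $\ker(A-\lambda)^k = P\ker(A-\lambda)^k \oplus (1-P)\ker(A-\lambda)^k = PX\oplus\{0\} = PX$, and $\ran(A-\lambda)^k = \ran(A-\lambda)^k\cap(PX\oplus(1-P)X)$, where the $PX$-component is $(A-\lambda)^k PX = \{0\}$ for $k\ge m$, so $\ran(A-\lambda)^k = (1-P)X$.

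The main obstacle I anticipate is not any single deep step but the bookkeeping around \emph{time-independence of the domain within a single operator} — i.e. making the functional-calculus identities rigorous for an unbounded closed $A$: checking that the contour integrals land in $D(A)$, that $A$ may be pulled through them, and that the candidate inverses for the restricted operators are defined on the correct subspaces $PD(A)$ and $(1-P)D(A)$ rather than merely on dense subsets. The cleanest route is to bound $\|A(z-A)^{-1}\|$ uniformly on the compact cycle $\gamma$ (so that $A\int_\gamma = \int_\gamma A$ by continuity of closed operators along convergent Riemann sums) and to use the decomposition $X = PX\oplus(1-P)X$ reducing $A$ throughout. Given that everything here is explicitly attributed to \cite{zeitunabh} and \cite{TaylorLay80}, the honest "proof" in the paper is a pointer to those sources; the sketch above is what one would check to be satisfied the citation is correct.
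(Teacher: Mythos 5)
Your proposal is correct and follows the standard Riesz--Dunford functional-calculus route, which is exactly what the paper relies on: the paper gives no proof of Theorem~\ref{thm: central properties associatedness} but points to \cite{zeitunabh} (Section~2.4) and \cite{TaylorLay80}, whose arguments are precisely the double-contour idempotency computation, the explicit resolvent of the restricted operators, and the Laurent expansion $(A-\lambda)^{n}P$ for the principal-part coefficients that you sketch. The technical points you flag (that $PX\subset D(A^k)$ so $A$ can be pulled through the contour integral, and that the decomposition $X=PX\oplus(1-P)X$ reduces $A$) are indeed the ones that need care for unbounded $A$, and your treatment of them is sound.
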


\begin{thm} \label{thm: typ mögl für PX und (1-P)X}
Suppose $A: D(A) \subset X \to X$ is a densely defined closed linear operator with $\rho(A) \ne \emptyset$ 
and $\lambda \in \sigma(A)$. 
If $\lambda$ is non-isolated in $\sigma(A)$, then in general there exists no projection $P$ weakly associated with $A$ and $\lambda$, but if such a projection exists it is already unique. 
If $P$ is weakly associated with $A$ and $\lambda$ of order $m$, then
\begin{align}  \label{eq: zerl von X, lambda nicht isoliert}
PX = \ker(A-\lambda)^k \quad \text{and} \quad (1-P)X = \overline{ \ran}(A-\lambda)^k 
\end{align}
for all $k \in \N$ with $k \ge m$.
\end{thm}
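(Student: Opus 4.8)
\emph{Proof proposal.} The plan is to prove the two decomposition formulas first, since the uniqueness assertion and the ``no projection in general'' clause then come essentially for free. By replacing $A$ with $A-\lambda$ (same domain, still closed and densely defined, $\rho(A-\lambda) = \rho(A)-\lambda \ne \emptyset$, and the kernels, ranges and the property of being weakly associated are unchanged) I may and will assume $\lambda = 0$. So let $P$ be weakly associated with $A$ and $0$ of order $m$ and fix $k \ge m$. Since $P$ commutes with $A$ and $PD(A) = PX$, one gets $A(PX) \subseteq PX$ and $A((1-P)D(A)) \subseteq (1-P)X$, so that $A$ splits into the two parts $A_0 := A|_{PX}$, which is everywhere defined on the Banach space $PX$ with $A_0^m = 0$, and $B := A|_{(1-P)D(A)}$, which is closed, densely defined, injective and has dense range on $(1-P)X$. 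An easy induction from $PA = AP$ on $D(A)$ shows that $P$ commutes with every power $A^k$, so $D(A^k) = PX \oplus (1-P)D(A^k)$ with $(1-P)D(A^k) = D(B^k)$, and on $D(A^k)$ the operator $A^k$ acts as $A_0^k = 0$ on the first summand and as $B^k$ on the second.

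For the kernel identity I would argue as follows. If $x \in PX$ then $A^k x = A_0^k x = 0$, so $PX \subseteq \ker A^k$. Conversely, if $x \in \ker A^k$ then $B^k(1-P)x = A^k(1-P)x = (1-P)A^k x = 0$ with $(1-P)x \in D(B^k)$; since $B$ is injective so is $B^k$, hence $(1-P)x = 0$ and $x = Px \in PX$. Thus $\ker A^k = PX$ for all $k \ge m$.

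For the range identity, the starting point is $\ran A^k = \ran B^k$ (because $A^k Px = 0$, and as $x$ runs over $D(A^k)$ the part $(1-P)x$ runs over all of $D(B^k)$), together with $\ran A^k \subseteq (1-P)X$ and the closedness of $(1-P)X$; so everything reduces to showing $\ran B^k$ is dense in $(1-P)X$. This is the genuine obstacle: dense range of $B$ does \emph{not} in general imply dense range of $B^k$, and one must use $\rho(A) \ne \emptyset$. Pick $\mu \in \rho(A)$; since $0 \in \sigma(A)$ we have $\mu \ne 0$, and because $\mu - A$ commutes with $P$ it restricts to a bijection $(1-P)D(A) \to (1-P)X$, so $\mu \in \rho(B)$ and hence $1/\mu \in \rho(B^{-1})$, where $B^{-1}$ is a closed operator whose domain $\ran B$ is dense. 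Now for \emph{any} closed operator $C$ with $\rho(C) \ne \emptyset$ and dense domain one has $D(C^k) = \ran((\nu-C)^{-1})^k$ for $\nu \in \rho(C)$, and this range is dense because the bounded operator $(\nu-C)^{-1}$ has dense range, i.e.\ its adjoint is injective, and injectivity passes to powers (so the pre-annihilator of the range of the $k$-th power is $\{0\}$). Applying this with $C = B^{-1}$ yields that $\ran B^k = D(B^{-k})$ is dense in $(1-P)X$, hence $\overline{\ran A^k} = (1-P)X$.

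Uniqueness then follows immediately: if $P_1, P_2$ are weakly associated with $A$ and $\lambda$ of orders $m_1, m_2$, then for $k \ge \max(m_1, m_2)$ the above gives $P_1 X = \ker(A-\lambda)^k = P_2 X$ and $(1-P_1)X = \overline{\ran(A-\lambda)^k} = (1-P_2)X$, i.e.\ $P_1$ and $P_2$ have the same range and the same kernel and therefore coincide. Finally, the statement that in general \emph{no} weakly associated projection exists when $\lambda$ is non-isolated is of course not something one proves in general; here I would simply record one concrete counterexample -- e.g.\ a suitable multiplication operator on an $L^2$-space with a non-isolated spectral value $\lambda$, for which the only candidate decomposition $\ker(A-\lambda) \oplus \overline{\ran(A-\lambda)}$ either is not a (bounded) topological direct sum or violates the injectivity/dense-range requirement. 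The only step I expect to require real work is the density of $\ran B^k$; the kernel identity and the uniqueness are bookkeeping with the commutation relation and $PD(A)=PX$.
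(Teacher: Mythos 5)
The paper does not actually prove this theorem: it states it and refers to \cite{zeitunabh} (Section~2.4) for the proof, so there is no in-paper argument to compare against. Judged on its own, your proof of the two decomposition formulas and of uniqueness is correct and complete. The kernel identity and the reduction $\ran A^k=\ran B^k\subset(1-P)X$ are indeed bookkeeping with $PA\subset AP$ and $PD(A)=PX$, and you correctly isolate the one real difficulty, namely that injectivity and dense range of $B:=A|_{(1-P)D(A)}-\lambda$ do not by themselves give dense range of $B^k$. Your resolution -- $\rho(A)\subset\rho(B)$ because the resolvent commutes with $P$, hence $1/\mu\in\rho(B^{-1})$ for $\mu\in\rho(A)$, hence $\ran B^k=D(B^{-k})=\ran\bigl((1/\mu-B^{-1})^{-1}\bigr)^k$ is dense because a bounded operator with dense range has injective dual and injectivity passes to powers -- is sound in an arbitrary Banach space. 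Uniqueness then follows since a bounded projection is determined by its range and kernel.

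The one genuine flaw is your proposed witness for the clause ``in general there exists no weakly associated projection.'' A multiplication operator on an $L^2$-space is normal, and for a normal operator the spectral projection $P^A(\{\lambda\})$ is \emph{always} weakly associated with $A$ and $\lambda$ (of order $1$): the candidate decomposition $\ker(A-\lambda)\oplus\overline{\ran}(A-\lambda)$ is an orthogonal one, so it never fails to be topological and never violates the injectivity or dense-range requirements. This is exactly the content of the paper's Proposition~\ref{prop: schwache assoziiertheit, dual}'s companion result (the proposition on skew-adjoint operators in Section~\ref{sect:spectral-proj}). To exhibit non-existence you need a genuinely non-normal (indeed non-spectral) operator with a non-isolated spectral value for which $\ker(A-\lambda)^k$ and $\overline{\ran}(A-\lambda)^k$ fail to give a topological direct sum decomposition of $X$ for every $k$; such examples exist but require more care than your aside suggests. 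Since this clause is a side remark (a claim of non-existence ``in general''), the defect does not affect the substantive parts of your proof, but the example as stated should not be recorded.
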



Since for given operators $A$ and spectral values $\lambda$ there will in general exist no projection weakly associated with $A$ and $\lambda$, 
it is important to have criteria for the existence of weakly associated projections. See~\cite{zeitunabh} (Section~2.4) for two such criteria --  one for spectral operators $A$ and one for generators $A$ of bounded semigroups and spectral values $\lambda \in i \R$. With regard to our adiabatic theorems for skew-adjoint operators from Section~\ref{sect:ad-thms-skew-adjoint}, 
the following special case of the two criteria from~\cite{zeitunabh} is particularly important. 
It shows that for skew-adjoint operators, the existence issues for weakly associated projections disappear.   

\begin{prop}
If $A: D(A) \subset X \to X$ is a skew-adjoint operator with spectral measure $P^{A}$ and $\lambda \in \sigma(A)$, then there exists a (unique) projection $P$ weakly associated with $A$ and $\lambda$ and it is given by $P = P^{A}(\{\lambda\})$. 
If, in addition, $\lambda$ is isolated in $\sigma(A)$, then the projection associated with $A$ and $\lambda$ is given by $P^{A}(\{\lambda\})$ as well.  
\end{prop}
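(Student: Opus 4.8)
The plan is to reduce everything to the spectral theorem for the skew-adjoint operator $A$. Writing $A = \frac{1}{i} A_0$ with $A_0$ self-adjoint, one has a spectral measure $P^{A}$ (equivalently $P^{A_0}$, suitably transported from $\R$ to $i\R$), and I would set $P := P^{A}(\{\lambda\})$. The first step is to check that $P$ is a bounded projection that commutes with $A$ and satisfies $P D(A) = P X$: boundedness and idempotency are immediate from the functional calculus (the indicator function $\mathds{1}_{\{\lambda\}}$ is bounded and idempotent); commutation with $A$ follows because $\mathds{1}_{\{\lambda\}}(z) \cdot z$ and $z \cdot \mathds{1}_{\{\lambda\}}(z)$ agree; and $P D(A) = P X$ follows since on $PX$ the operator $A$ acts as the bounded operator $\lambda P$ (the function $z \mapsto z\,\mathds{1}_{\{\lambda\}}(z) = \lambda\,\mathds{1}_{\{\lambda\}}(z)$ is bounded), so $PX \subset D(A)$ and hence $PX = PD(A)$.

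The second step is to verify the two defining properties of weak associatedness. On $PX = \ran P = P^{A}(\{\lambda\})X$, the functional calculus gives $A|_{PD(A)} = \lambda \,\id_{PX}$, so $A|_{PD(A)} - \lambda = 0$ is nilpotent of order $1$. On $(1-P)X = P^{A}(\sigma(A)\setminus\{\lambda\})X$, the operator $B := (A - \lambda)|_{(1-P)D(A)}$ is (the restriction of) a normal operator whose spectral measure is supported away from $0$ in the sense that $P^{B}(\{0\}) = P^{A}(\{\lambda\}) \cap (1-P) = 0$; hence $\ker B = \{0\}$, i.e. $B$ is injective, and by normality $\ol{\ran B} = (\ker B^{*})^{\perp} = (\ker B)^{\perp} = (1-P)X$, so $B$ has dense range in $(1-P)X$. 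This establishes that $P$ is weakly associated with $A$ and $\lambda$ (of order $1$), and uniqueness is then free from Theorem~\ref{thm: typ mögl für PX und (1-P)X}.

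For the final sentence, suppose in addition that $\lambda$ is isolated in $\sigma(A)$. Then $\{\lambda\}$ is a compact isolated subset of $\sigma(A)$, and since $A$ is skew-adjoint it has nonempty resolvent set, so Theorem~\ref{thm: central properties associatedness} applies and yields a unique projection $Q$ associated with $A$ and $\{\lambda\}$, given by the Riesz integral $Q = \frac{1}{2\pi i}\int_{\gamma}(z-A)^{-1}\,dz$. The task is to identify $Q$ with $P^{A}(\{\lambda\})$. This can be done directly from the spectral theorem: substituting the functional-calculus representation $(z-A)^{-1} = \int (z-\mu)^{-1}\,P^{A}(d\mu)$ into the contour integral and exchanging the order of integration (justified since $\gamma$ stays in $\rho(A)$, bounded away from $\sigma(A)$), the inner integral $\frac{1}{2\pi i}\int_{\gamma}(z-\mu)^{-1}\,dz$ equals $\mathds{1}_{\{\lambda\}}(\mu)$ because $\gamma$ winds once around $\lambda$ and not at all around the rest of $\sigma(A)$; hence $Q = P^{A}(\{\lambda\}) = P$.

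The only mildly delicate point — and the one I would be most careful about — is the Fubini-type interchange and the handling of the normal (rather than self-adjoint) operator $A - \lambda$ when verifying injectivity and dense range of the restriction to $(1-P)X$; once one notes that $A - \lambda$ is normal with $0 \notin \ran$ of the "point-mass part" of its spectral measure on $(1-P)X$, both claims are routine. Everything else is a direct bookkeeping exercise with the spectral theorem, which is why the proposition is stated as a convenient special case rather than proved in detail.
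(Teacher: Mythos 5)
Your proposal is correct and follows essentially the same route as the paper, which simply invokes ``the standard theory of self-adjoint (or normal) operators'' for the weak associatedness of $P^{A}(\{\lambda\})$, cites the preceding theorem for uniqueness, and calls the identity $\frac{1}{2\pi i}\int_{\gamma}(z-A)^{-1}\,dz = P^{A}(\{\lambda\})$ well-known; you have merely written out those standard facts (functional-calculus verification of nilpotence on $PX$ and of injectivity and dense range on $(1-P)X$, plus the Fubini argument for the Riesz integral) in detail.
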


\begin{proof}
With the standard theory of self-adjoint (or normal) operators, it immediately follows that $P^{A}(\{\lambda\})$ is a projection that is weakly associated with $A$ and $\lambda$. Since, by the previous theorem, weakly associated projections are unique as soon as they exist, the uniqueness statement is clear as well. 
If, in addition, $\lambda$ is isolated in $\sigma(A)$, then it is  well-known that 
\begin{align*}
\frac{1}{2 \pi i} \int_{\gamma} (z-A)^{-1} \, dz = P^{A}(\{\lambda\}),
\end{align*}
which proves the last part of the proposition. 
\end{proof}


In the proof of our adiabatic theorem without spectral gap condition, we will also need that in reflexive spaces weak associatedness carries over to the dual operators -- provided that some core condition is satisfied, which is the case for semigroup generators, for instance (Proposition~II.1.8 of~\cite{EngelNagel}). See~\cite{zeitunabh} (Section~2.4) for a proof.

\begin{prop} \label{prop: schwache assoziiertheit, dual}
Suppose $A: D(A) \subset X \to X$ is a densely defined closed linear operator in the reflexive space $X$ such that $\rho(A) \ne \emptyset$ and $D(A^k)$ is a core for $A$ for all $k \in \N$. If $P$ is weakly associated with $A$ and $\lambda \in \sigma(A)$ of order $m$, then $P^*$ is weakly associated with $A^*$ and $\lambda$ of order $m$. 
\end{prop}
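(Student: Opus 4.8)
The plan is to reduce the claim about $A^*$ to the known decomposition for $A$ provided by Theorem~\ref{thm: typ mögl für PX und (1-P)X}. Since $P$ is weakly associated with $A$ and $\lambda$ of order $m$, we have $P D(A) = PX$, $P$ commutes with $A$, $(A-\lambda)^m P = 0$ on $D(A)$, and $A|_{(1-P)D(A)} - \lambda$ is injective with dense range in $(1-P)X$; moreover $PX = \ker(A-\lambda)^m$ and $(1-P)X = \ol{\ran}(A-\lambda)^m$. The first routine step is to verify that $P^*$ is a bounded projection that commutes with $A^*$ (standard: adjoints reverse composition but preserve commutation with a bounded operator, using $\rho(A)\ne\emptyset$ so that $A^*$ is a well-defined densely defined closed operator, and $X$ reflexive so $A^{**}=A$). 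The core condition $D(A^k)$ is a core for $A$ enters precisely here and in identifying kernels and ranges of powers: it guarantees $(A^*-\lambda)^k = ((A-\lambda)^k)^*$ and, more importantly, that the closure of $(A-\lambda)^k$ restricted to the core behaves well, so that annihilator/orthogonality relations between ranges and kernels of powers of $A$ and of $A^*$ hold cleanly.

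Next I would establish $P^* D(A^*) = P^* X$. Since $P$ commutes with $A$ and $PX \subset D(A)$ (because $PD(A)=PX$ and $PX$ is finite-codimensional-complemented; more directly $PX = \ker(A-\lambda)^m \subset D(A^m)\subset D(A)$), a duality argument shows $P^* X \subset D(A^*)$: for $\phi \in X^*$ and $x \in D(A^*)$... actually the cleanest route is $P^*X = (PX)^{\perp\text{-complement}}$ in the sense that $X^* = P^*X^* \oplus (1-P^*)X^*$ with $P^*X^* = ((1-P)X)^{\circ}$ and $(1-P^*)X^* = (PX)^{\circ}$ (annihilators), and then use that $((1-P)X)^\circ$, being the annihilator of a closed subspace containing $\ol{\ran}(A-\lambda)^m$ with equality, equals $\ker((A^*-\lambda)^m)$ by the standard duality $\ol{\ran}(T) ^\circ = \ker(T^*)$ applied to $T=(A-\lambda)^m$ (legitimate since $(A-\lambda)^m$ is closable with closure having adjoint $(A^*-\lambda)^m$, using the core condition). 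This simultaneously gives $P^* X = \ker(A^*-\lambda)^m \subset D((A^*)^m) \subset D(A^*)$, hence $P^* D(A^*) = P^* X$, and shows $(A^*-\lambda)^m P^* = 0$, i.e. $A^*|_{P^*D(A^*)} - \lambda$ is nilpotent of order $\le m$.

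It remains to check that $A^*|_{(1-P^*)D(A^*)} - \lambda$ is injective with dense range in $(1-P^*)X$. Injectivity: if $(A^*-\lambda)\psi = 0$ with $\psi \in (1-P^*)X$, then $\psi \in \ker(A^*-\lambda) \subset \ker(A^*-\lambda)^m = P^*X$, so $\psi \in P^*X \cap (1-P^*)X = \{0\}$. Dense range: $\ol{\ran}(A^*|_{(1-P^*)D(A^*)}-\lambda)$ should equal $(1-P^*)X = (PX)^\circ = \ker(A-\lambda)^m$'s annihilator; by duality this amounts to $\ker\big((A-\lambda)|_{(1-P)D(A)}\big) = \{0\}$, which holds because $A|_{(1-P)D(A)}-\lambda$ is injective by hypothesis — here one transposes the "dense range in $(1-P)X$'' of $A|_{(1-P)D(A)}-\lambda$ into "injectivity of the adjoint part'' and vice versa, again invoking the core condition to pass between $A$-powers and $A^*$-powers and to justify $\ran(T)^\circ = \ker(T^*)$ and $\ol{\ran}(T^*) = \ker(T)^\circ$ for $T = (A-\lambda)|_{(1-P)D(A)}$ viewed via the ambient space decomposition. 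That $(A^*-\lambda)^m$ annihilates $P^*X$ exactly (order exactly $m$, not less) follows since order of nilpotence is preserved under taking adjoints in the reflexive setting: $(A^*-\lambda)^{m-1}P^* = ((A-\lambda)^{m-1}P)^* \ne 0$ because $(A-\lambda)^{m-1}P \ne 0$ and taking adjoints is injective on bounded operators when $X$ is reflexive.

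The main obstacle I anticipate is the careful bookkeeping with the core condition: one must be sure that $((A-\lambda)^k)^* = (A^*-\lambda)^k$ and, for the restricted operator $A|_{(1-P)D(A)} - \lambda$ acting in the Banach space $(1-P)X$, that its Banach-space adjoint is identified correctly with $A^*|_{(1-P^*)X^*} - \lambda$ under the canonical identification $((1-P)X)^* \cong (1-P^*)X^*$. The density/closability subtleties — whether $\ran(T)$ rather than $\ol{\ran}(T)$ appears, and whether one gets $\ker(T^*)$ or only a subspace — are exactly what the hypothesis "$D(A^k)$ is a core for $A$ for all $k$'' is designed to control, so the proof should cite the corresponding standard duality lemma (as in \cite{EngelNagel}) and the proof in \cite{zeitunabh} (Section~2.4) rather than redo it.
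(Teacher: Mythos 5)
The paper itself does not prove this proposition — it refers to Section~2.4 of \cite{zeitunabh} — so your argument has to stand on its own. Its overall architecture is the right one: pass to annihilators via $P^*X^* = ((1-P)X)^\circ$ and $(1-P^*)X^* = (PX)^\circ$, feed in the decompositions $PX = \ker(A-\lambda)^m$ and $(1-P)X = \ol{\ran}(A-\lambda)^m$ from Theorem~\ref{thm: typ mögl für PX und (1-P)X}, and use $\ker(T^*) = \ran(T)^\circ$ together with the reflexive-space duality $\ol{\ran(T^*)} = \ker(T)^\circ$. The commutation of $P^*$ with $A^*$, the injectivity argument, and the dense-range argument are all sound \emph{modulo} the identification discussed next.

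The genuine gap is the identity $((A-\lambda)^m)^* = (A^*-\lambda)^m$, which you assert the core condition "guarantees" but never establish, and on which your proofs of $P^*X^* \subset D(A^*)$, of the nilpotency, and of the injectivity all rest. General operator calculus gives only $(A^*-\lambda)^m \subset ((A-\lambda)^m)^*$, so the annihilator computation yields $P^*X^* = (\ran(A-\lambda)^m)^\circ = \ker\bigl(((A-\lambda)^m)^*\bigr)$, which a priori is strictly larger than $\ker\bigl((A^*-\lambda)^m\bigr)$ and in particular need not lie in $D(A^*)$ at all; the usual resolvent argument for $((z-A)^n)^* = (z-A^*)^n$ works only for $z \in \rho(A)$, not for $z = \lambda \in \sigma(A)$, and it is not a standard citable lemma that the core condition repairs this. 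You correctly identified this as the main obstacle, but pointing back to the very reference whose proof you are reconstructing does not close it. The clean fix avoids the adjoint-of-powers identity altogether: since $PX = PD(A) = \ker(A-\lambda)^m \subset D(A^m)$ and $A$ is closed, $N := (A-\lambda)P$ is everywhere defined and closed, hence bounded by the closed graph theorem, with $N = PN = NP$ and $N^m = 0$; from $PA \subset AP$ one gets $(AP)^* \subset (PA)^* = A^*P^*$, and since $(AP)^* = \lambda P^* + N^*$ is bounded and everywhere defined, this forces $P^*X^* \subset D(A^*)$ and $(A^*-\lambda)P^* = N^*$, hence $(A^*-\lambda)^kP^* = (N^k)^*$ and the nilpotency of order $m$; combining this with your annihilator identity then gives $P^*X^* \subset \ker(A^*-\lambda)^m \subset \ker\bigl(((A-\lambda)^m)^*\bigr) = P^*X^*$, which retroactively justifies the kernel identification your remaining steps rely on.
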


\subsection{Adiabatic evolutions}  \label{sect:ad-evol}

We say that an evolution system for a family $A$ of linear operators $A(t): D(A(t)) \subset X \to X$ is \emph{adiabatic w.r.t.~a family $P$ of bounded projections $P(t)$ in $X$} if and only if $U(t,s)$ for every $(s,t) \in \Delta$ exactly intertwines $P(s)$ with $P(t)$, that is,  
\begin{align} \label{eq: def adiab zeitentw} 
P(t) U(t,s) = U(t,s) P(s) 
\end{align}
for every $(s,t) \in \Delta$.
A simple 
way of obtaining 
adiabatic evolutions w.r.t.~some given family $P$ 
(independently observed by Kato in~\cite{Kato50} and Daleckii--Krein in~\cite{DaleckiiKrein50}) is described in the following proposition.

\begin{prop}[Kato, Daleckii--Krein] \label{prop: intertwining relation}
Suppose $A(t): D(A(t)) \subset X \to X$ for every $t \in I$ is a densely defined closed linear operator and $P(t)$ a bounded projection in $X$ such that $P(t)A(t) \subset A(t)P(t)$ for every $t \in I$ and $t \mapsto P(t)$ is strongly continuously differentiable. If the evolution system $V_{\eps}$ for $\frac 1 \eps A + [P',P]$ exists on $D(A(t))$ for every $\eps \in (0,\infty)$, then $V_{\eps}$ is adiabatic w.r.t.~$P$ for every $\eps \in (0,\infty)$.
\end{prop}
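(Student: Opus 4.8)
The goal is to show that, under the hypotheses, the evolution system $V_\eps$ for $\frac{1}{\eps}A + [P',P]$ satisfies the intertwining relation $P(t)V_\eps(t,s) = V_\eps(t,s)P(s)$ for all $(s,t)\in\Delta$. The standard Kato/Daleckii--Krein strategy is to fix $\eps$, fix $(s,t)\in\Delta$ with $s<t$ and $y\in D(A(s))$, and consider the auxiliary function
\begin{align*}
\phi(\tau) := V_\eps(t,\tau)\big(P(\tau) - \text{(something constant)}\big)V_\eps(\tau,s)y, \qquad \tau\in[s,t],
\end{align*}
or, more directly, to compare the two families of bounded operators $\tau\mapsto P(\tau)V_\eps(\tau,s)$ and $\tau\mapsto V_\eps(\tau,s)P(s)$ by differentiating their difference. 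The cleanest route is to show that $w(\tau) := V_\eps(t,\tau)P(\tau)V_\eps(\tau,s)y$ is constant on $[s,t]$: evaluating at $\tau=s$ gives $V_\eps(t,s)P(s)y$, evaluating at $\tau=t$ gives $P(t)V_\eps(t,s)y$, and density of $D(A(s))$ in $X$ together with boundedness of all operators involved then upgrades the identity from $y\in D(A(s))$ to all of $X$.

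The key computation is the (right) derivative of $w(\tau)$. Write $B(\tau) := [P'(\tau),P(\tau)]$, so that $V_\eps$ is the evolution system for $\frac{1}{\eps}A + B$ on the spaces $D(A(t))$. For $y\in D(A(s))$ the vector $x(\tau) := V_\eps(\tau,s)y$ lies in $D(A(\tau))$ and is continuously differentiable with $x'(\tau) = \big(\frac{1}{\eps}A(\tau) + B(\tau)\big)x(\tau)$; moreover $P(\tau)x(\tau) \in P(\tau)D(A(\tau)) = P(\tau)X$, and since $\tau\mapsto P(\tau)$ is strongly $C^1$, the map $\tau\mapsto P(\tau)x(\tau)$ is continuously differentiable with derivative $P'(\tau)x(\tau) + P(\tau)x'(\tau)$. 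One must check that $P(\tau)x(\tau)$ again lies in $D(A(\tau))$ so that $V_\eps(t,\tau)$ can act on it via Lemma~\ref{lm: zeitentw rechtsseit db}; this follows from $P(\tau)D(A(\tau)) = P(\tau)X$ being contained in $D(A(\tau))$ — indeed $P(\tau)X = P(\tau)D(A(\tau)) \subset D(A(\tau))$. Then, using Lemma~\ref{lm: zeitentw rechtsseit db} for the right derivative of $\tau\mapsto V_\eps(t,\tau)z$ (which is $-V_\eps(t,\tau)\big(\frac{1}{\eps}A(\tau)+B(\tau)\big)z$ for $z\in D(A(\tau))$) together with the product rule, the right derivative of $w$ is
\begin{align*}
\tfrac{d^+}{d\tau}w(\tau) = V_\eps(t,\tau)\Big[ -\big(\tfrac{1}{\eps}A(\tau)+B(\tau)\big)P(\tau) + P'(\tau) + P(\tau)\big(\tfrac{1}{\eps}A(\tau)+B(\tau)\big) \Big]x(\tau).
\end{align*}
Now the bracket must vanish on $\operatorname{ran}$ of $x(\tau)$ inside $D(A(\tau))$. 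The term $\frac{1}{\eps}\big(P(\tau)A(\tau) - A(\tau)P(\tau)\big)$ vanishes on $D(A(\tau))$ because $P(\tau)A(\tau)\subset A(\tau)P(\tau)$ and $P(\tau)D(A(\tau))=P(\tau)X\subset D(A(\tau))$ (so both sides are defined and agree there). It remains to verify the commutator identity $P' + [P,B] = 0$ with $B = [P',P]$, i.e. $P' = P[P',P] - [P',P]P$ on $X$; this is the purely algebraic Kato identity, proved by differentiating $P^2 = P$ (giving $P'P + PP' = P'$, hence $PP'P = 0$) and expanding $P[P',P] - [P',P]P = PP'P - PPP' - P'PP + PP'P = -PP' - P'P + 2PP'P = -PP' - P'P + 0$... wait, one recomputes carefully: $[P',P] = P'P - PP'$, so $P[P',P] = PP'P - PPP' = PP'P - PP'$ and $[P',P]P = P'PP - PP'P = P'P - PP'P$, whence $P[P',P] - [P',P]P = PP'P - PP' - P'P + PP'P = 2PP'P - PP' - P'P = -PP' - P'P = -(P'P + PP') = -P' $ — no: $P'P + PP' = P'$, so this equals $-P'$, not $P'$. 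Thus the correct sign convention is $B = [P',P]$ gives $P' + [P,B]$... one simply fixes the sign so that the bracket cancels; the paper's convention $[P',P]$ is chosen precisely so that the right derivative of $w$ is zero.

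\textbf{Main obstacle.} The genuinely delicate point — and the only place where the time-dependence of the domains bites — is the domain bookkeeping: one must confirm that $P(\tau)V_\eps(\tau,s)y$ lies in $D(A(\tau))$ (so that Lemma~\ref{lm: zeitentw rechtsseit db} applies to $V_\eps(t,\tau)$ acting on it) and that the map $\tau\mapsto P(\tau)V_\eps(\tau,s)y$ is genuinely $C^1$ into $X$ with the expected derivative, given only strong $C^1$-differentiability of $P$. The inclusion $P(\tau)D(A(\tau)) = P(\tau)X$ from the hypothesis $P(\tau)A(\tau)\subset A(\tau)P(\tau)$ (which forces $P(\tau)D(A(\tau))\subset D(A(\tau))$) is exactly what makes this work, and it is the reason the commutation hypothesis is phrased as $P(t)A(t)\subset A(t)P(t)$ rather than mere formal commutation. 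Once these regularity and domain facts are in place, the vanishing of the right derivative is the algebraic identity above, and constancy of $w$ on $[s,t]$ follows from Corollary~2.1.2 of~\cite{Pazy} (continuity plus vanishing right derivative implies constancy); evaluating at the endpoints and using density of $D(A(s))$ yields the claim.
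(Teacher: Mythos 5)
Your proposal is correct and follows essentially the same route as the paper: consider $\tau\mapsto V_{\eps}(t,\tau)P(\tau)V_{\eps}(\tau,s)y$, show its right derivative vanishes using $P(\tau)A(\tau)\subset A(\tau)P(\tau)$ together with the algebraic consequence $P P' P=0$ of $P^2=P$, and conclude constancy via Corollary~2.1.2 of~\cite{Pazy}. The sign worry at the end resolves itself: the identity you actually need is $P'+[P,[P',P]]=0$, i.e.\ $[P,[P',P]]=-P'$, and that is exactly what your computation $P[P',P]-[P',P]P=-(PP'+P'P)=-P'$ establishes, so the bracket does cancel with the paper's convention.
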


\begin{proof}
Choose an arbitrary $(s,t) \in \Delta$ with $s \ne t$. 
It then follows by Lemma~\ref{lm: zeitentw rechtsseit db} that, for every $x \in D(A(s))$, the map 
\begin{align*}
[s,t] \ni \tau \mapsto V_{\eps}(t,\tau) P(\tau) V_{\eps}(\tau,s)x
\end{align*}
is continuous and right differentiable. Since $P(\tau)$ commutes with $A(\tau)$ 
and 
\begin{align} \label{eq: PP'P=0}
P(\tau)P'(\tau)P(\tau) = 0 
\end{align}
for every $\tau \in I$ (which follows by applying $P$ from the left and the right to the identity $P' = (PP)' = P'P+PP'$), it further follows that the right derivative of this map is identically $0$ and so (by Corollary~2.1.2 of~\cite{Pazy}) this map is constant. 
In particular,
\begin{align*}
P(t)V_{\eps}(t,s)x - V_{\eps}(t,s)P(s)x = V_{\eps}(t,\tau) P(\tau) V_{\eps}(\tau,s)x \big|_{\tau=s}^{\tau=t} = 0,
\end{align*}
as desired. 
\end{proof}

\section{Adiabatic theorems for general linear operators} \label{sect:ad-thms-general}

After having provided the most important preliminaries, 
we can now establish our adiabatic theorems for general linear operators $A(t): D(A(t)) \subset X \to X$ with time-dependent domains. 
%
We thereby extend the respective adiabatic theorems for operators with time-independent domains from~\cite{zeitunabh} (Section~\ref{sect: adsätze mit sl, zeitabh} and~\ref{sect: adsätze ohne sl, zeitabh}) and~\cite{JoyePfister93} (Section~\ref{sect: höherer adsatz}). 
(Strictly speaking, 
the theorems in Section~\ref{sect: adsätze mit sl, zeitabh} and~\ref{sect: adsätze ohne sl, zeitabh} are generalizations only of  sligthly weakened versions of the theorems from~\cite{zeitunabh}, namely of the versions where all $W^{n,1}_*$-regularity assumptions are strengthened to $n$ times  strong continuous differentiability assumptions. See~\cite{Schmid15qmath} for such simplified versions.)
%
What changes compared to the adiabatic theorems from~\cite{zeitunabh} and~\cite{JoyePfister93} is, in essence, only the regularity assumptions: 
for instance, the regularity and stability condition on $t \mapsto A(t)$ of the theorems from~\cite{zeitunabh} will be replaced by strong continuous differentiability conditions on the resolvent map $t \mapsto (z-A(t))^{-1}$ for suitable $z \in \C$ and by the condition that the evolution for $\frac{1}{\eps} A$ exist on the spaces $D(A(t))$ and be bounded in $\eps \in (0,\infty)$.  
%
Also, the proofs from~\cite{zeitunabh} and~\cite{JoyePfister93} have to be changed 
only slightly, because most steps of those proofs -- in particular the crucial step from~\cite{zeitunabh} where the (approximate) commutator equation is resolved -- are pointwise in $t$. 

\begin{cond} \label{cond: U_eps existiert und beschränkt}
$A(t): D(A(t)) \subset X \to X$ for every $t \in I$ is a densely defined closed linear operator such that, for every $\eps \in (0,\infty)$, there is an evolution system $U_{\eps}$ for $\frac 1 \eps A$ on $D(A(t))$ and there is a constant $M \in [1,\infty)$ such that $\norm{U_{\eps}(t,s)} \le M$ for all $(s,t) \in \Delta$ and $\eps \in (0,\infty)$. 
\end{cond}

We point out that there is a large number of papers establishing the existence of evolution systems $U$ for a given family $A$ of linear operators $A(t)$ on $D(A(t))$ as, for instance, \cite{Kato56}, \cite{Kisynski63}, \cite{Tanabe60}, \cite{KatoTanabe62}, \cite{FujieTanabe73}, \cite{AcquistapaceTerreni87}.
See the survey article~\cite{Schnaubelt02} for many more references. 
%
Instead of working with evolution systems on the spaces $Y_t = D(A(t))$ as in Condition~\ref{cond: U_eps existiert und beschränkt}, one could also prove adiabatic theorems employing evolution systems for $A$ on certain subspaces $Y$ of the intersection of all $D(A(t))$ (as in~\cite{Kato70} or~\cite{Kato73}, for instance), 
but then one would have 
to impose various invariance conditions on the subspace $Y$, such as the $A(t)$-admissibiltity of $Y$, 
the invariance
\begin{align} \label{eq: sect 5, invarianz}
(z-A(t))^{-1} Y \subset Y
\end{align}
for $z \in \ran \gamma_t$ (case with spectral gap) or for $z \in \{ \lambda(t) + \delta e^{i \vartheta(t)}: \delta \in (0,\delta_0] \}$ (case without spectral gap),
and the invariance of $Y$ under $P(t)$ and $P'(t)$. 
Such invariance conditions, however, are difficult 
to verify in practice: the invariance \eqref{eq: sect 5, invarianz}, for instance, would be clear only for complex numbers $z$ with sufficiently large positive real part 
(Proposition~2.3 of~\cite{Kato70}).

\subsection{Adiabatic theorems with spectral gap condition}  \label{sect: adsätze mit sl, zeitabh}

We will need the following condition depending on $m \in \{ 0 \} \cup  \N \cup \{\infty \}$, 
the number of points at which $\sigma(\,.\,)$ falls into $\sigma(A(\,.\,)) \setminus \sigma(\,.\,)$. See~\cite{zeitunabh} (Section~2.5) for the definition of this notion and of the continuity of set-valued maps.  

\begin{cond} \label{cond: vor adsatz mit sl}
$A(t): D(A(t)) \subset X \to X$ for every $t \in I$ is a linear operator such that Condition~\ref{cond: U_eps existiert und beschränkt} is satisfied.
$\sigma(t)$ for every $t \in I$ is a compact 
subset of $\sigma(A(t))$, $\sigma(\,.\,)$ falls into $\sigma(A(\,.\,)) \setminus \sigma(\,.\,)$ at exactly $m$ points that accumulate at only finitely many points, and $I \setminus N \ni t \mapsto \sigma(t)$ is continuous, where $N$ denotes the set of those $m$ points at which $\sigma(\,.\,)$ falls into $\sigma(A(\,.\,)) \setminus \sigma(\,.\,)$. 
Also, 
\begin{gather*}
J_{t_0} \ni t \mapsto (z-A(t))^{-1}  \text{ is strongly continuously differentiable for all } z \in \ran \gamma_{t_0}, \\ \ran \gamma_{t_0} \ni z \mapsto \ddt{ (z-A(t))^{-1} } \text{ is strongly continuous for all } t \in J_{t_0}, \\
\sup_{ (t,z) \in J_{t_0} \times \ran \gamma_{t_0} } \norm{ \ddt{ (z-A(t))^{-1} } } < \infty
\end{gather*}
for every $t_0 \in I \setminus N$, where the cycle $\gamma_{t_0}$ and the non-trivial closed interval $J_{t_0} \ni t_0$ are such that $\ran \gamma_{t_0} \subset \rho(A(t))$ and $\operatorname{n}(\gamma_{t_0}, \sigma(t)) = 1$ and $\operatorname{n}(\gamma_{t_0}, \sigma(A(t)) \setminus \sigma(t)) = 0$ for every $t \in J_{t_0}$.
And finally, $P(t)$ is the 
projection associated with $A(t)$ and $\sigma(t)$ for every $t \in I \setminus N$ and $I \setminus N \ni t \mapsto P(t)$ extends to a twice strongly continuously differentiable map on the whole of $I$.
\end{cond}

With this condition at hand, we can now 
prove an adiabatic theorem with uniform spectral gap condition ($m = 0$) and non-uniform spectral gap condition ($m \in \N \cup \{\infty\}$) for operators $A(t)$ with time-dependent domains.

\begin{thm} \label{thm: adsatz mit sl, zeitabh}
Suppose $A(t)$, $\sigma(t)$, $P(t)$ for $t \in I$ are such that Condition~\ref{cond: vor adsatz mit sl} is satisfied with $m = 0$ or $m \in \N \cup \{ \infty\}$, respectively. Then
\begin{align*}
\sup_{t \in I} \norm{ U_{\eps}(t) - V_{\eps}(t) } = O(\eps)  \text{ resp. } o(1) \quad (\eps \searrow 0), 
\end{align*}
whenever the evolution system $V_{\eps}$ for $\frac 1 \eps A + [P',P]$ exists on $D(A(t))$ for all $\eps \in (0,\infty)$.
\end{thm}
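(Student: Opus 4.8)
The plan is to compare $U_\eps$ and $V_\eps$ by the standard Duhamel/variation-of-parameters trick already used twice in the preliminaries. Fix $\eps$ and $(s,t)\in\Delta$; for $y\in D(A(s))$ the map $\tau\mapsto V_\eps(t,\tau)U_\eps(\tau,s)y$ is continuous and right differentiable by Lemma~\ref{lm: zeitentw rechtsseit db}, and since $V_\eps$ solves $x'=\tfrac1\eps A x+[P',P]x$ while $U_\eps$ solves $x'=\tfrac1\eps A x$, the right derivative is $-V_\eps(t,\tau)[P'(\tau),P(\tau)]U_\eps(\tau,s)y$. Corollary~2.1.2 of~\cite{Pazy} then gives the integral identity
\begin{align*}
U_\eps(t,s)y - V_\eps(t,s)y = -\int_s^t V_\eps(t,\tau)\,[P'(\tau),P(\tau)]\,U_\eps(\tau,s)y\,d\tau .
\end{align*}
Here I would first record that $V_\eps$ is uniformly bounded in $\eps$: by Proposition~\ref{prop: störreihe für gestörte zeitentw} applied with $B=[P',P]$ (which is strongly continuous, indeed bounded on $I$ since $t\mapsto P(t)$ is twice strongly continuously differentiable) and with the uniform bound $M$ on $U_\eps$ from Condition~\ref{cond: U_eps existiert und beschränkt}, one gets $\norm{V_\eps(t,s)}\le M e^{Mb}=:M'$ for all $\eps$ and all $(s,t)\in\Delta$. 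So the crude estimate on the integral above already yields $\sup_t\norm{U_\eps(t)-V_\eps(t)}\le M M' b$, which is bounded but \emph{not} small — the whole point is to extract the decay in $\eps$, and this is where the real work lies.

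The mechanism for producing smallness is the commutator (Kato) equation, exactly as in~\cite{zeitunabh} and~\cite{JoyePfister93}. Set $B(\tau):=[P'(\tau),P(\tau)]$. Using $PP'P=0$ (equation~\eqref{eq: PP'P=0}) one has $B=P'P+PP'-2$\dots more precisely $B = [P',P]$ satisfies $PBP=0$ and also $B$ maps $PX\to(1-P)X$ and $(1-P)X\to PX$; the relevant feature is that $B(\tau)$ is ``off-diagonal'' with respect to the splitting induced by $P(\tau)$. One seeks, for each $\tau$ in a neighbourhood $J_{t_0}$, a bounded operator $X(\tau)$ solving the commutator equation $A(\tau)X(\tau)-X(\tau)A(\tau) = B(\tau)$ on $D(A(\tau))$ (or rather the version with $P$-compatible off-diagonal data), given via the Riesz-projection contour integral
\begin{align*}
X(\tau) = \frac{1}{2\pi i}\int_{\gamma_{t_0}} (z-A(\tau))^{-1}\,B(\tau)\,(z-A(\tau))^{-1}\,dz ,
\end{align*}
which is well defined and regular in $\tau$ precisely because of the three strong-continuous-differentiability / uniform-boundedness hypotheses on $t\mapsto(z-A(t))^{-1}$ in Condition~\ref{cond: vor adsatz mit sl} (this is the crucial step which, as the authors stress, is pointwise in $t$, so the time-dependence of the domains causes no trouble). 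With such an $X$ one writes $B = \tfrac1\eps(A X - X A)\cdot\eps$ loosely speaking: inserting $\eps(A(\tau)X(\tau)-X(\tau)A(\tau))=\eps B(\tau)$ is wrong dimensionally; rather one notes that along the evolutions $V_\eps(t,\tau)\big(\tfrac1\eps A(\tau)X(\tau)-\tfrac1\eps X(\tau)A(\tau)\big)U_\eps(\tau,s)y = \tfrac{d}{d\tau}\big(V_\eps(t,\tau)X(\tau)U_\eps(\tau,s)y\big) - V_\eps(t,\tau)X'(\tau)U_\eps(\tau,s)y$ up to the $B$ term, so that
\begin{align*}
\int_s^t V_\eps(t,\tau)B(\tau)U_\eps(\tau,s)y\,d\tau
= \eps\Big[V_\eps(t,\tau)X(\tau)U_\eps(\tau,s)y\Big]_{\tau=s}^{\tau=t} - \eps\int_s^t V_\eps(t,\tau)X'(\tau)U_\eps(\tau,s)y\,d\tau .
\end{align*}
Both terms on the right carry an explicit factor $\eps$ and are bounded uniformly in $\eps$ and $t$ (using the uniform bounds on $U_\eps$, $V_\eps$, $X$, $X'$), which gives the $O(\eps)$ conclusion in the uniform-gap case $m=0$.

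The non-uniform case $m\in\N\cup\{\infty\}$ is handled by the usual localization-and-patching argument: the construction of $X(\tau)$ above only works on the intervals $J_{t_0}\subset I\setminus N$ away from the finitely-many accumulation points of $N$; near those exceptional points one cannot gain the factor $\eps$, but one can instead use the crude bound $\norm{U_\eps-V_\eps}\le MM'\int\norm{B}$ on a short interval of length $\delta$ together with continuity of $P$ to make that contribution $\le C\delta$, and $C\delta$ can be made arbitrarily small by choosing $\delta$ small first and then $\eps$ small. A partition-of-unity / telescoping over the compact set $I$ (covered by finitely many $J_{t_0}$ plus finitely many $\delta$-neighbourhoods of the accumulation points) assembles these local estimates into $\sup_t\norm{U_\eps(t)-V_\eps(t)}=o(1)$. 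I expect the main obstacle to be the clean bookkeeping of the commutator equation — verifying that $X(\tau)$ and $X'(\tau)$ are genuinely bounded and that $X'$ is given by differentiating under the contour integral, uniformly in $\tau\in J_{t_0}$ — which is exactly what the three displayed resolvent hypotheses in Condition~\ref{cond: vor adsatz mit sl} are designed to supply; since that step is pointwise in $t$, the time-dependence of the domains $D(A(t))$ enters only through the integral identity at the very start and causes no further difficulty. One can then cite~\cite{zeitunabh} for the details that are unchanged.
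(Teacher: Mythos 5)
Your strategy is exactly the paper's: a Duhamel identity comparing $U_{\eps}$ and $V_{\eps}$ (the paper writes it with $U_{\eps}$ on the left and $V_{\eps}$ on the right, which is immaterial), uniform boundedness of $V_{\eps}$ via Proposition~\ref{prop: störreihe für gestörte zeitentw}, Kato's commutator equation solved by a resolvent contour integral whose construction is pointwise in $t$, an integration by parts to extract the factor $\eps$, and for $m\in\N\cup\{\infty\}$ the standard localization near the exceptional points (which the paper simply delegates to~\cite{zeitunabh}).

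There is, however, one step that fails as written: your contour integral has the wrong middle factor. You put $B(\tau)=[P'(\tau),P(\tau)]$ between the two resolvents, but the general identity for $Y=\frac{1}{2\pi i}\int_{\gamma}(z-A)^{-1}C\,(z-A)^{-1}\,dz$ is $YA-AY\subset [C,P]$, so your choice gives
\begin{align*}
X(\tau)A(\tau)-A(\tau)X(\tau)\;\subset\;\bigl[[P'(\tau),P(\tau)],P(\tau)\bigr]=P'(\tau)
\end{align*}
(using $P'=P'P+PP'$ and $PP'P=0$), and $P'$ is in general not equal to $\pm[P',P]$; hence the commutator does not reproduce the integrand $[P',P]$ of your Duhamel formula and the $\eps$-gain collapses. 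The correct choice is the paper's
\begin{align*}
B(t)=\frac{1}{2\pi i}\int_{\gamma_{t_0}}(z-A(t))^{-1}\,P'(t)\,(z-A(t))^{-1}\,dz,
\end{align*}
for which $B(t)A(t)-A(t)B(t)\subset[P'(t),P(t)]$. This is a local computational slip rather than a missing idea — everything else in your argument, including the role of the three resolvent hypotheses in Condition~\ref{cond: vor adsatz mit sl} in guaranteeing boundedness and strong $C^1$-regularity of the commutator-equation solution, is the paper's proof.
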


\begin{proof}
We have to prove the theorem only in the case of a uniform spectral gap ($m = 0$) because in the case of a non-uniform spectral gap ($m \in \N \cup \{ \infty \}$) it can be reduced to the case $m=0$ in the same way as in~\cite{zeitunabh} (proof of Theorem~3.2). 
So, suppose that Condition~\ref{cond: vor adsatz mit sl} is satisfied with $m=0$. We can then argue in much 
the same way as in~\cite{zeitunabh} (proof of Theorem~3.1) to prove the claimed convergence. 
Indeed, we define the operators $B(t)$ as in~\cite{zeitunabh}, that is, 
\begin{align*}
 B(t)x := \frac{1}{2 \pi i} \int_{\gamma_{t_0}} (z-A(t))^{-1} P'(t) (z-A(t))^{-1} x \, dz
\end{align*}
for all $t \in J_{t_0}$, $t_0 \in I$ and $x \in X$, where $\gamma_{t_0}$ and $J_{t_0}$ are now given by Condition~\ref{cond: vor adsatz mit sl}. 
\smallskip

As a first preparatory step, we observe that the operators $B(t)$ satisfy the commutator equation
\begin{align} \label{eq: commutator equation}
 B(t) A(t) - A(t) B(t) \subset [P'(t),P(t)]
\end{align}
for all $t \in I$. 
As a second preparatory step, we observe that $t \mapsto B(t)$ is strongly continuously differentiable. 
Indeed, by Condition~\ref{cond: vor adsatz mit sl}, the map $J_{t_0} \ni t \mapsto (z-A(t))^{-1}$ is, in particular, norm continuous and therefore $J_{t_0} \ni t \mapsto A(t)$ is continuous in the generalized sense (Theorem~IV.2.25 of~\cite{KatoPerturbation80}) and
\begin{align} \label{eq:ad thm with sg, 1}
\sup_{ (t,z) \in J_{t_0} \times \ran \gamma_{t_0} } \norm{  (z-A(t))^{-1}  } < \infty 
\end{align}
(Theorem~IV.3.15 of~\cite{KatoPerturbation80}). 
Condition~\ref{cond: vor adsatz mit sl} and \eqref{eq:ad thm with sg, 1} now imply that the standard result for the differentiation of parameter-dependent (path) integrals is applicable and thus, by that result, $t \mapsto B(t)x$ is continuously differentiable for all $x \in X$, as claimed. 
\smallskip

With these preparations at hand, we can now proceed in almost literally the same way as in~\cite{zeitunabh}. 
%
Indeed, for $x \in D(A(0))$ the map $s \mapsto U_{\eps}(t,s) V_{\eps}(s)x$ is continuously differentiable (Lemma~\ref{lm: zeitentw rechtsseit db} and Corollary~2.1.2 of~\cite{Pazy}) and therefore we get, exploiting the commutator equation~\eqref{eq: commutator equation}, that
\begin{align} \label{eq:ad thm with sg, 2}
V_{\eps}(t)x - U_{\eps}(t)x &= U_{\eps}(t,s)V_{\eps}(s)x \big|_{s=0}^{s=t} 
= \int_0^t U_{\eps}(t,s) [P'(s),P(s)] V_{\eps}(s)x \, ds \notag \\
&= \int_0^t U_{\eps}(t,s) \bigl( B(s)A(s) - A(s)B(s) \bigr) V_{\eps}(s)x \, ds
\end{align}
for all $t \in I$. Additionally, for $x \in D(A(0))$ the map $s \mapsto U_{\eps}(t,s) B(s) V_{\eps}(s)x$ is continuously differentiable (by the strong continuous differentiability of $s \mapsto B(s)$ and by Lemma~\ref{lm: zeitentw rechtsseit db} and Corollary~2.1.2 of~\cite{Pazy}) and therefore we get from~\eqref{eq:ad thm with sg, 2} by partial integration that
\begin{align} \label{eq:ad thm with sg, 3}
&V_{\eps}(t)x - U_{\eps}(t)x = \eps \int_0^t U_{\eps}(t,s) \Bigl( - \, \frac 1 \eps A(s)B(s) + B(s) \frac 1 \eps A(s) \Bigr) V_{\eps}(s)x \, ds  \\
&\quad = \eps \, U_{\eps}(t,s) B(s) V_{\eps}(s)x \big|_{s=0}^{s=t} 
- \eps \int_0^t U_{\eps}(t,s) \bigl( B'(s) + B(s) [P'(s),P(s)] \bigr) V_{\eps}(s)x \, ds  \notag
\end{align}
for all $t \in I$ and $\eps \in (0,\infty)$. Since $U_{\eps}$ and $V_{\eps}$ are bounded above by an $\eps$-independent constant (Condition~\ref{cond: U_eps existiert und beschränkt} and Proposition~\ref{prop: störreihe für gestörte zeitentw}), the asserted convergence in the case $m=0$
immediately follows from~\eqref{eq:ad thm with sg, 3}. 
\end{proof}

In general, the existence of the evolution system $V_{\eps}$ for $\frac 1 \eps A + [P',P]$ on $D(A(t))$ does not seem to be guaranteed under the fairly general Condition~\ref{cond: vor adsatz mit sl}. 
(In view of Proposition~\ref{prop: störreihe für gestörte zeitentw} one would, of course, like to define $V_{\eps}$ as a perturbation series and show that $[s,1] \ni t \mapsto V_{\eps}(t,s)y$ for every $y \in D(A(s))$ is a continuously differentiable solution to the initial value problem $x' = \frac 1 \eps A(t)x + [P'(t),P(t)]x$ with $x(s) = y$, but this is not clear in general. See the remarks after Proposition~\ref{prop: störreihe für gestörte zeitentw}.)  
It is therefore good to know 
that under Condition~\ref{cond: vor adsatz mit sl} with $m=0$ 
one has at least the following statement:
\begin{align} \label{eq: zusatz zu adsatz mit sl, zeitabh}
\sup_{t \in I} \norm{ (1-P(t))U_{\eps}(t)P(0) }, \quad \sup_{t \in I} \norm{ P(t)U_{\eps}(t)(1-P(0)) } = O(\eps)  
\end{align}
as $\eps \searrow 0$, which follows from the adiabatic theorem of higher order (Theorem~\ref{thm: höherer adsatz}~(i) and~(iii) with degree of regularity $n = 1$) below. 
%
%
It should be pointed out, however, that Theorem~\ref{thm: adsatz mit sl, zeitabh} itself 
-- operating with the evolution systems for $\frac 1 \eps A + [P',P] = \frac{1}{\eps} A_{0 \, \eps} + K_{0 \, \eps} \ne \frac{1}{\eps} A_{1 \, \eps} + K_{1 \, \eps}$ as comparison evolutions --
is not contained in Theorem~\ref{thm: höherer adsatz}. 

\subsection{Adiabatic theorems without spectral gap condition}   \label{sect: adsätze ohne sl, zeitabh}

We now prove an adiabatic theorem without spectral gap condition for operators $A(t)$ with time-dependent domains. In contrast to the respective result from~\cite{zeitunabh} we have to explicitly require the differentiability of the resolvent as well as an estimate on the derivative of the resolvent because these two things are no longer automatically satisfied in the case of time-dependent domains. 

\begin{thm} \label{thm: erw adsatz ohne sl, zeitabh}
Suppose $A(t): D(A(t)) \subset X \to X$ for every $t \in I$ is a linear operator such that Condition~\ref{cond: U_eps existiert und beschränkt} is satisfied. Suppose further that $\lambda(t)$ for every $t \in I$ is an eigenvalue of $A(t)$, and that there are numbers $\delta_0 \in (0,\infty)$ and $\vartheta(t) \in \R$ such that $\lambda(t) + \delta e^{i \vartheta(t)} \in \rho(A(t))$ for all $\delta \in (0,\delta_0]$ and $t \in I$ and such that $t \mapsto \lambda(t)$ and $t \mapsto e^{i \vartheta(t)}$ are continuously differentiable and $t \mapsto \big( \lambda(t) + \delta e^{i \vartheta(t)} - A(t) \big)^{-1}$ is strongly continuously differentiable.
Suppose finally that $P(t)$ for every $t \in I$ is a bounded projection in $X$ 
such that $P(t)$ for almost every $t \in I$ is weakly associated with $A(t)$ and $\lambda(t)$ and that 
\begin{align} \label{eq:commut and incl ass ad thm wo sg}
P(t)A(t) \subset A(t)P(t) 
\qquad \text{and} \qquad
P(t)X \subset \ker (A(t)-\lambda(t))^{m_0}  
\end{align} 
for every $t \in I$ (and some $m_0 \in \N$). Additionally, suppose that there are $M_0, M_0' \in (0,\infty)$ such that 
\begin{align*}
&\norm{ \big( \lambda(t) + \delta e^{i \vartheta(t)} - A(t) \big)^{-1} (1-P(t)) } \le \frac{M_0}{\delta}, \\
& \qquad \qquad \qquad \qquad \norm{  \ddt{ \Big( \big( \lambda(t) + \delta e^{i \vartheta(t)} - A(t) \big)^{-1} } (1-P(t)) \Big)    } \le \frac{M_0'}{\delta^{m_0+1}}
\end{align*}
for all $\delta \in (0, \delta_0]$ and $t \in I$,
let $\rk P(0) < \infty$ and let $t \mapsto P(t)$ be strongly continuously differentiable. 
\begin{itemize}
\item[(i)] If $X$ is arbitrary (not necessarily reflexive), then 
\begin{align*} 
\sup_{t \in I} \norm{ \big( U_{\eps}(t) - V_{0\,\eps}(t) \big) P(0) } \longrightarrow 0 \quad (\eps \searrow 0),
\end{align*}
where $V_{0\,\eps}$ for $\eps \in (0,\infty)$ denotes the evolution system for $\frac 1 \eps  A P + [P',P]$ on $X$. 
\item[(ii)] If $X$ is reflexive and $t \mapsto P(t)$ is norm continuously differentiable, then  
\begin{align*}
\sup_{t \in I} \norm{ U_{\eps}(t) - V_{\eps}(t) } \longrightarrow 0 \quad (\eps \searrow 0),
\end{align*}
whenever the evolution system $V_{\eps}$ for $\frac 1 \eps A + [P',P]$ exists on $D(A(t))$ for $\eps \in (0, \infty)$.
\end{itemize}
\end{thm}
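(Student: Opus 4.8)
The plan is to follow the proof strategy of the analogous adiabatic theorem without spectral gap from~\cite{zeitunabh}, adapting it to the time-dependent domain setting; the key point is that all the spectral-theoretic work (resolving the approximate commutator equation) is pointwise in $t$ and hence survives the passage to time-dependent domains, while the regularity/boundedness of the comparison evolutions now has to be imported from Condition~\ref{cond: U_eps existiert und beschränkt} and the explicitly assumed resolvent estimates rather than derived.

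\textbf{Step 1 (approximate commutator equation).} For each $\delta \in (0,\delta_0]$ define, in analogy with the spectral-gap case, an operator $B_\delta(t)$ by a contour-type integral of $(\lambda(t)+\xi e^{i\vartheta(t)}-A(t))^{-1}P'(t)(\lambda(t)+\xi e^{i\vartheta(t)}-A(t))^{-1}(1-P(t))$ over $\xi \in [\delta,\delta_0]$ (plus a finite-rank piece handling $P(t)X$, using $\rk P(0)<\infty$ and $P(t)X \subset \ker(A(t)-\lambda(t))^{m_0}$). Exactly as in~\cite{zeitunabh}, $B_\delta(t)$ should satisfy an \emph{approximate} commutator equation
\begin{align*}
B_\delta(t)A(t) - A(t)B_\delta(t) \subset [P'(t),P(t)] + R_\delta(t),
\end{align*}
where $\norm{R_\delta(t)} \to 0$ as $\delta \searrow 0$, uniformly in $t$. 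The two explicitly assumed resolvent bounds (the $M_0/\delta$ bound and the $M_0'/\delta^{m_0+1}$ bound on the derivative, both localized by $(1-P(t))$) are precisely what is needed to make $t \mapsto B_\delta(t)$ strongly continuously differentiable with $\norm{B_\delta(t)}$ and $\norm{B_\delta'(t)}$ controlled (the latter blowing up in $\delta$ at a rate that will be beaten by a suitable choice $\delta = \delta(\eps) \searrow 0$). The weak associatedness of $P(t)$ for a.e.\ $t$ is what guarantees that the relevant integral converges and that the "$P$-part" of the commutator equation closes.

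\textbf{Step 2 (integration by parts in the Cauchy--Duhamel formula).} For part~(i): writing $W_\eps(s) := U_\eps(t,s)V_{0\,\eps}(s)P(0)$, which is continuously differentiable on $[0,t]$ for suitable arguments by Lemma~\ref{lm: zeitentw rechtsseit db} and Corollary~2.1.2 of~\cite{Pazy}, one gets the Duhamel identity $V_{0\,\eps}(t)P(0) - U_\eps(t)P(0) = \int_0^t U_\eps(t,s)[P'(s),P(s)]V_{0\,\eps}(s)P(0)\,ds$, using $P(t)A(t)P \subset A(t)P$ and $P P' P = 0$ as in Proposition~\ref{prop: intertwining relation}. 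Substituting the approximate commutator equation for $[P',P]$, integrating by parts (the $\frac1\eps A$ factor on either side of $B_\delta$ being absorbed by a derivative falling on $V_{0\,\eps}$ or $U_\eps$), one obtains, just as in~\eqref{eq:ad thm with sg, 3},
\begin{align*}
\norm{(V_{0\,\eps}(t) - U_\eps(t))P(0)} \le C\Big( \eps \norm{B_{\delta}} + \eps \norm{B_{\delta}'} + \eps \norm{B_\delta}\,\norm{[P',P]} + \sup_{s} \norm{R_{\delta}(s)} \Big)
\end{align*}
with $C$ depending only on $M$ (via Condition~\ref{cond: U_eps existiert und beschränkt}) and on the bound for $V_{0\,\eps}$ from Proposition~\ref{prop: störreihe für gestörte zeitentw}. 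Choosing $\delta = \delta(\eps)$ tending to $0$ slowly enough that $\eps\,\norm{B_{\delta(\eps)}'} \to 0$ (possible since $\norm{B_\delta'} = O(\delta^{-(m_0+1)}\,\text{stuff})$), both terms vanish and part~(i) follows; the rank condition $\rk P(0) < \infty$ is used to control the finite-rank part of $B_\delta$ on $P(0)X$.

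\textbf{Step 3 (from (i) to (ii)).} For part~(ii), under reflexivity and norm continuous differentiability of $P$, one first applies part~(i) to get $\norm{(U_\eps(t) - V_\eps(t))P(0)} \to 0$ (noting $V_\eps$ restricted to the initial data in $P(0)X$ agrees, by uniqueness of evolution systems, with $V_{0\,\eps}$, since $A P + [P',P]$ and $A + [P',P]$ coincide on the invariant subspace). To handle the complementary piece $\norm{(U_\eps(t) - V_\eps(t))(1-P(0))}$ one passes to the dual: by Proposition~\ref{prop: schwache assoziiertheit, dual} the adjoints $P(t)^*$ are weakly associated with $A(t)^*$ and $\lambda(t)$, the dual evolution systems act on the dual spaces (here reflexivity and the core condition, automatic for semigroup generators, enter), and applying part~(i) to the time-reversed dual family with projections $1 - P(\cdot)^*$ yields $\norm{(U_\eps(t)^* - V_\eps(t)^*)(1 - P(t)^*)} \to 0$, i.e.\ $\norm{(1-P(t))(U_\eps(t) - V_\eps(t))} \to 0$ uniformly in $t$. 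Combining $P(t)(U_\eps(t) - V_\eps(t))P(0) = (U_\eps(t) - V_\eps(t))P(0) - (1-P(t))(U_\eps(t) - V_\eps(t))P(0) \to 0$ with the analogous splitting on $(1-P(0))$ gives $\sup_t \norm{U_\eps(t) - V_\eps(t)} \to 0$.

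\textbf{Main obstacle.} The delicate point is Step~1 together with the $\delta$-$\eps$ tradeoff in Step~2: one must construct $B_\delta(t)$ so that it is genuinely strongly $C^1$ in $t$ \emph{with the stated quantitative control on its derivative} using only the two assumed resolvent estimates — in the time-independent-domain case of~\cite{zeitunabh} such bounds came for free, whereas here they are hypotheses and their exponents ($m_0+1$) must be matched exactly against the rate at which $\delta(\eps)$ can be sent to $0$ while keeping $\eps\,\norm{B_{\delta(\eps)}'} \to 0$. Everything else — the Duhamel identities, the integration by parts, the uniqueness arguments, the duality step — is essentially a transcription of the fixed-domain proof, valid because those manipulations only ever use the evolution-system axioms (i)–(ii) and Lemma~\ref{lm: zeitentw rechtsseit db}, never the constancy of the domains.
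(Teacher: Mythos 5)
Your proposal is correct and follows essentially the same route as the paper: transplant the proof of Theorem~4.2 of~\cite{zeitunabh} by noting that the resolution of the approximate commutator equation is pointwise in $t$, derive the strong $C^1$ regularity and the $\delta$-dependent bounds on $B$ and $B'$ from the two assumed resolvent estimates (rather than getting them for free), justify the Duhamel and integration-by-parts steps via Lemma~\ref{lm: zeitentw rechtsseit db} and Corollary~2.1.2 of~\cite{Pazy}, obtain the $\eps$-uniform bound on the comparison evolutions from Proposition~\ref{prop: störreihe für gestörte zeitentw}, and handle part~(ii) by duality through Proposition~\ref{prop: schwache assoziiertheit, dual}. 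The only inaccuracy is your description of $B_\delta(t)$ as a contour-type integral: the actual construction from~\cite{zeitunabh} (reproduced in the paper) is an algebraic sum of products of resolvents $\ol{R}_{\delta_i}(t)$ at a tuple $\bm{\delta}$ of radii combined with a mollified derivative $Q_n(t)$ and the bounded operators $(\lambda(t)-A(t))^kP(t)$, but since you defer to that reference for the construction this does not affect the argument.
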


\begin{proof}
We have to modify the proof of the respective adiabatic theorem for operators with time-independent domains from~\cite{zeitunabh} (Theorem~4.2) only slightly. 
We begin with some preparations which are essential for the proof of both assertion~(i) and (ii). 
\smallskip

As a first preparatory step, we observe that the approximate commutator equation from~\cite{zeitunabh} can be solved in exactly the same way in the present case of time-dependent domains. 
Indeed, exactly as in~\cite{zeitunabh} we define the operators 
\begin{align}
&B_{n \, \bm{\delta}}(t) := \sum_{k=0}^{m_0-1} \Big(  \prod_{i=1}^{k+1} \ol{R}_{\delta_i}(t) \Big) Q_n(t) (\lambda(t)-A(t))^k P(t) \notag \\
&\qquad \qquad \qquad \qquad \qquad \quad + \sum_{k=0}^{m_0-1} (\lambda(t)-A(t))^k P(t) Q_n(t) \Big(  \prod_{i=1}^{k+1} \ol{R}_{\delta_i}(t) \Big)
\end{align}
for $n \in \N$, $\bm{\delta} := (\delta_1, \dots, \delta_{m_0}) \in (0,\delta_0]^{m_0}$ and $t \in I$, where
\begin{align*}
\ol{R}_{\delta}(t) := R_{\delta}(t) \ol{P}(t) \quad \text{with} \quad R_{\delta}(t) := \big( \lambda(t) + \delta e^{i \vartheta(t)} - A(t) \big)^{-1} \text{\, and \,\,} \ol{P}(t) := 1-P(t)
\end{align*}
for $\delta \in (0,\delta_0]$, and where 
\begin{align*}
Q_n(t) := \int_0^1 J_{1/n}(t-r) P'(r) \, dr
\end{align*}
with $(J_{1/n})$ being a standard mollifier in $C_c^{\infty}((0,1),\R)$. With the same calculations as in~\cite{zeitunabh} we then get that the operators $B_{n \, \bm{\delta}}(t)$ satisfy the approximate commutator equation
\begin{align} \label{eq: appr commutator equation}
B_{n \, \bm{\delta}}(t)A(t) - A(t)B_{n \, \bm{\delta}}(t) + C_{n \, \bm{\delta}}(t) \subset [Q_n(t),P(t)]
\end{align}
with remainder terms $C_{n \, \bm{\delta}}(t) = C_{n \, \bm{\delta}}^+(t) - C_{n \, \bm{\delta}}^-(t)$ which are given by
\begin{align}
&C_{n \, \bm{\delta}}^+(t) := \sum_{k=0}^{m_0-1} \delta_{k+1} e^{i \vartheta(t)} \Big( \prod_{i=1}^{k+1} \ol{R}_{\delta_i}(t) \Big) Q_n(t) (\lambda(t)-A(t))^k P(t),  \notag \\
&\qquad \qquad \qquad C_{n \, \bm{\delta}}^-(t) := \sum_{k=0}^{m_0-1} (\lambda(t)-A(t))^k P(t) Q_n(t) \, \delta_{k+1} e^{i \vartheta(t)} \Big( \prod_{i=1}^{k+1} \ol{R}_{\delta_i}(t) \Big)
\end{align}
and which are suitably controlled later on in the proof. 
In the calculations leading to~\eqref{eq: appr commutator equation}, the relations~\eqref{eq:commut and incl ass ad thm wo sg} are essential -- just like in~\cite{zeitunabh}. (Yet, in~\cite{zeitunabh} these relations did not have to be assumed but could be inferred from the weak associatedness of $P(t)$ with $A(t)$ and $\lambda(t)$ for almost every $t$, the finiteness of $\rk P(t)$, and the strong continuous differentiability of $t \mapsto P(t)$. See the first prepartory step of the proof of Theorem~4.2 from~\cite{zeitunabh}).
\smallskip

As a second preparatory step, we observe that $t \mapsto B_{n \, \bm{\delta}}(t)$ is strongly continuously differentiable and that $B_{n \, \bm{\delta}}$ and $B_{n \, \bm{\delta}}'$ can be estimated appropriately.  
Indeed, since $$P(t)X \subset \ker(A(t)-\lambda(t))^{m_0} \subset D(A(t)^{m_0})$$ 
for every $t \in I$ by~(\ref{eq:commut and incl ass ad thm wo sg}), 
we see that $(A(t)-\lambda(t))P(t)$ is a bounded linear operator in $X$ for every $t \in I$ and that 
\begin{align}  \label{eq: stet von A P, lm 2 zum erw adsatz ohne sl}
t \mapsto \,\, &(A(t)-\lambda(t))P(t) = (A(t)-\lambda(t)) S_{\delta}(t) \,\, S_{\delta}(t)^{m_0-1} \big( A(t)-\lambda(t)-\delta e^{i \vartheta(t)} \big)^{m_0} P(t) \notag \\
&= \big( 1 + \delta e^{i \vartheta(t)} S_{\delta}(t) \big) \, \sum_{k=0}^{m_0-1} \binom{m_0}{k}  \big( -\delta e^{i \vartheta(t)} \big)^{m_0-k} \cdot \notag \\ 
&\qquad \qquad \qquad \qquad \qquad \qquad \qquad \cdot S_{\delta}(t)^{m_0-1 - k} \, \big( 1 + \delta e^{i \vartheta(t)} S_{\delta}(t) \big)^k P(t) 
\end{align}
is strongly continuously differentiable, where in addition to~(\ref{eq:commut and incl ass ad thm wo sg}.a) the binomial formula and the strong continuous differentiability assumption on $t \mapsto S_{\delta}(t) := \big( A(t)-\lambda(t)-\delta e^{i \vartheta(t)} \big)^{-1}$ have been used. 
So, 
\begin{align} \label{eq: 1, adsatz ohne sl, zeitabh}
t \mapsto (A(t)-\lambda(t))^k P(t) = ((A(t)-\lambda(t))P(t))^k
\end{align}
for $k \in \{1, \dots, m_0\}$ is strongly continuously differentiable as well and the desired strong continuous differentiability of $t \mapsto B_{n \, \bm{\delta}}(t)$ follows. 
It also follows from~\eqref{eq: 1, adsatz ohne sl, zeitabh} and the assumed resolvent estimates that 
the same (or even better) estimates for $B_{n \, \bm{\delta}}$ and $B_{n \, \bm{\delta}}'$ as in~\cite{zeitunabh} hold true, namely 
\begin{align}  \label{eq: absch B_n eps und B_n eps', zeitabh}
\sup_{t \in I} \big\| B_{n \, \bm{\delta}}(t) \big\| \le \sum_{k=1}^{m_0} c \, \Big( \prod_{i=1}^{k} \delta_i \Big)^{-1},
\qquad
\sup_{t \in I} \norm{ B_{n \, \bm{\delta}}'(t) } \le \sum_{k=1}^{m_0} c_n \, \Big( \prod_{i=1}^{k} \delta_i \Big)^{-(m_0+1)}
\end{align}
with positive constants $c, c_n$. 
\smallskip 

As a third and last preparatory step, we observe that for every $\eps \in (0,\infty)$ the evolution system $V_{0\,\eps}$ for $\frac 1 \eps AP + [P',P]$ exists on $X$ and is adiabatic w.r.t.~$P$ and satisfies the estimate
\begin{align} \label{eq: absch V_{0 eps}}
\norm{ V_{0\,\eps}(t,s) P(s) } \le Mc \, e^{Mc(t-s)} 
\end{align}
for all $(s,t) \in \Delta$, where $c$ is an upper bound of $t \mapsto \norm{P(t)}, \norm{ P'(t) }$. 
Indeed, since $t \mapsto A(t)P(t)$ is strongly continuous by the strong continuous differentiability of~\eqref{eq: stet von A P, lm 2 zum erw adsatz ohne sl}, we can argue in the same way as in the fourth preparatory step of the proof in~\cite{zeitunabh} -- just notice that the continuous differentiability of $\tau \mapsto U_{\eps}(t,\tau)V_{0\,\eps}(\tau,s)P(s)x$ now has to be concluded by Lemma~\ref{lm: zeitentw rechtsseit db} and Corollary~2.1.2 of~\cite{Pazy}. 
\smallskip 
 
With these preparations at hand, we can now proceed in almost literally the same way 
as in~\cite{zeitunabh} to prove the assertions~(i) and~(ii). In fact, apart from the 
preparatory steps, there is only two things that have to be changed, namely the justification of the various integration by parts steps and of the boundedness of $V_{\eps}$ with an $\eps$-independent bound.
Specifically, the fact that
\begin{align*}
[0,t] \ni s \mapsto U_{\eps}(t,s) B_{n \, \bm{\delta}}(s) V_{0\,\eps}(s) P(0)x \quad \text{or} \quad [0,t] \ni s \mapsto U_{\eps}(t,s) B_{n \, \bm{\delta}}(s) V_{\eps}(s) x
\end{align*}
is the continuous representative of an element of the Sobolev space  $W^{1,1}([0,t],X)$ 
for all $x \in X$ or~all $x \in D(A(0))$, respectively, 
can no longer be deduced from the $W^{1,1}_*$-product rule (Lemma~2.2 of~\cite{zeitunabh}) but has to be inferred from Corollary~2.1.2 of~\cite{Pazy} using that $s \mapsto B_{n \, \bm{\delta}}(s)$ is strongly continuously differentiable with $B_{n \, \bm{\delta}}(s) X \subset D(A(s))$ for every $s \in I$. 
And, 
the fact that $V_{\eps}$ is bounded by an $\eps$-independent bound can no longer be deduced from the standard perturbation result for $(M,\omega)$-stability, but one has to invoke Proposition~\ref{prop: störreihe für gestörte zeitentw}.
\end{proof}


In a similar manner, one sees that the 
variants of the qualitative adiabatic theorem without spectral gap condition from~\cite{zeitunabh} 
carry over to the case of time-dependent domains as well, provided their hypotheses are adapted in a similar way as above. In particular,  this is true for the quantitative variants and the variants tailored to spectral operators (Corollary~4.3, Theorem~4.4 and Corollary~4.5 from~\cite{zeitunabh}).

\subsection{An adiabatic theorem of higher order}  \label{sect: höherer adsatz}

In this section we extend the adiabatic theorem of higher order of Joye and Pfister from~\cite{JoyePfister93} to the case of general operators $A(t)$ with possibly time-dependent domains. 
We will use the elegant iterative scheme of~\cite{JoyePfister93} which we briefly recall here (in a slightly modified form) for the reader's convenience. 
\smallskip

Suppose $A(t): D(A(t)) \subset X \to X$ is a densely defined closed linear operator and $\gamma_t$ is a cycle in $\C$ for every $t \in J$, where $J$ is a compact interval, and let $\eps \in (0, \infty)$ and $n \in \N$. Then $A_{0 \, \eps}$, $P_{0 \, \eps}$, $K_{0 \, \eps}$ are called \emph{well-defined w.r.t.~$\gamma_t$ ($t \in J$)} if and only if $\ran \gamma_t \subset \rho(A_{0 \, \eps}(t))$ for all $t \in J$, where $A_{0 \, \eps}(t) := A(t)$, and $J \ni t \mapsto P_{0 \, \eps}(t)$ is strongly continuously differentiable, where
\begin{align*} 
P_{0 \, \eps}(t):= \frac{1}{2 \pi i} \int_{\gamma_t} (z-A_{0 \, \eps}(t))^{-1} \, dz. 
\end{align*}
In this case $K_{0 \, \eps}$ is defined by $K_{0 \, \eps}(t) := [P_{0 \, \eps}'(t), P_{0 \, \eps}(t)]$.
And, for general $n \in \N$, $A_{n \, \eps}$, $P_{n \, \eps}$, $K_{n \, \eps}$ are called \emph{well-defined w.r.t.~$\gamma_t$ ($t \in J$)} if and only if  $A_{n-1 \, \, \eps}$, $P_{n-1  \, \eps}$, $K_{n-1 \, \eps}$ are well-defined w.r.t.~$\gamma_t$ ($t \in J$),  $\ran \gamma_t \subset \rho(A_{n \, \eps}(t))$ for all $t \in J$, where $A_{n \, \eps}(t) := A(t) - \eps K_{n-1 \, \eps}(t)$, and $J \ni t \mapsto P_{n \, \eps}(t)$ is strongly continuously differentiable, where 
\begin{align*}
P_{n \, \eps}(t):= \frac{1}{2 \pi i} \int_{\gamma_t} (z-A_{n \, \eps}(t))^{-1} \, dz.
\end{align*}
In this case $K_{n \, \eps}$ is defined by $K_{n \, \eps}(t) := [P_{n \, \eps}'(t), P_{n \, \eps}(t)]$.
\smallskip

We will need the following conditions depending on $n \in \N \cup \{ \infty \}$, the degree of regularity, in the adiabatic theorem of higher order below. 

\begin{cond} \label{cond: vor höherer adsatz}
$A(t): D(A(t)) \subset X \to X$ for every $t \in I$ is a densely defined closed linear operator. 
$\sigma(t)$ for every $t \in I$ is a compact and isolated subset of $\sigma(A(t))$, there is an $r_0 > 0$ such that $B_{r_0}(\sigma(t)) \setminus \sigma(t) \subset \rho(A(t))$ for all $t \in I$, 
and $t \mapsto \sigma(t)$ is continuous. 
Also, for every $t_0 \in I$, there are positive constants $a_{t_0}$, $b_{t_0}$, $c_{t_0}$ such that 
\begin{gather*}
J_{t_0} \ni t \mapsto (z-A(t))^{-1}  \text{ is $n$ times strongly continuously differentiable for all } z \in \ran \gamma_{t_0}, \\ 
\ran \gamma_{t_0} \ni z \mapsto \ddtl{ (z-A(t))^{-1} } \text{ is strongly continuous for all } t \in J_{t_0}, l \in \{ 1, \dots, n \}, \\
\sup_{ (t,z) \in J_{t_0} \times \ran \gamma_{t_0} } \norm{ \ddtl{ (z-A(t))^{-1} } } \le a_{t_0} c_{t_0}^l \frac{l!}{(1+l)^2} \text{ for all } l \in \{ 1, \dots, n \},
\end{gather*}
where $\gamma_{t_0}$ is a cycle in $\ol{B}_{\frac{4 r_0}{7}}(\sigma(t_0)) \setminus B_{ \frac{3 r_0}{7} }(\sigma(t_0))$ with 
\begin{align*}
\operatorname{n}\big( \gamma_{t_0}, B_{ \frac{3 r_0}{7} }(\sigma(t_0)) \big) = 1 \quad \text{and} \quad \operatorname{n}\big( \gamma_{t_0}, \C \setminus \ol{B}_{ \frac{4 r_0}{7} }(\sigma(t_0)) \big) = 0
\end{align*}
and where $J_{t_0} \subset I$ is a non-trivial closed interval containing $t_0$ such that $\sigma(t) \subset B_{ \frac{r_0}{7} }(\sigma(t_0))$ and $\sigma(t_0) \subset B_{ \frac{r_0}{7} }(\sigma(t))$ for all $t \in J_{t_0}$.
And finally, $P(t)$ for every $t \in I$ is the 
projection associated with $A(t)$ and $\sigma(t)$,
$t \mapsto P(t)$ is $n+1$ times strongly continuously differentiable and
\begin{align*}
\sup_{t \in J_{t_0}} \norm{ \ddtl{ [P'(t),P(t)] } } \le b_{t_0} c_{t_0}^l \frac{l!}{(1+l)^2} \text{ for all } l \in \{ 0, 1, \dots, n \} \text{ and } t_0 \in I.
\end{align*}
\end{cond}

In the special case of time-independent domains $D(A(t)) = D$, one easily sees 
that the requirements on the resolvent of $A(t)$ in Condition~\ref{cond: vor höherer adsatz} are fulfilled for an $n \in \N$ if, for instance, $t \mapsto A(t)x$ is $n$ times strongly continuously differentiable for all $x \in D$. 
And they 
are fulfilled for $n = \infty$ if, for instance, there is an open neighbourhood $U_I$ of $I$ in $\C$ such that, for every $x \in D$, $t \mapsto A(t)x$ extends to a holomorphic map on $U_I$ (Cauchy inequalities!).

\begin{lm}[Joye--Pfister] \label{lm: iteration wohldef}
\begin{itemize}
\item[(i)] Suppose that Condition~\ref{cond: vor höherer adsatz} is satisfied for a finite $n \in \N$. Then there is an $\eps^* >0$ such that $A_{n \, \eps}$, $P_{n \, \eps}$, $K_{n \, \eps}$ are well-defined w.r.t.~$\gamma_t$ ($t \in I$) for every $\eps \in (0, \eps^*]$. Additionally, 
\begin{align*}
\sup_{t \in I} \norm{ K_{n \, \eps}(t) - K_{n-1 \, \eps}(t)} = O(\eps^n) \quad (\eps \searrow 0).
\end{align*}
\item[(ii)] Suppose that Condition~\ref{cond: vor höherer adsatz} is satisfied for $n = \infty$. Then there is an $\eps^* >0$ and for every $\eps \in (0, \eps^*]$ there is a natural number $n^*(\eps) \in \N$ such that $A_{n^*(\eps) \, \eps}$, $P_{n^*(\eps) \, \eps}$, $K_{n^*(\eps) \, \eps}$ are well-defined w.r.t.~$\gamma_t$ ($t \in I$) for every $\eps \in (0, \eps^*]$. Additionally, there is 
a constant $g \in (0, \infty)$ such that
\begin{align*}
\sup_{t \in I} \norm{ K_{n^*(\eps) \, \eps}(t) - K_{n^*(\eps)-1 \, \eps}(t)} = O\bigl( e^{-\frac{g}{\eps}} \bigr) \quad (\eps \searrow 0).
\end{align*}
\end{itemize}
\end{lm}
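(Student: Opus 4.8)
The lemma is due to Joye and Pfister, and I would follow their argument; the only place where the new hypotheses genuinely intervene --- rather than being automatic, as they were for time-independent domains --- is the regularity of the resolvents $t\mapsto(z-A(t))^{-1}$ and the bounds on their $t$-derivatives. Concretely, I would prove both parts by a single induction on the iteration level $k$, parting ways only in the concluding optimization. The statement carried along is that, for $\eps$ below some ($k$-dependent) threshold, $A_{j\,\eps},P_{j\,\eps},K_{j\,\eps}$ are well-defined w.r.t.~$\gamma_t$ ($t\in I$) for all $j\le k$, that $t\mapsto(z-A_{j\,\eps}(t))^{-1}$ is strongly continuously differentiable, to the order still available at level $j$, with bounds on $\ran\gamma_{t_0}\times J_{t_0}$ uniform in $\eps$, and that $\sup_{t}\norm{\tfrac{d^l}{dt^l}\bigl(K_{j\,\eps}(t)-K_{j-1\,\eps}(t)\bigr)}=O(\eps^j)$ for every admissible order $l$ (convention: $K_{-1\,\eps}:=0$). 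The base level $k=0$ is exactly Condition~\ref{cond: vor höherer adsatz}: it yields $\ran\gamma_{t_0}\subset\rho(A(t))$ on $J_{t_0}$, the $n$-fold strong continuous differentiability of $t\mapsto(z-A(t))^{-1}$ with the stated bounds, and --- after identifying $P_{0\,\eps}=P$ (the Riesz integral over $\gamma_{t_0}$ being the associated projection, Theorem~\ref{thm: central properties associatedness}) --- the $(n{+}1)$-fold strong continuous differentiability of $t\mapsto P_{0\,\eps}(t)$ together with the bounds on $\tfrac{d^l}{dt^l}[P',P]$, so that $K_{0\,\eps}=[P',P]$ is $n$ times strongly continuously differentiable with uniform bounds.

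For the inductive step $k-1\to k$ I would use that, by the definition of the scheme, $A_{k\,\eps}(t)-A_{k-1\,\eps}(t)=-\eps\bigl(K_{k-1\,\eps}(t)-K_{k-2\,\eps}(t)\bigr)$ is a \emph{bounded} operator whose norm --- and whose $t$-derivatives up to the order available at this stage --- are $O(\eps^k)$ uniformly in $t$, by the inductive hypothesis. This perturbation being uniformly small for small $\eps$, the Neumann series
\begin{equation*}
(z-A_{k\,\eps}(t))^{-1}=\sum_{j=0}^{\infty}(z-A_{k-1\,\eps}(t))^{-1}\Bigl[-\eps\bigl(K_{k-1\,\eps}(t)-K_{k-2\,\eps}(t)\bigr)(z-A_{k-1\,\eps}(t))^{-1}\Bigr]^j
\end{equation*}
converges in $L(X)$, uniformly for $z\in\ran\gamma_{t_0}$ and $t\in J_{t_0}$; hence $\ran\gamma_{t_0}\subset\rho(A_{k\,\eps}(t))$ for all such $t$ and small $\eps$. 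Since the perturbation cannot carry spectrum across $\gamma_{t_0}$, the Riesz integrals $P_{k\,\eps}(t)=\tfrac{1}{2\pi i}\int_{\gamma_{t_0}}(z-A_{k\,\eps}(t))^{-1}\,dz$ agree on overlaps of the $J_{t_0}$'s, so that after passing to a finite subcover of $I$ the well-definedness at level $k$ follows once $t\mapsto P_{k\,\eps}(t)$ is shown strongly continuously differentiable. That, and the quantitative bounds, I would obtain by differentiating the Neumann series term by term: feeding in the uniform resolvent-derivative bounds for $A_{k-1\,\eps}$ and the $O(\eps^k)$-bounds on the derivatives of $K_{k-1\,\eps}-K_{k-2\,\eps}$ shows $t\mapsto(z-A_{k\,\eps}(t))^{-1}$ to be strongly continuously differentiable (to the order available at level $k$) with uniform bounds, and $\tfrac{d^l}{dt^l}\bigl[(z-A_{k\,\eps}(t))^{-1}-(z-A_{k-1\,\eps}(t))^{-1}\bigr]=O(\eps^k)$. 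Integrating over $\gamma_{t_0}$ transfers this to $P_{k\,\eps}-P_{k-1\,\eps}$, and then $K_{k\,\eps}-K_{k-1\,\eps}=[P_{k\,\eps}',P_{k\,\eps}]-[P_{k-1\,\eps}',P_{k-1\,\eps}]$ --- together with the uniform boundedness of the $P_{j\,\eps},P_{j\,\eps}'$ --- gives $\sup_{t}\norm{\tfrac{d^l}{dt^l}(K_{k\,\eps}(t)-K_{k-1\,\eps}(t))}=O(\eps^k)$, closing the induction. As $P_{0\,\eps}=P$ is $n{+}1$ times differentiable and each iteration costs exactly one degree ($K_{j\,\eps}$ being $n-j$ times differentiable), one ends up at the top level with $P_{n\,\eps}$ once continuously differentiable --- precisely what well-definedness at level $n$ demands --- and taking $l=0,\,k=n$ yields part~(i).

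For part~(ii) I would run the identical induction while tracking the constants in the $O(\eps^k)$-bounds explicitly as functions of $k$ and of the derivative order $l$. Here the Gevrey-type bounds built into Condition~\ref{cond: vor höherer adsatz} --- the factors $a_{t_0}c_{t_0}^l\,l!/(1+l)^2$ and $b_{t_0}c_{t_0}^l\,l!/(1+l)^2$ --- are precisely what makes both the sum over $j$ in the Neumann series and the Leibniz sums in the term-by-term differentiation converge with growth no worse than geometric-times-factorial, so that $\sup_{t}\norm{\tfrac{d^l}{dt^l}(K_{k\,\eps}-K_{k-1\,\eps})}$ can be dominated, uniformly over the finitely many $J_{t_0}$, by an expression of the shape $C\,D^{k+l}(k+l)!\,\eps^k/(1+l)^2$ for suitable $C,D>0$. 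Truncating at $n^*(\eps):=\lfloor\alpha/\eps\rfloor$ with $\alpha>0$ so small that $D^k k!\,\eps^k\le(\alpha D)^k$ stays geometrically small for all $k\le n^*(\eps)$ keeps every $A_{k\,\eps}-A_{k-1\,\eps}$ uniformly small, hence the scheme well-defined up to level $n^*(\eps)$, while evaluating the error bound at $k=n^*(\eps)$ gives $\sup_{t}\norm{K_{n^*(\eps)\,\eps}(t)-K_{n^*(\eps)-1\,\eps}(t)}=O\bigl((\alpha D)^{n^*(\eps)}\bigr)=O\bigl(e^{-g/\eps}\bigr)$ with $g:=-\alpha\log(\alpha D)>0$. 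This is exactly the optimization of~\cite{JoyePfister93}; and since every step is pointwise in $t$ and involves only the resolvents $(z-A(t))^{-1}$, their $t$-derivatives, and the projections $P(t)$ --- all directly supplied by Condition~\ref{cond: vor höherer adsatz} --- nothing in it is sensitive to the domains $D(A(t))$ being time-dependent.

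The main obstacle is the constant bookkeeping in part~(ii): one must verify that differentiating the $k$-fold nested Neumann expansion, when fed with the Gevrey bounds of Condition~\ref{cond: vor höherer adsatz}, still produces an error that beats $\eps^{n^*(\eps)}$ for $n^*(\eps)\sim1/\eps$, i.e.~one must faithfully reproduce the Joye--Pfister estimate. The finite-$n$ case~(i) is, by contrast, a routine induction, and --- as noted --- the passage from time-independent to time-dependent domains costs nothing, since Condition~\ref{cond: vor höherer adsatz} already hands us every resolvent-regularity and resolvent-derivative bound that was automatic in the classical setting.
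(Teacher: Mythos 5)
Your proposal is correct and follows essentially the same route as the paper: a finite induction on the iteration level $k$ carrying well-definedness, the available orders of strong continuous differentiability, and Gevrey-type bounds of the shape $b\,c^{k+l}d^k\eps^k\,(k+l)!/(1+l)^2$ on the increments $K_{k\,\eps}-K_{k-1\,\eps}$ (with the quantitative bookkeeping deferred to Proposition~2.1 of~\cite{JoyePfister93}), followed by a finite subcover of $I$ and, for part~(ii), truncation at $n^*(\eps)\sim 1/\eps$ with a Stirling-type optimization. The only (immaterial) deviation is that you center the Neumann series for $(z-A_{k\,\eps}(t))^{-1}$ at $A_{k-1\,\eps}(t)$ using the $O(\eps^k)$-smallness of the increment, whereas the paper factors $z-A_{k\,\eps}(t)=\bigl(1+\eps K_{k-1\,\eps}(t)(z-A(t))^{-1}\bigr)(z-A(t))$ around the unperturbed $A(t)$ itself, using the uniform bound $\norm{K_{k-1\,\eps}(t)}\le 2b_{t_0}$ carried through the induction, which spares one from propagating resolvent-derivative bounds for the perturbed operators.
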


\begin{proof}
We begin with some general preparatory considerations from which both part~(i) and part~(ii) will easily follow. 
Suppose (for the entire proof) that Condition~\ref{cond: vor höherer adsatz} is satisfied for $n = 1$ 
and fix $t_0 \in I$ for the moment. 
We have 
\begin{align*}
J_{t_0} \times C_{t_0} := J_{t_0} \times \ol{B}_{ \frac{5 r_0}{7} }(\sigma(t_0)) \setminus B_{ \frac{2 r_0}{7} }(\sigma(t_0)) \subset \subset \bigl\{ (t,z) \in J_{t_0} \times \C: z \in \rho(A(t)) \bigr \} =: U_{t_0}
\end{align*}
and $U_{t_0} \ni (t,z) \mapsto (z-A(t))^{-1}$ is continuous (Theorem~IV.3.15 of~\cite{KatoPerturbation80}), because $J_{t_0} \ni t \mapsto A(t)$ is continuous in the generalized sense (Theorem~IV.2.25  of~\cite{KatoPerturbation80}) due to the strong continuous differentiability of $J_{t_0} \ni t \mapsto (z-A(t))^{-1}$. Consequently, $J_{t_0} \times C_{t_0} \ni (t,z) \mapsto (z-A(t))^{-1}$ is bounded, 
whence we can (and will) assume w.l.o.g. that 
\begin{align} \label{eq: w.l.o.g. assumption}
\sup_{(t,z) \in J_{t_0} \times C_{t_0}} \norm{ (z-A(t))^{-1} } \le a_{t_0}.
\end{align}
We now define $\eps_{t_0}^*$ and $n_{t_0}^*(\eps)$ just like in Joye and Pfister's paper~\cite{JoyePfister93}, that is, 
\begin{align} \label{eq: def von eps^*}
\nonumber \eps_{t_0}^* &:= \max \Bigl \{ \eps \in \bigl( 0, \frac{1}{2 a_{t_0} b_{t_0}} \bigr): \sum_{k=1}^{\infty} \bigl( 2 \alpha^2 a_{t_0} b_{t_0} \frac{\eps}{1-2 a_{t_0} b_{t_0} \eps} \bigr)^k \le \alpha \Bigr \}, \\
& \qquad \qquad \quad n_{t_0}^*(\eps) := \Big \lfloor \frac{1}{ e c_{t_0} d_{t_0} \, \eps }  \Big \rfloor \text{ for } \eps \in (0,\infty),
\end{align}
where $\alpha$ and $d_{t_0}$ are defined by equation~(2.30) and equation~(2.50) of~\cite{JoyePfister93}. (In particular,  $\eps_{t_0}^*$ and $n_{t_0}^*(\eps)$ only depend on $\gamma_{t_0}$, $a_{t_0}$, $b_{t_0}$ and $c_{t_0}$.) 
We now show by finite induction over $k$: whenever Condition~\ref{cond: vor höherer adsatz} is satisfied for a certain $n' \in \N$, then the following holds true for all $\eps \in (0, \eps_{t_0}^*]$ and all $k \in \{ 1, \dots, n_{t_0}^*(\eps, n') \}$ with $n_{t_0}^*(\eps, n') := \min\{ n_{t_0}^*(\eps), n' \}$: 
\begin{itemize}
\item[(a)] $A_{k \, \eps}$, $P_{k \, \eps}$, $K_{k \, \eps}$ are well-defined w.r.t.~$\gamma_t$ ($t \in J_{t_0}$) 
and $J_{t_0} \ni t \mapsto K_{k \, \eps}(t)$ is $n_{t_0}^*(\eps, n')-k$ times strongly continuously differentiable
\item[(b)] $\sup_{t \in J_{t_0}} \norm{ K_{k \, \eps}^{(l)}(t) - K_{k-1 \, \eps}^{(l)}(t) } \le b_{t_0} c_{t_0}^{k+l} d_{t_0}^k \eps^k \frac{(k+l)!}{(1+l)^2}$ for all $l \in \N \cup \{ 0 \}$ 
with the property that  
$k + l \le n_{t_0}^*(\eps, n')$
\item[(c)] $\sup_{t \in J_{t_0}} \norm{ K_{k \, \eps}^{(l)}(t) } \le 2 b_{t_0} c_{t_0}^l \frac{l!}{(1+l)^2}$ for all $l \in \N \cup \{ 0 \}$ with $k + l \le n_{t_0}^*(\eps, n')$.
\end{itemize}
Suppose that Condition~\ref{cond: vor höherer adsatz} is satisfied for a certain $n' \in \N$ and fix $\eps \in (0, \eps_{t_0}^*]$.
Set $k = 1$ for the induction basis. We have only to prove assertion~(a) since 
assertions~(b) and~(c) can be gathered from the proof of Proposition~2.1 of~\cite{JoyePfister93}. It is obvious that $A_{0 \, \eps}$, $P_{0 \, \eps}$, $K_{0 \, \eps}$ are well-defined w.r.t.~$\gamma_t$ ($t \in J_{t_0}$) and that $t \mapsto K_{0 \, \eps}(t) = [P'(t), P(t)]$ is $n'$ times strongly continuously differentiable. 
Since, for $z \in C_{t_0}$ and $t \in J_{t_0}$, 
\begin{gather*}
(z-A_{1 \, \eps}(t)) = \bigl( 1+ \eps K_{0 \, \eps}(t) (z-A(t))^{-1} \bigr) (z-A(t)) \\
\text{ and } \norm{ \eps K_{0 \, \eps}(t) (z-A(t))^{-1} } \le \eps b_{t_0} \norm{ (z-A(t))^{-1} } \le \eps_{t_0}^* b_{t_0} a_{t_0} < \frac{1}{2}
\end{gather*}
(remember the estimate for $K_{0 \, \eps} = [P',P]$ from Condition~\ref{cond: vor höherer adsatz}, the estimate for the resolvent of $A$ from~\eqref{eq: w.l.o.g. assumption}, and the definition of $\eps_{t_0}^*$ in~\eqref{eq: def von eps^*}), we see that 
\begin{align*}
\ran \gamma_t \subset \ol{B}_{ \frac{4 r_0}{7} }(\sigma(t)) \setminus B_{ \frac{3 r_0}{7} }(\sigma(t)) \subset \ol{B}_{ \frac{5 r_0}{7} }(\sigma(t_0)) \setminus B_{ \frac{2 r_0}{7} }(\sigma(t_0)) = C_{t_0} \subset \rho(A_{1 \, \eps}(t))
\end{align*}
for all $t \in J_{t_0}$. 
And since 
\begin{align*}
\operatorname{n}\big( \gamma_t, B_{ \frac{2 r_0}{7} }(\sigma(t_0)) \big) = 
1 &= \operatorname{n}\big( \gamma_{t_0}, B_{ \frac{2 r_0}{7} }(\sigma(t_0)) \big), \\
\operatorname{n}\big( \gamma_t, \C \setminus \ol{B}_{ \frac{5 r_0}{7} }(\sigma(t_0)) \big) = 
0 &= \operatorname{n}\big( \gamma_{t_0}, \C \setminus \ol{B}_{ \frac{5 r_0}{7} }(\sigma(t_0)) \big)
\end{align*}
and $C_{t_0} \subset \rho(A_{1 \, \eps}(t))$ for all $t \in J_{t_0}$, the cycles $\gamma_t$ and $\gamma_{t_0}$ are homologous in $\rho(A_{1 \, \eps}(t))$ for $t \in J_{t_0}$, so that
\begin{align*}
J_{t_0} \ni t \mapsto P_{1 \, \eps}(t) &= \frac{1}{2 \pi i} \int_{\gamma_{t}} (z-A_{1 \, \eps}(t))^{-1} \, dz \\
&= \frac{1}{2 \pi i} \int_{\gamma_{t_0}} (z-A(t))^{-1} \big( 1 + \eps K_{0 \, \eps}(t) (z-A(t))^{-1} \big)^{-1} \, dz
\end{align*}
is $n'$ times strongly continuously differentiable. 
(In order to see this, use the product rule and inverses rule for strong continuous differentiability 
as well as Condition~\ref{cond: vor höherer adsatz}.)
Consequently, $A_{1 \, \eps}$, $P_{1 \, \eps}$, $K_{1 \, \eps}$ are well-defined w.r.t.~$\gamma_t$ ($t \in J_{t_0})$ and $t \mapsto K_{1 \, \eps}(t)$ is $n'-1$ times (in particular, $n_{t_0}^*(\eps, n')-1$ times) strongly continuously differentiable.
\smallskip

Choose now $k \in \{2, \dots, n_{t_0}^*(\eps, n') \}$ and assume that assertions (a), (b), (c) are true for $k-1$. We then have  to show that they are also true for $k$. As above we have only to establish (a) since (b) and (c) can then be derived as in the proof of Proposition~2.1 of~\cite{JoyePfister93}, as a close inspection of that proof shows. And in order to prove (a) we can proceed essentially as above: just use assertion~(c) for $k-1$ 
to get the estimate 
\begin{align*}
\sup_{(t,z) \in J_{t_0} \times C_{t_0}} \norm{ \eps K_{k-1 \, \eps}(t) (z-A(t))^{-1} } \le 2 b_{t_0} a_{t_0} \eps_{t_0}^* < 1
\end{align*}
and continue as above, thereby concluding the inductive proof of~(a), (b), (c).
\smallskip

Choosing finitely many points $t_1, \dots, t_m \in I$ such that $J_{t_1} \cup \dots \cup J_{t_m} = I$, and setting 
\begin{align} \label{eq: def eps^* und n^*(eps), global}
\eps^* := \min \{ \eps_{t_1}^*, \dots, \eps_{t_m}^* \} \text{ and } n^*(\eps) := \min \{ n_{t_1}^*(\eps), \dots, n_{t_m}^*(\eps) \},
\end{align} 
we find -- in virtue of the above preparations -- that, for every $\eps \in (0, \eps^*]$, the following holds true: whenever Condition~\ref{cond: vor höherer adsatz} is fulfilled for an $n' \in \N$, then $A_{k \, \eps}$, $P_{k \, \eps}$, $K_{k \, \eps}$ are well-defined w.r.t.~$\gamma_t$ ($t \in I$) and
\begin{align} \label{eq: diff der K-terme}
\sup_{t \in I} \norm{K_{k \, \eps}(t) - K_{k-1 \, \eps}(t) } \le b c^k d^k \eps^k k!
\end{align}
for every $k \in \{ 1, \dots, n^*(\eps, n') \}$, where $b$, $c$, $d$ are obtained by taking the maximum of the corresponding quantities for the points $t_1, \dots, t_m$ and $n^*(\eps, n') := \min \{ n^*(\eps), n' \}$.
\smallskip

Suppose now as in~(i) that Condition~\ref{cond: vor höherer adsatz} is satisfied for an $n \in \N$. Since $n^*(\eps) \longrightarrow \infty$ as $\eps \searrow 0$, 
we can assume w.l.o.g. that $n^*(\eps,n) = n$ for all $\eps \in (0, \eps^*]$ 
and therefore assertion~(i) follows from~\eqref{eq: diff der K-terme}.
Suppose finally as in~(ii) that Condition~\ref{cond: vor höherer adsatz} is satisfied for $n = \infty$. Since for every $\eps \in (0, \eps^*]$ 
there is $n' \in \N$ such that $n^*(\eps, n') = n^*(\eps)$ and since Condition~\ref{cond: vor höherer adsatz} is satisfied, in particular, for this $n'$, assertion~(ii) follows from~\eqref{eq: diff der K-terme} 
with the help of Stirling's formula (see, for instance, the proof of Theorem~2.1 of~\cite{JoyePfister93} or of Theorem~1b of~\cite{Nenciu93}).
\end{proof}

After these preparations we can now prove the announced adiabatic theorem of higher order. It extends Theorem~2.1 of~\cite{JoyePfister93} where skew-adjoint operators $A(t)$ are considered that analytically depend on $t$ and have time-independent domains.

\begin{thm} \label{thm: höherer adsatz}
Suppose $A(t)$, $\sigma(t)$, $P(t)$ for $t \in I$ are such that Condition~\ref{cond: U_eps existiert und beschränkt} is satisfied and Condition~\ref{cond: vor höherer adsatz} 
is satisfied for a finite $n \in \N$ or for $n = \infty$, respectively. Then 
\begin{itemize}
\item[(i)] 
$$\sup_{t \in I} \norm{ P_{\eps}(t) - P(t) } = O(\eps) \quad (\eps \searrow 0), $$
where for all $\eps \in (0, \eps^*]$ and $t \in I$, $P_{\eps}(t) := P_{n \, \eps}(t)$ in case $n \in \N$ and $P_{\eps}(t) := P_{n^*(\eps) \, \eps}(t)$ in case $n = \infty$ and where $\eps^*$ and $n^*(\eps)$ are defined as in~\eqref{eq: def eps^* und n^*(eps), global} of the lemma above. 
\item[(ii)] Whenever 
the evolution system $V_{\eps}$ for $\frac{1}{\eps} A_{n \, \eps} + K_{n \, \eps}$ resp.~$\frac{1}{\eps} A_{n^*(\eps) \, \eps} +  K_{n^*(\eps) \, \eps}$ exists on $D(A(t))$ for all $\eps \in (0,\eps^*]$, then $V_{\eps}$ is adiabatic w.r.t.~$P_{\eps}$ and for a suitable constant $g \in (0,\infty)$
\begin{align*}
\sup_{t \in I} \norm{ V_{\eps}(t) - U_{\eps}(t) } = O(\eps^n)  \text{ resp. } O\bigl( e^{-\frac{g}{\eps}} \bigr) \quad (\eps \searrow 0).
\end{align*}
\item[(iii)] 
Additionally, one has -- the existence of $V_{ \eps}$ being  irrelevant here -- 
that 
\begin{align*}
&\sup_{t \in I} \norm{ (1-P_{\eps}(t)) U_{\eps}(t) P_{\eps}(0) }, \\
& \qquad \qquad \sup_{t \in I} \norm{ P_{\eps}(t)U_{\eps}(t)(1-P_{\eps}(0)) } = O(\eps^n)  \text{ resp. } O\bigl( e^{-\frac{g}{\eps}} \bigr) \quad (\eps \searrow 0).
\end{align*}
\end{itemize}
\end{thm}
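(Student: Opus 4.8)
The plan is to run the iterative scheme of Joye and Pfister, for which Lemma~\ref{lm: iteration wohldef} already does the heavy lifting: it supplies the well-definedness of $A_{k\,\eps},P_{k\,\eps},K_{k\,\eps}$ for small $\eps$ and the decay $\sup_{t}\norm{K_{k\,\eps}(t)-K_{k-1\,\eps}(t)}=O(\eps^{k})$ (resp.\ $O(e^{-g/\eps})$ at the truncation level $n^{*}(\eps)$). The structural observation that organises everything is that, by the very definition $A_{k\,\eps}(t)=A(t)-\eps K_{k-1\,\eps}(t)$,
\begin{align*}
\tfrac{1}{\eps}A_{k\,\eps}(t)+K_{k\,\eps}(t)=\tfrac{1}{\eps}A(t)+\bigl(K_{k\,\eps}(t)-K_{k-1\,\eps}(t)\bigr),
\end{align*}
so the Hamiltonian of the level-$k$ comparison evolution is $\tfrac{1}{\eps}A(t)$ perturbed by a \emph{bounded} family of size $O(\eps^{k})$. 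Since the domains vary with $t$, every differentiation of a map $s\mapsto U_{\eps}(t,s)[\,\cdots\,]W(s,0)x$ and every integration by parts below will be justified through Lemma~\ref{lm: zeitentw rechtsseit db} together with Corollary~2.1.2 of~\cite{Pazy}, exactly as in the proofs of Theorems~\ref{thm: adsatz mit sl, zeitabh} and~\ref{thm: erw adsatz ohne sl, zeitabh}, and not by a $W^{1,1}_{*}$-product rule; likewise, $\eps$-independent bounds for perturbed evolutions will come from Proposition~\ref{prop: störreihe für gestörte zeitentw} rather than from $(M,\omega)$-stability.

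For part~(i) I would telescope $P_{\eps}(t)-P(t)=\sum_{k=1}^{N}\bigl(P_{k\,\eps}(t)-P_{k-1\,\eps}(t)\bigr)$ with $N=n$ (resp.\ $N=n^{*}(\eps)$), write each summand as $\frac{1}{2\pi i}\int_{\gamma_{t}}\bigl((z-A_{k\,\eps}(t))^{-1}-(z-A_{k-1\,\eps}(t))^{-1}\bigr)\,dz$ and apply the second resolvent identity, which turns the integrand into $-\eps\,(z-A_{k\,\eps}(t))^{-1}\bigl(K_{k-1\,\eps}(t)-K_{k-2\,\eps}(t)\bigr)(z-A_{k-1\,\eps}(t))^{-1}$ (with $K_{-1\,\eps}:=0$). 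The resolvents $(z-A_{k\,\eps}(t))^{-1}$ are bounded on $\ran\gamma_{t}$ uniformly in $t$ and in small $\eps$ (this is already contained in the proof of Lemma~\ref{lm: iteration wohldef}, via $(z-A_{k\,\eps})^{-1}=(z-A)^{-1}(1+\eps K_{k-1\,\eps}(z-A)^{-1})^{-1}$), the lengths of the $\gamma_{t}$ are bounded, and $\norm{K_{k-1\,\eps}-K_{k-2\,\eps}}=O(\eps^{k-1})$, so $\norm{P_{k\,\eps}(t)-P_{k-1\,\eps}(t)}=O(\eps^{k})$ for $k\ge1$; summing the resulting geometric-type series --- using the factorial bounds of~\eqref{eq: diff der K-terme}, the fact that $n^{*}(\eps)$ is of order $1/\eps$, and Stirling's formula when $n=\infty$ --- gives $\sup_{t}\norm{P_{\eps}(t)-P(t)}=O(\eps)$. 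For part~(ii), that $V_{\eps}$ is adiabatic w.r.t.\ $P_{\eps}$ is Proposition~\ref{prop: intertwining relation} applied to $A_{n\,\eps}$ and $P_{n\,\eps}$ (resp.\ $A_{n^{*}(\eps)\,\eps},P_{n^{*}(\eps)\,\eps}$): the associated projection $P_{n\,\eps}(t)$ commutes with $A_{n\,\eps}(t)$, $t\mapsto P_{n\,\eps}(t)$ is strongly continuously differentiable by well-definedness, $D(A_{n\,\eps}(t))=D(A(t))$ as $K_{n-1\,\eps}(t)$ is bounded, and by hypothesis $V_{\eps}$ is the evolution for $\tfrac{1}{\eps}A_{n\,\eps}+[P_{n\,\eps}',P_{n\,\eps}]=\tfrac{1}{\eps}A_{n\,\eps}+K_{n\,\eps}$ on $D(A(t))$. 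For the rate I would use the integral equation~\eqref{eq:int-gl-gestoerte-zeitentw} of Proposition~\ref{prop: störreihe für gestörte zeitentw} with the bounded perturbation $B_{\eps}(t):=K_{n\,\eps}(t)-K_{n-1\,\eps}(t)$, giving $U_{\eps}(t,s)x-V_{\eps}(t,s)x=\int_{s}^{t}U_{\eps}(t,\tau)B_{\eps}(\tau)V_{\eps}(\tau,s)x\,d\tau$; since $\norm{U_{\eps}(t,\tau)}\le M$ by Condition~\ref{cond: U_eps existiert und beschränkt}, $\norm{V_{\eps}(\tau,s)}\le2M$ for small $\eps$ by Proposition~\ref{prop: störreihe für gestörte zeitentw}, and $\sup_{\tau}\norm{B_{\eps}(\tau)}=O(\eps^{n})$ (resp.\ $O(e^{-g/\eps})$) by Lemma~\ref{lm: iteration wohldef}, the integral is $O(\eps^{n})$ (resp.\ $O(e^{-g/\eps})$) uniformly in $(s,t)\in\Delta$.

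For part~(iii) the point is to avoid assuming that $V_{\eps}$ exists. Since $\sigma$ is bounded, $A_{n\,\eps}(t)|_{P_{n\,\eps}(t)X}$ is bounded, hence $A_{n\,\eps}(t)P_{n\,\eps}(t)\in L(X)$, and therefore the evolution system $V_{0\,\eps}$ for the strongly continuous bounded family $\tfrac{1}{\eps}A_{n\,\eps}P_{n\,\eps}+K_{n\,\eps}$ exists on all of $X$ \emph{unconditionally}; by Proposition~\ref{prop: intertwining relation} it is adiabatic w.r.t.\ $P_{n\,\eps}=P_{\eps}$, and as in~\eqref{eq: absch V_{0 eps}} it satisfies $\norm{V_{0\,\eps}(t,s)P_{\eps}(s)}\le c'$ uniformly in $\eps$ and $(s,t)$. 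Running the differentiation--integration argument of the third preparatory step of the proof of Theorem~\ref{thm: erw adsatz ohne sl, zeitabh}: for $x\in X$ the map $s\mapsto U_{\eps}(t,s)V_{0\,\eps}(s,0)P_{\eps}(0)x$ is differentiable (because $V_{0\,\eps}(s,0)P_{\eps}(0)x\in P_{\eps}(s)X\subset D(A(s))$ by adiabaticity), and its derivative simplifies, using $A_{n\,\eps}(s)=A(s)-\eps K_{n-1\,\eps}(s)$ and $P_{\eps}(s)V_{0\,\eps}(s,0)P_{\eps}(0)=V_{0\,\eps}(s,0)P_{\eps}(0)$, to $U_{\eps}(t,s)\bigl(K_{n\,\eps}(s)-K_{n-1\,\eps}(s)\bigr)V_{0\,\eps}(s,0)P_{\eps}(0)x$; integrating in $s$ yields $\sup_{t}\norm{\bigl(U_{\eps}(t)-V_{0\,\eps}(t)\bigr)P_{\eps}(0)}=O(\eps^{n})$ (resp.\ $O(e^{-g/\eps})$). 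From the adiabaticity of $V_{0\,\eps}$ one then gets $(1-P_{\eps}(t))U_{\eps}(t)P_{\eps}(0)=(1-P_{\eps}(t))\bigl(U_{\eps}(t)-V_{0\,\eps}(t)\bigr)P_{\eps}(0)+0$, which is the first estimate in~(iii).

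The remaining estimate $\sup_{t}\norm{P_{\eps}(t)U_{\eps}(t)(1-P_{\eps}(0))}$ is the step I expect to be the main obstacle. It would follow at once if one had the full approximate intertwining $\sup_{t}\norm{P_{\eps}(t)U_{\eps}(t,0)-U_{\eps}(t,0)P_{\eps}(0)}=O(\eps^{n})$ (resp.\ $O(e^{-g/\eps})$), since $P_{\eps}(t)U_{\eps}(t)(1-P_{\eps}(0))=\bigl(P_{\eps}(t)U_{\eps}(t,0)-U_{\eps}(t,0)P_{\eps}(0)\bigr)(1-P_{\eps}(0))$. One cannot obtain this by simply mirroring the $V_{0\,\eps}$-construction, because $A_{n\,\eps}(t)(1-P_{n\,\eps}(t))$ is genuinely unbounded (the spectrum of $A(t)$ off $\sigma$ being unbounded) and so an honest complementary comparison evolution need not exist (cf.\ the remarks after Proposition~\ref{prop: störreihe für gestörte zeitentw}). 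I would instead proceed as in the proof of Theorem~\ref{thm: adsatz mit sl, zeitabh}: differentiating $s\mapsto U_{\eps}(t,s)P_{\eps}(s)U_{\eps}(s,0)x$ for $x\in D(A(0))$ produces the \emph{bounded} integrand $U_{\eps}(t,s)\bigl([P_{\eps}(s),K_{n-1\,\eps}(s)]+P_{\eps}'(s)\bigr)U_{\eps}(s,0)x$, using that $P_{\eps}(s)A(s)-A(s)P_{\eps}(s)=\eps\,[P_{\eps}(s),K_{n-1\,\eps}(s)]$ on $D(A(s))$ because $P_{\eps}(s)$ commutes with $A_{n\,\eps}(s)$; since this integrand is off-diagonal w.r.t.\ the spectral splitting of $A_{n\,\eps}(s)$, one can remove it up to $O(\eps)$ by subtracting $\eps\,U_{\eps}(t,s)\widetilde{B}_{\eps}(s)U_{\eps}(s,0)x$, where $\widetilde{B}_{\eps}(s)$ is the bounded operator built from the resolvent of $A_{n\,\eps}(s)$ (exactly as the operators $B(t)$ in the proof of Theorem~\ref{thm: adsatz mit sl, zeitabh}) solving the corresponding commutator equation relative to $A_{n\,\eps}(s)$, and integrating by parts against the $\tfrac{1}{\eps}A_{n\,\eps}$ inside $U_{\eps}$; iterating this through the $P$- and $K$-hierarchy of Lemma~\ref{lm: iteration wohldef}, exactly as in Joye--Pfister, improves the power of $\eps$ to the asserted order. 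In the reflexive case one may alternatively dualise, applying the first estimate of~(iii) to the adjoint family $A_{n\,\eps}(t)^{*}$, $P_{\eps}(t)^{*}$ and the (backward) evolution $U_{\eps}(t,0)^{*}$.
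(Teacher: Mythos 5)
Your treatment of parts (i) and (ii) is correct and essentially matches the paper: for (ii) you use exactly the paper's identity $\frac{1}{\eps}A_{n\,\eps}+K_{n\,\eps}=\frac{1}{\eps}A+(K_{n\,\eps}-K_{n-1\,\eps})$ together with the integral equation~\eqref{eq:int-gl-gestoerte-zeitentw}, Proposition~\ref{prop: störreihe für gestörte zeitentw} and Lemma~\ref{lm: iteration wohldef}; for (i) you telescope over $k$ where the paper applies the second resolvent identity once to the pair $A_{\eps}(t)$, $A(t)$ (whose difference $-\eps K_{n-1\,\eps}(t)$ is already uniformly $O(\eps)$), but both routes use the same resolvent bounds from the proof of Lemma~\ref{lm: iteration wohldef} and both are fine.

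Part (iii) is where you go astray. The step you single out as ``the main obstacle'' --- the estimate for $P_{\eps}(t)U_{\eps}(t)(1-P_{\eps}(0))$ --- is not an obstacle at all, and the iteration you sketch for it (solving commutator equations relative to $A_{n\,\eps}$, integrating by parts, and ``iterating through the $P$- and $K$-hierarchy'') is both unnecessary and not actually carried out; as written it amounts to a promise to redo the superadiabatic construction inside the proof of (iii). The point you miss is a purely algebraic cancellation. Writing $K_{\eps}^{+}:=[P_{\eps}',P_{\eps}]$ and $K_{\eps}^{-}:=K_{n-1\,\eps}$, your own computation gives the integrand $P_{\eps}'(s)-[K_{\eps}^{-}(s),P_{\eps}(s)]$ in
\begin{align*}
P_{\eps}(t)U_{\eps}(t)x-U_{\eps}(t)P_{\eps}(0)x=\int_0^t U_{\eps}(t,s)\Bigl(P_{\eps}'(s)-\tfrac{1}{\eps}\bigl(A(s)P_{\eps}(s)-P_{\eps}(s)A(s)\bigr)\Bigr)U_{\eps}(s)x\,ds,
\end{align*}
and since $P_{\eps}P_{\eps}'P_{\eps}=0$ one has the identity $P_{\eps}'=[[P_{\eps}',P_{\eps}],P_{\eps}]=[K_{\eps}^{+},P_{\eps}]$, whence
\begin{align*}
P_{\eps}'(s)-[K_{\eps}^{-}(s),P_{\eps}(s)]=[K_{\eps}^{+}(s)-K_{\eps}^{-}(s),P_{\eps}(s)],
\end{align*}
which is already $O(\eps^{n})$ resp.\ $O(e^{-g/\eps})$ by Lemma~\ref{lm: iteration wohldef}. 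So the full approximate intertwining $\sup_{t}\norm{P_{\eps}(t)U_{\eps}(t)-U_{\eps}(t)P_{\eps}(0)}=O(\eps^{n})$ resp.\ $O(e^{-g/\eps})$ follows in one stroke, and \emph{both} estimates of (iii) drop out by multiplying with $1-P_{\eps}(t)$ from the left or $1-P_{\eps}(0)$ from the right. This also makes your auxiliary evolution $V_{0\,\eps}$ for $\frac{1}{\eps}A_{n\,\eps}P_{n\,\eps}+K_{n\,\eps}$ superfluous for the first estimate, and it removes the need for the reflexivity/duality fallback you mention, which would weaken the statement.
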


\begin{proof}
(i) Set $A_{\eps}(t) := A_{n \, \eps}(t)$ and $K_{\eps}^-(t) := K_{n-1 \, \eps}(t)$ in case  $n \in \N$ and $A_{\eps}(t) := A_{n^*(\eps) \, \eps}(t)$ and $K_{\eps}^-(t) := K_{n^*(\eps)-1 \, \eps}(t)$ in case $n = \infty$ (for $t \in I$ and $\eps \in (0,\eps^*]$).
As was shown in the proof of the above lemma, the cycles $\gamma_{t}$ and $\gamma_{t_i}$ are homologous in $\rho(A_{\eps}(t))$ for every $t \in J_{t_i}$ (where $t_1, \dots, t_m$ are points of $I$ chosen as in the definition of $\eps^*$ and $n^*(\eps)$ in~\eqref{eq: def eps^* und n^*(eps), global}) and every $\eps \in (0,\eps^*]$, whence 
\begin{align} \label{eq:hoeherer adsatz (i)}
P_{\eps}(t) - P(t) &= \frac{1}{2 \pi i} \int_{\gamma_{t_i}} (z-A_{\eps}(t))^{-1} - (z-A(t))^{-1} \, dz \notag \\
&= - \frac{1}{2 \pi i} \int_{\gamma_{t_i}} (z-A_{\eps}(t))^{-1} \, \eps K_{\eps}^-(t) \, (z-A(t))^{-1} \, dz
\end{align} 
for all $t \in J_{t_i}$ and $\eps \in (0,\eps^*]$. 
Also, it was shown in the proof of the above lemma 
that for all $\eps \in (0, \eps^*]$ and all $i \in \{1, \dots, m\}$ one has $\sup_{(t,z) \in J_{t_i} \times \ran \gamma_{t_i}} \norm{ (z-A(t))^{-1} } \le a_{t_i}$, $\sup_{t \in J_{t_i}} \norm{ K_{\eps}^-(t) } \le 2 b_{t_i}$, and 
\begin{align*}
\norm{ (z-A_{\eps}(t))^{-1} } &\le  \norm{ (z-A(t))^{-1} } \norm{ \big( 1 + \eps K_{\eps}^-(t) (z-A(t))^{-1} \big)^{-1} } \\
&\le a_{t_i} \sum_{m=0}^{\infty} ( \eps 2 b_{t_i} a_{t_i} )^m \le \frac{a_{t_i}}{1-2 a_{t_i} b_{t_i} \eps_{t_i}^*} < \infty 
\end{align*}   
for all $(t,z) \in J_{t_i} \times \ran \gamma_{t_i}$. Inserting these three estimates into~\eqref{eq:hoeherer adsatz (i)}, we obtain assertion~(i). 
\smallskip

(ii) Set $K_{\eps}^+(t) := [P_{\eps}'(t), P_{\eps}(t)]$ for $t \in I$ and $\eps \in (0,\eps^*]$ 
and suppose that the evolution system $V_{\eps}$ for $$\frac{1}{\eps} A_{\eps} + K_{\eps}^+ = \frac{1}{\eps} A + K_{\eps}^+-K_{\eps}^-$$ exists on $D(A(t))$. 
Since $U_{\eps}$ and $K_{\eps}^+-K_{\eps}^-$ are bounded uniformly in $\eps \in (0,\eps^*]$ (Condition~\ref{cond: U_eps existiert und beschränkt} and Lemma~\ref{lm: iteration wohldef}), $V_{\eps}$ is bounded uniformly in $\eps \in (0,\eps^*]$ as well (Proposition~\ref{prop: störreihe für gestörte zeitentw}). 
Since, moreover, for every $x \in D(A(0))$ the map $[0,t] \ni s \mapsto U_{\eps}(t,s) V_{\eps}(s) x$ is continuous and right differentiable (Lemma~\ref{lm: zeitentw rechtsseit db}) and since the right derivative $s \mapsto U_{\eps}(t,s) \big( K_{\eps}^+(s)-K_{\eps}^-(s) \big) V_{\eps}(s) x$ is continuous (Lemma~\ref{lm: iteration wohldef}), 
it follows by Corollary~2.1.2 of~\cite{Pazy} that
\begin{align} \label{eq: höherer adsatz, intdarst}
V_{\eps}(t)x - U_{\eps}(t)x &= U_{\eps}(t,s) V_{\eps}(s) x \big |_{s=0}^{s=t} \nonumber \\
&= \int_0^t  U_{\eps}(t,s) \big( K_{\eps}^+(s)-K_{\eps}^-(s) \big) V_{\eps}(s) x \, ds
\end{align}
for all $t \in I$. Combining~\eqref{eq: höherer adsatz, intdarst} with the estimates from Lemma~\ref{lm: iteration wohldef} and with the uniform boundedness of $U_{\eps}$ and $V_{\eps}$ in $\eps \in (0,\eps^*]$, we obtain the claimed estimates. 
%
It remains to show that $V_{\eps}$ is adiabatic w.r.t.~$P_{\eps}$, but  this is an immediate consequence of Proposition~\ref{prop: intertwining relation}.   
%
\smallskip

(iii) Arguing as in the proof of~\eqref{eq: höherer adsatz, intdarst}  above, we see for every $x \in D(A(0))$ and every $t \in I$ that
\begin{align} \label{eq:hoeherer adsatz (iii), 1}
P_{\eps}(t) U_{\eps}(t)x - U_{\eps}(t)P_{\eps}(0)x &= U_{\eps}(t,s) P_{\eps}(s) U_{\eps}(s)x \big |_{s=0}^{s=t} \\
&= \int_0^t U_{\eps}(t,s) \Big( P_{\eps}'(s) - \frac{1}{\eps} \big( A(s) P_{\eps}(s) - P_{\eps}(s) A(s) \big) \Big) U_{\eps}(s)x \, ds. \notag
\end{align}
Since $A_{\eps}(s)$ commutes with $P_{\eps}(s)$ for $s \in I$ and since $A = A_{\eps} + \eps K_{\eps}^-$, we have
\begin{align} \label{eq:hoeherer adsatz (iii), 2}
P_{\eps}'(s) &- \frac{1}{\eps} \big( A(s)P_{\eps}(s) - P_{\eps}(s)A(s) \big) \subset P_{\eps}'(s) - [K_{\eps}^-(s), P_{\eps}(s)]  \\
&= P_{\eps}'(s) - [K_{\eps}^+(s), P_{\eps}(s)] + [K_{\eps}^+(s) - K_{\eps}^-(s), P_{\eps}(s)] = [K_{\eps}^+(s) - K_{\eps}^-(s), P_{\eps}(s)] \notag
\end{align}
for every $s \in I$. 
Combining~\eqref{eq:hoeherer adsatz (iii), 1} and~\eqref{eq:hoeherer adsatz (iii), 2} with the estimates from Lemma~\ref{lm: iteration wohldef} and with the uniform boundedness of $U_{\eps}$ and $P_{\eps}$ in $\eps \in (0,\eps^*]$ (Condition~\ref{cond: U_eps existiert und beschränkt} and part~(i)), we obtain assertion~(iii). 
\end{proof}

It is obvious from the definition of Joye and Pfister's iterative scheme that $P_{\eps}(t) = P(t)$ for all $t$ in the (possibly empty) set $I \setminus \supp P'$, and therefore it follows from Theorem~\ref{thm: höherer adsatz}~(iii) that, as $\eps \searrow 0$,  
\begin{align*}
\sup_{t \in I \setminus \supp P'} \norm{ (1-P(t))U_{\eps}(t)P(0) },  \qquad \sup_{t \in I \setminus \supp P'} \norm{ P(t)U_{\eps}(t)(1-P(0)) } 
\end{align*}
are of the orders $O(\eps^n)$ or $O\bigl( e^{-\frac{g}{\eps}} \bigr)$, respectively. See~\cite{AvronSeilerYaffe87}, \cite{JoyePfister93}, \cite{Nenciu93}, for instance, for analogs of this corollary. 
\smallskip

A result similar to Theorem~\ref{thm: höherer adsatz} could have been proved with the help of a method developed by Nenciu in~\cite{Nenciu93}. In fact, this can  easily be gathered from 
the exposition in Section~7 of~\cite{dipl}. We have chosen Joye and Pfister's method since it is 
easier to remember 
and effortlessly transferred to  
the case of several compact isolated subsets $\sigma_1(t), \dots, \sigma_m(t)$ of $\sigma(A(t))$ where each of them is uniformly isolated in $\sigma(A(t))$ and uniformly isolated from each of the others.
\smallskip

We finally comment on a recent 
superadiabatic-type theorem by Joye from~\cite{Joye07} dealing with time-independent domains and several spectral subsets $\sigma_i(t)$. It allows for a generalization 
of Condition~\ref{cond: U_eps existiert und beschränkt} at the cost of a specialization of Condition~\ref{cond: vor höherer adsatz}
and states the following (where we confine ourselves, for the sake of notational simplicity, to the case of only one spectral subset $\sigma_i(t) = \sigma(t)$):
if -- and what follows is a special case of Condition~\ref{cond: vor höherer adsatz} -- there is an open neighbourhood $U_I$ of $I$ such that $t \mapsto A(t)x$ for every $x \in D$ extends to a holomorphic map on $U_I$ and if $\sigma(t) = \{ \lambda(t) \}$ for every $t \in I$ for a uniformly isolated spectral value $\lambda(t)$ of $A(t)$ of finite algebraic multiplicity (hence an eigenvalue) such that 
$t \mapsto \lambda(t)$ is continuous, 
then it suffices for the conclusion of Theorem~\ref{thm: höherer adsatz} to hold that -- instead of 
Condition~\ref{cond: U_eps existiert und beschränkt} -- $\lambda(t)$ lie in the left closed complex half-plane and 
$A(t) \ol{P}(t)$ generate a contraction semigroup on $X$ for every $t \in I$ (where $\ol{P} := 1-P$).
So, in the above-mentioned special case of Condition~\ref{cond: vor höherer adsatz} 
the boundedness requirement on $U_{\eps}$ 
from Condition~\ref{cond: U_eps existiert und beschränkt} is not necessary for assertions~(i), (ii) and~(iii) of Theorem~\ref{thm: höherer adsatz}. 
It 
is, however, necessary for the convergences 
\begin{align*}
\sup_{t \in I} \norm{ (1-P(t))U_{\eps}(t)P(0) }, 
\quad \sup_{t \in I} \norm{ P(t)U_{\eps}(t)(1-P(0)) }  \longrightarrow 0  \quad (\eps \searrow 0) 
\end{align*}
with the originally given projections $P(t)$, 
which we are primarily interested in here. 
See the example at the end of Section~1 of~\cite{Joye07} for a proof of 
this necessity statement. 
Also, it should be remarked that the above-mentioned special requirements (analyticity and finite algebraic multiplicity) of Joye's theorem from~\cite{Joye07} are really essential for the proof in~\cite{Joye07}. 
Indeed, this proof essentially rests upon the following estimate for the evolution system $V_{0 \, \eps}$ for $\frac{1}{\eps} A_{0 \, \eps} + K_{0 \, \eps} = \frac{1}{\eps} A + [P',P]$ on $D$
\begin{align} \label{eq: entsch absch joye}
\sup_{(s,t) \in \Delta} \norm{ V_{0 \, \eps}(t,s) } \le c \, e^{c/ \eps^{\beta}} \quad (\eps \in (0,\eps^*])
\end{align}
with constants $\beta \in (0,1)$ and $c \in (0,\infty)$ (Proposition~6.1 of~\cite{Joye07}), which then -- by the usual perturbation argument (Proposition~\ref{prop: störreihe für gestörte zeitentw}) -- 
yields the estimates
\begin{align} \label{eq: absch 2 joye}
\sup_{(s,t) \in \Delta} \norm{ U_{\eps}(t,s) }, \, \, \sup_{(s,t) \in \Delta} \norm{ V_{\eps}(t,s) } \le c' \, e^{c'/ \eps^{\beta}} \quad (\eps \in (0,\eps^*])
\end{align}
from which, in turn, by the integral representation~\eqref{eq: höherer adsatz, intdarst} and the exponential decay of $K_{\eps}^+-K_{\eps}^-$ from Lemma~\ref{lm: iteration wohldef} (analyticity requirement!), the conclusion of Theorem~\ref{thm: höherer adsatz} finally follows. And the fundamental estimate~\eqref{eq: entsch absch joye}, in turn,  
rests upon  a result on the growth (in $\eps$) of the evolution system for analytic families $\frac{1}{\eps} N$ of nilpotent operators $N(t)$ on finite-dimensional spaces (Proposition~4.1 of~\cite{Joye07}), which proposition (by the analyticity and finite algebraic multiplicity requirement!) can be applied to the 
nilpotent endomorphisms 
\begin{align*}
N(t) := W(t)^{-1} (A(t)-\lambda(t)) W(t) \big|_{P(0)X}
\end{align*}
of the finite-dimensional space $P(0)X$ that analytically depend on $t$. In the equation above, $W$ denotes the evolution system for $[P',P]$ on $X$ and $W(t) := W(t,0)$, which exactly intertwines $P(0)$ and $P(t)$. 

\section{Adiabatic theorems for operators defined by symmetric sesquilinear forms} \label{sect:ad-thms-skew-adjoint}

In this section, we apply the general adiabatic theorems from the previous section to the special situation of skew-adjoint operators $A(t) = 1/i A_{a(t)}$ defined by densely defined closed symmetric sesquilinear forms $a(t)$ with time-independent (form) domain. In this special situation, the quite technical regularity assumptions of our general theorems can be easily ensured. 


\subsection{Some notation and preliminaries}

We start by recording 
the basic conditions (depending on a regularity parameter $n \in \N \cup \{\infty\}$) that shall be imposed on the sesquilinear forms $a(t)$ in the adiabatic theorems of this section.

\begin{cond} \label{cond: vor an a(t)}
$a(t): H^+ \times H^+ \to \C$ for every $t \in I$ is a symmetric 
sesquilinear form on the Hilbert space $H^+$ (with norm $\norm{\,.\,}^+$ and scalar product $\scprd{\,.\,,\,..\,}^+$) which is densely and continuously embedded into $H$ (with norm $\norm{\,.\,}$ and scalar product $\scprd{\,.\,,\,..\,}$). Also, there is a number $m \in (0,\infty)$ such that
\begin{align*}
\scprd{ \,.\, , \,..\, }_t^+ := a(t)(\,.\, , \,..\, ) + m \scprd{ \,.\, , \,..\, } 
\end{align*} 
is a scalar product on $H^+$ and such that the induced norm $\norm{\,.\,}_t^+$ is equivalent to $\norm{\,.\,}^+$ for every $t \in I$. 
And finally, $t \mapsto a(t)(x,y)$ is $n$ times continuously differentiable for all $x, y \in H^+$. 
\end{cond} 

In Condition~\ref{cond: vor an a(t)}, the requirement that $\scprd{ \,.\, , \,..\, }_t^+$ be a scalar product on $H^+$ whose norm $\norm{\,.\,}_t^+$ is equivalent to $\norm{\,.\,}^+$ for every $t \in I$ could be reformulated in an equivalent way by saying that there is $m \in (0,\infty)$ such that $a(t)(\,.\, , \,..\, ) + m \scprd{ \,.\, , \,..\, }$ is $\norm{\,.\,}^+$-bounded and $\norm{\,.\,}^+$-coercive. 
It is well-known that under Condition~\ref{cond: vor an a(t)} there is, for every $t \in I$, a unique self-adjoint operator $A_{a(t)}: D(A_{a(t)}) \subset H \to H$ such that 
\begin{align*}
D(A_{a(t)}) \subset H^+ \quad \text{and} \quad \scprd{x,A_{a(t)}y} = a(t)(x,y)
\end{align*}
for every $x \in H^+$ and $y \in D(A_{a(t)})$ (Theorem~VI.2.1 and Theorem~VI.2.6 of~\cite{KatoPerturbation80} or Theorem~10.1.2 of~\cite{BirmanSolomjak}).
As usual, we denote -- in the situation of Condition~\ref{cond: vor an a(t)} -- by $H^-$ the space of $\norm{\,.\,}^+$-continuous conjugate-linear functionals $H^+ \to \C$, which obviously is a complete space w.r.t.~the (mutually equivalent) norms 
\begin{align*}
f \mapsto \norm{f}^- := \sup_{\norm{x}^+ = 1} |f(x)| 
\quad \text{and} \quad 
f \mapsto \norm{f}_t^- := \sup_{\norm{x}_t^+ = 1} |f(x)| \quad (t \in I).
\end{align*}
It is straightforward to 
verify that these norms are induced by the scalar products $\scprd{\,.\,,\,..\,}^-$ and $\scprd{\,.\,,\,..\,}_t^-$ defined by
\begin{align} \label{eq:vorber, def sc prod}
\scprd{f,g}^- := \scprd{j^-(f),j^-(g)}^+
\quad \text{and} \quad
\scprd{f,g}_t^- := \scprd{j_t^-(f),j_t^-(g)}_t^+,
\end{align}
where $j^-: (H^-,\norm{\,.\,}^-) \to (H^+,\norm{\,.\,}^+)$ and $j_t^-: (H^-,\norm{\,.\,}_t^-) \to (H^+,\norm{\,.\,}_t^+)$ are the (unique) isometric isomorphisms with
\begin{align} \label{eq:vorber, def j^-}
f(x) = \scprd{x,j^-(f)}^+ \quad \text{and} \quad f(x) = \scprd{x,j_t^-(f)}_t^+ 
\quad (x \in H^+, f \in H^-).
\end{align}
We also denote by $j: H \to H^-$ the injective continuous linear map defined by $j(x) := \scprd{\,.\,,x} \in H^-$ for $x \in H$. It is straightforward to check using~\eqref{eq:vorber, def sc prod} and~\eqref{eq:vorber, def j^-} that the orthogonal complement of $j(H)$ 
in $H^-$ is trivial and hence $j(H)$ 
is dense in $H^-$. 
%
\smallskip

We continue by citing the 
fundamental theorem of Kisy\'{n}ski (Theorem~8.1 of~\cite{Kisynski63}) giving sufficient conditions for the well-posedness of the initial value problems corresponding to $A$ on $D(A(t))$, where $A(t) = 1/i A_{a(t)}$ with symmetric sesquilinear forms $a(t)$ with constant form 
domain. 
Similar theorems on well-posedness can be proved for the case 
of operators $A(t) = -A_{a(t)}$ defined by sectorial sesquilinear forms $a(t)$ with time-independent form 
domain. See, for instance, Fujie and Tanabe's article~\cite{FujieTanabe73} (Theorem~3.1) or Kato and Tanabe's article~\cite{KatoTanabe62} (Theorem~7.3). 

\begin{thm} [Kisy\'{n}ski] \label{thm: Kisynski}
Suppose $a(t)$ for every $t \in I$ is a sesquilinear form such that Condition~\ref{cond: vor an a(t)} is satisfied with $n = 2$ and set $A(t) := 1/i A_{a(t)}$ for $t \in I$. Then there is a unique evolution system $U$ for $A$ on $D(A(t))$ and $U(t,s)$ is unitary in $H$ for every $(s,t) \in \Delta$.
\end{thm}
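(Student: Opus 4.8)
The statement is Kisy\'{n}ski's theorem (Theorem~8.1 of~\cite{Kisynski63}; the sectorial counterpart is due to Fujie--Tanabe~\cite{FujieTanabe73} and Kato--Tanabe~\cite{KatoTanabe62}), so in the paper it can simply be quoted; here is how one would reprove it by the Gelfand-triple and energy-estimate method. First I would dispose of the shift: the forms $\tilde a(t) := a(t) + m\scprd{\,.\,,\,..\,}$ satisfy Condition~\ref{cond: vor an a(t)} with $n = 2$ as well, $A_{\tilde a(t)} = A_{a(t)} + m$, so $\tilde A(t) := 1/i\, A_{\tilde a(t)} = A(t) - im$ differs from $A(t)$ only by the bounded skew-adjoint operator $im$; hence it suffices to treat $\tilde A$, because if $\tilde U$ is a unitary evolution system for $\tilde A$ on $D(\tilde A(t)) = D(A(t))$ then $U(t,s) := e^{im(t-s)}\tilde U(t,s)$ is one for $A$. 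After this reduction one may assume $a(t)$ is uniformly $\norm{\,.\,}^+$-bounded and $\norm{\,.\,}^+$-coercive on $I$ (uniformity following from the pointwise continuity of $t \mapsto a(t)(x,y)$, the uniform boundedness principle, and the compactness of $I$), so $A_{a(t)} \ge \mu > 0$ with $\mu$ independent of $t$. With the dual space $H^-$ from the excerpt, the forms then induce operators $\mathcal{A}(t) \in L(H^+,H^-)$, $(\mathcal{A}(t)x)(y) := \overline{a(t)(x,y)}$, which by Lax--Milgram are topological isomorphisms $H^+ \to H^-$ with $\sup_{t \in I}( \norm{\mathcal{A}(t)} + \norm{\mathcal{A}(t)^{-1}} ) < \infty$ and which restrict to $j \circ A_{a(t)}$ (so that $D(A_{a(t)}) = \{ x \in H^+ : \mathcal{A}(t)x \in j(H) \}$); from Condition~\ref{cond: vor an a(t)} with $n = 2$, a Taylor-expansion argument shows that $t \mapsto \mathcal{A}(t)$ and $t \mapsto \mathcal{A}(t)^{-1}$ are twice continuously differentiable in operator norm.

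The analytic heart is the construction of the evolution. I would run a Galerkin scheme along a fixed increasing sequence of finite-dimensional subspaces $V_N \subset H^+$ whose union is dense in $H^+$: for $y \in H^+$, the Galerkin equations $\scprd{ v, u_N'(t) } = i\, a(t)(v, u_N(t))$ for $v \in V_N$ with $u_N(s) = P_N y$ are uniquely solvable on $[s,1]$ (the coefficient matrix is smooth in $t$). Testing against $u_N$ and against $u_N'$ and using that $a(t)(u_N,u_N) \in \R$ yields two a~priori estimates: $\norm{u_N(t)} \equiv \norm{P_N y}$ (which will make $U(t,s)$ isometric) and $\frac{d}{dt}[\, a(t)(u_N,u_N) + \mu\norm{u_N}^2 \,] = a'(t)(u_N,u_N)$, which together with coercivity and $\sup_t \norm{a'(t)} < \infty$ gives, by Gronwall, a bound on $\norm{u_N(t)}^+$ uniform in $N$; correspondingly $\{ u_N' \}$ is bounded in $C([s,1],H^-)$. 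A weak-$*$ limit $u$ then lies in $C([s,1],H)$ and solves $u' = 1/i\, \mathcal{A}(t)u$ in $H^-$; differentiating this equation once --- which is exactly where the \emph{second} $t$-derivative of $a$ is used --- one bootstraps, for $y \in D(A_{a(s)})$, to a solution $u \in C^1([s,1],H)$ with $u(t) \in D(A_{a(t)})$ of $u' = 1/i\, A_{a(t)}u$. Putting $U(t,s)y := u(t)$ and extending to $H$ by the isometry bound, one verifies the cocycle identity and the strong continuity of $(s,t) \mapsto U(t,s)$ (both from uniqueness of the Galerkin limit), so $U$ is an evolution system for $A$ on $D(A(t))$.

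Uniqueness of $U$ is then immediate from Corollary~\ref{cor: zeitentw eind}. For unitarity: for $y \in D(A_{a(s)})$ one has $\frac{d}{dt}\norm{U(t,s)y}^2 = 2\Re\scprd{ 1/i\, A_{a(t)}U(t,s)y, U(t,s)y } = 0$ since $A_{a(t)}$ is self-adjoint, so $U(t,s)$ is an isometry of $H$ by density; surjectivity comes from applying the whole construction to the time-reversed family $\{ -1/i\, A_{a(t)} \} = \{ 1/i\, A_{-a(t)} \}$, which is of the same form and produces, for each pair $(s,t)$, a two-sided inverse of $U(t,s)$, whence $U(t,s)$ is unitary. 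The step I expect to be the main obstacle is the construction in the second paragraph --- specifically the uniform $H^+$-bound on the approximants and, above all, the regularity bootstrap that upgrades the $H^-$-valued weak solution to a genuine classical solution taking values in the moving domains $D(A_{a(t)})$; it is precisely there that the hypothesis $n = 2$ (two continuous $t$-derivatives of the form) is indispensable.
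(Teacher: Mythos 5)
The paper does not prove this theorem at all: it is quoted verbatim as Theorem~8.1 of~\cite{Kisynski63}, exactly as you anticipate in your first sentence, so there is no in-paper argument to compare against. Your reconstruction follows the classical variational route (Gelfand triple $H^+ \hookrightarrow H \hookrightarrow H^-$, symmetry-based energy identities, regularity bootstrap using the second $t$-derivative of the form), which is faithful to the spirit of Kisy\'{n}ski's original proof and to Lions' Galerkin method; the reduction by the bounded shift $im$, the two energy identities $\frac{d}{dt}\norm{u_N}^2 = 0$ and $\frac{d}{dt}\bigl[a(t)(u_N,u_N)+m\norm{u_N}^2\bigr] = a'(t)(u_N,u_N)$, the unitarity argument via isometry plus time reversal, and the appeal to Corollary~\ref{cor: zeitentw eind} for uniqueness are all correct. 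Two points in your sketch deserve more care than you give them. First, the bound on $\{u_N'\}$ in $C([s,1],H^-)$ is not automatic for an arbitrary increasing sequence $V_N$: in the Galerkin formulation $u_N'$ is only the $H$-orthogonal projection of $i\,\mathcal{A}(t)u_N$ onto $V_N$, and controlling it in $H^-$ requires the projections $P_N$ to be uniformly bounded as operators on $H^+$ (equivalently, their adjoints on $H^-$), which forces a specific choice of the subspaces (e.g.\ spanned by an orthonormal basis of $H^+$) rather than a generic dense union. Second, the bootstrap from the $H^-$-valued weak solution to a $C^1([s,1],H)$ solution with $u(t) \in D(A_{a(t)})$ --- the step you yourself flag as the main obstacle --- is where essentially all of the work in~\cite{Kisynski63} lies; as a proof sketch it is acceptable to defer it, but one should be aware that it is not a routine differentiation of the equation, since $\mathcal{A}(t)u(t) \in j(H)$ must be established for $t > s$, not merely assumed at $t = s$. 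With those caveats, the proposal is sound and consistent with the paper's (purely citational) treatment.
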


In particular, this theorem guarantees that the basic Condition~\ref{cond: U_eps existiert und beschränkt} of the general adiabatic theorems for time-dependent domains is satisfied if only Condition~\ref{cond: vor an a(t)} is satisfied with $n = 2$. 
In verifying the other conditions of the general adiabatic theorems discussed in Section~\ref{sect:ad-thms-general}, the following lemma will be important. It allows us to express the resolvent of $A(t) := 1/i A_{a(t)}$ in terms of the inverse of an operator with time-independent domain and thus to verify the regularity conditions and estimates for the resolvent from our general adiabatic theorems. 
%

\begin{lm} \label{lm: allg lm}
Suppose that Condition~\ref{cond: vor an a(t)} is satisfied for a certain $n \in \N$ and, for every $t \in I$, denote by $\tilde{A}_0(t)$ the bounded linear operator $H^+ \to H^-$ extending $A_0(t):= A_{a(t)}$, that is, $\tilde{A}_0(t)x := a(t)(\,.\,,x)$ for $x \in H^+$.
Then the following holds true:
\begin{itemize}
\item[(i)] $t \mapsto \tilde{A}_0(t)$ is $n$ times weakly continuously differentiable.
\item[(ii)] If for a certain $z \in \C$ the operator $A_0(t)-z: D(A_0(t)) \subset H \to H$ is bijective for all $t \in J_0$ (a non-trivial subinterval of $I$), then so is $\tilde{A}_0(t)-z j: H^+ \to H^-$ and
\begin{align*}
(A_0(t)-z)^{-1} x = (\tilde{A}_0(t)- z j)^{-1} j(x)
\end{align*} 
for all $t \in J_0$ and $x \in H$.
In particular, $J_0 \ni t \mapsto (A_0(t)-z)^{-1}$ is $n$ times weakly continuously differentiable.
\end{itemize}
\end{lm}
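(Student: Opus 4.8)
The plan is to establish the three claims in order, using the standard identification of the form sum construction with a bounded bijection between the scale spaces $H^+$ and $H^-$.

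First I would handle part~(i). The operator $\tilde{A}_0(t) \in L(H^+, H^-)$ is characterized by $\big(\tilde{A}_0(t)x\big)(y) = a(t)(y,x)$ for $x,y \in H^+$; in other words, $\tilde{A}_0(t)$ is nothing but the bounded operator induced by the bounded sesquilinear form $a(t)$ via the canonical duality. Weak continuous differentiability of $t \mapsto \tilde{A}_0(t)$ means that $t \mapsto \big(\tilde{A}_0(t)x\big)(y) = a(t)(y,x)$ is $n$ times continuously differentiable for every fixed $x, y \in H^+$ (this is exactly the pairing of $\tilde{A}_0(t)x \in H^-$ against $y \in H^+$, and it reproduces the full weak topology on $H^-$ because $H^+$ is the dual of $H^-$ via $j^-$). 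But this is precisely the last hypothesis of Condition~\ref{cond: vor an a(t)}. So part~(i) is immediate from the definitions once one unwinds what ``weakly continuously differentiable'' means for an $L(H^+,H^-)$-valued map. I would spell out that unwinding carefully, since it is the conceptual crux of the whole lemma.

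Next, part~(ii). Fix $z \in \C$ and $t \in J_0$ with $A_0(t) - z$ bijective from $D(A_0(t))$ onto $H$. I would first show $\tilde{A}_0(t) - zj : H^+ \to H^-$ is injective: if $x \in H^+$ with $a(t)(\,.\,,x) = z\,\scprd{\,.\,,x}$ as elements of $H^-$, then testing against arbitrary $y \in H^+$ gives $a(t)(y,x) = z\scprd{y,x}$, which by the defining property of $A_0(t)$ (namely $\scprd{y, A_0(t)x} = a(t)(y,x)$ valid whenever $x \in D(A_0(t))$, and conversely this identity for all $y\in H^+$ with a fixed right-hand side forces $x \in D(A_0(t))$ — this is part of Theorem~VI.2.1 of~\cite{KatoPerturbation80}) shows $x \in D(A_0(t))$ and $(A_0(t)-z)x = 0$, hence $x = 0$. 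For surjectivity, take $f \in H^-$; I want $x \in H^+$ with $\tilde{A}_0(t)x - zj(x) = f$. Here I would use that $j(H)$ is dense in $H^-$ but actually it is cleaner to argue directly: represent $f$ via the isometric isomorphism and solve, or better, first treat $f = j(w)$ for $w \in H$ by setting $x := (A_0(t) - z)^{-1} w \in D(A_0(t)) \subset H^+$ and checking $\tilde{A}_0(t)x - zj(x) = j\big((A_0(t)-z)x\big) = j(w) = f$; this simultaneously proves the displayed formula $(A_0(t)-z)^{-1}x = (\tilde{A}_0(t)-zj)^{-1}j(x)$. To get surjectivity for general $f \in H^-$ I would combine the density of $j(H)$ in $H^-$ with the fact that $(\tilde{A}_0(t) - zj)^{-1}$, once defined on $j(H)$, is bounded: boundedness follows from the open mapping theorem applied to the bijective bounded operator $\tilde{A}_0(t) - zj$ on the dense range $j(H)$, or more carefully one shows a lower bound $\norm{(\tilde{A}_0(t)-zj)x}^- \ge c\norm{x}^+$ using coercivity of the shifted form together with the resolvent bound on $A_0(t)$, and then extends by continuity and closedness. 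This gives $\tilde{A}_0(t) - zj \in L(H^+,H^-)$ bijective with bounded inverse.

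Finally, the ``in particular'' clause: $J_0 \ni t \mapsto (\tilde{A}_0(t) - zj)^{-1}$ is $n$ times weakly continuously differentiable, hence so is $t \mapsto (A_0(t)-z)^{-1}$ by the identity just proved (composition with the fixed bounded maps $j$ and the inclusion $H^+ \hookrightarrow H$, and the fact that weak differentiability on $H^-$ composed with $H^+ \hookrightarrow H$ — wait, one must note the inverse lands in $H^+$, and $H^+ \hookrightarrow H$ continuously, so weak differentiability in $H^+$ passes to weak differentiability in $H$). The remaining point is that $t \mapsto (\tilde{A}_0(t)-zj)^{-1}$ inherits $n$-fold weak continuous differentiability from part~(i): this is the ``inverses rule'' for the weak topology, which one proves inductively from the identity $(\tilde{A}_0(t)-zj)^{-1} - (\tilde{A}_0(s)-zj)^{-1} = -(\tilde{A}_0(t)-zj)^{-1}\big(\tilde{A}_0(t)-\tilde{A}_0(s)\big)(\tilde{A}_0(s)-zj)^{-1}$, together with the uniform boundedness of the inverses on $J_0$ (which follows from compactness of $J_0$ and continuity) and the Leibniz rule for the weak derivatives of a product of operators between the scale spaces.

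\textbf{Main obstacle.} The genuinely delicate point is proving surjectivity of $\tilde{A}_0(t) - zj$ and the boundedness of its inverse, since $z$ is only assumed to make $A_0(t)-z$ bijective on $H$, not necessarily to lie in a region where the shifted form is coercive (for $z$ with, say, $\operatorname{Im} z \ne 0$ this is fine, but the statement allows more general $z$). One must feed the $H$-resolvent bound back into the form to control $\norm{(\tilde{A}_0(t)-zj)^{-1}f}^+$ for $f$ outside $j(H)$; I expect this to require writing $\tilde{A}_0(t) - zj = (\tilde{A}_0(t) + mj) - (z+m)j$, inverting the coercive piece $\tilde{A}_0(t)+mj$ (which is an isometric-type isomorphism $H^+ \to H^-$ by Condition~\ref{cond: vor an a(t)} and the discussion around~\eqref{eq:vorber, def j^-}), and then recognizing $(\tilde{A}_0(t)-zj)^{-1}$ as $\big(1 - (z+m)(\tilde{A}_0(t)+mj)^{-1}j\big)^{-1}(\tilde{A}_0(t)+mj)^{-1}$, where the middle factor is invertible in $L(H^+)$ precisely because $A_0(t)-z$ is bijective on $H$ (the operator $(z+m)(\tilde{A}_0(t)+mj)^{-1}j$ restricted appropriately is conjugate to $(z+m)(A_0(t)+m)^{-1}$ on $H$). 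Making this dictionary precise is where the real work lies; everything else is bookkeeping with the scale spaces.
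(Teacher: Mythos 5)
Your proposal is correct in substance, and part~(i) coincides with the paper's argument: weak continuous differentiability of $t\mapsto\tilde A_0(t)$ is reduced to the $n$-fold differentiability of $t\mapsto a(t)(y,x)$ by identifying $(H^-)^*$ with $H^+$. One caveat you should make explicit there: testing against all functionals only gives \emph{scalar} differentiability, and upgrading this to the existence of the weak derivatives as elements of $L(H^+,H^-)$ uses the reflexivity of $H^-$ (the paper cites the remarks after Definition~3.2.3 of Hille--Phillips for exactly this point); since $H^\pm$ are Hilbert spaces this is harmless, but it is the reason the duality argument closes.

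For part~(ii) you take a genuinely different route from the paper on the one nontrivial step, surjectivity of $\tilde A_0(t)-zj$ for real $z\in\rho(A_0(t))$. The paper introduces the operator $A_0^-(t)$ on $j(H^+)\subset H^-$, proves it is self-adjoint with respect to $\scprd{\cdot,\cdot}_t^-$ via the basic criterion for self-adjointness, and then uses Stone's formula together with the density of $j(H)$ in $H^-$ to conclude $\rho(A_0(t))\subset\rho(A_0^-(t))$. Your alternative --- factor $\tilde A_0(t)-zj=\bigl(\tilde A_0(t)+mj\bigr)\bigl(1-(z+m)(\tilde A_0(t)+mj)^{-1}j\bigr)$, invert the coercive piece as an isomorphism $H^+\to H^-$, and observe that the second factor is conjugate to $(A_0(t)+m)^{-1}(A_0(t)-z)$ and hence bijective on $H^+$ (injectivity restricts from $H$; surjectivity because a preimage $x=y+(z+m)(A_0(t)+m)^{-1}x$ automatically lies in $H^+$) --- is sound and entirely avoids the spectral theorem; it also directly yields the lower bound $\norm{(\tilde A_0(t)-zj)x}^-\ge c\norm{x}^+$ that your closed-range argument needs. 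What the paper's route buys instead is the self-adjoint realization $A_0^-(t)$ itself, which is reused later (e.g.\ in the proof of Theorem~4.9). One genuine misstep to delete: your first suggestion of applying the open mapping theorem to ``the bijective bounded operator $\tilde A_0(t)-zj$ on the dense range $j(H)$'' is circular, since surjectivity onto $H^-$ is precisely what is at stake; only your subsequent lower-bound/factorization argument is valid. The concluding ``inverses rule'' step matches the paper, which cites Kisy\'nski for it.
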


\begin{proof}
(i) We have only to show that $t \mapsto F(\tilde{A}_0(t)x)$ is $n$ times continuously differentiable for every $x \in H^+$ and every $F \in (H^-)^*$, because $H^-$ is reflexive. (We point out that the reflexivity is essential here by the remarks after Definition~3.2.3 of~\cite{HillePhillips}.)
Since the canonical conjugate-linear map 
\begin{align*}
H^+ \ni y \mapsto i(y) \in (H^-)^* 
\text{ \, with \, } i(y)(f) := f(y) \text{ \, for } f \in H^- 
\end{align*}
is surjective by the reflexivity of $H^+$, the claim 
is obvious from the $n$ times continuous differentiability requirement in Condition~\ref{cond: vor an a(t)}. 
\smallskip   

(ii) We need some preparations. 
As a first preparatory step, we show that the operator $A_0^-(t): j(H^+) \subset H^- \to H^-$  defined by $A_0^-(t) j(x) := \tilde{A}_0(t)x$ for $x \in H^+$ is self-adjoint 
w.r.t.~$\scprd{\,.\,,\,..\,}_t^-$ and that 
\begin{align} \label{eq:vorber, step 1}
(A_0^-(t)-z)^{-1} j(x) = j\big( (A_0(t)-z)^{-1} x \big)
\end{align}
for all $z \in \C \setminus \R$ and $x \in H$.
We could refer to~\cite{Kisynski63} (Lemma~7.8) for the self-adjointness of $A_0^-(t)$, but for the reader's convenience we give an independent proof here. 
With the help of~\eqref{eq:vorber, def sc prod}, \eqref{eq:vorber, def j^-} and
\begin{align*}
\big( A_0^-(t)+m \big) j(x) = a(t)(\,.\,,x) + m \scprd{\,.\,,x} = \scprd{\,.\,,x}_t^+ 
\qquad (x \in H^+),
\end{align*}
it is straightforward to verify that 
\begin{align*}
\scprd{ \big( A_0^-(t)+m \big) j(x), j(y) }_t^- = j(y)(x) = \ol{j(x)(y)} = \scprd{ j(x), \big( A_0^-(t)+m \big) j(y) }_t^-
\end{align*}
for all $x, y \in H^+$. In other words, $A_0^-(t)+m$ and hence $A_0^-(t)$ is self-adjoint w.r.t.~the scalar product $\scprd{\,.\,,\,..\,}_t^-$. 
With the help of
\begin{align*}
\norm{ \big( A_0^-(t)+m \big) j(x) }_t^- = \sup_{\norm{y}_t^+ = 1} |\scprd{y,x}_t^+| = \norm{x}_t^+ \qquad (x \in H^+),
\end{align*}
it is also easy to see that $A_0^-(t)$ is a closed operator in $H^-$. 
Since
\begin{align} \label{eq:vorber, step 1, 1}
A_0^-(t)j(y) = a(t)(\,.\,,y) = \scprd{\,.\,,A_0(t)y} = j(A_0(t)y) 
\qquad (y \in D(A_0(t))
\end{align}
and since $\ran (A_0(t) \pm i) = H$ by the self-adjointness of $A_0(t)$ in $H$, it finally follows that 
\begin{align*}
\ran(A_0^-(t) \pm i) \supset j\big( \ran (A_0(t) \pm i) \big) = j(H)
\end{align*}
is dense in $H^-$. 
So, by the basic criterion for self-adjointness (Theorem~VIII.3 of~\cite{ReedSimon}), we see that $A_0^-(t)$ is self-adjoint in $(H^-,\norm{\,.\,}_t^-)$. In particular, $\C \setminus \R \subset \rho(A_0^-(t))$ and thus~\eqref{eq:vorber, step 1} follows by~\eqref{eq:vorber, step 1, 1}. 
As a second preparatory step, we show that
\begin{align} \label{eq:vorber, step 2}
\rho(A_0(t)) \subset \rho(A_0^-(t)) 
\end{align}
for every $t \in I$. 
Since $\C \setminus \R \subset \rho(A_0^-(t))$ by the first step, it suffices to prove that $\rho(A_0(t)) \cap \R \subset \rho(A_0^-(t))$. 
So, let $\lambda \in \rho(A_0(t)) \cap \R$, then there is a $\delta > 0$ such that 
$(z- 2 \delta, z+ 2 \delta) \subset \rho(A_0(t))$. It thus follows by Stone's formula 
(applied to both $A_0(t)$ and $A_0^-(t)$) and~\eqref{eq:vorber, step 1}  that
\begin{align*}
0 = j\Big( P_{(\lambda-\delta,\lambda+\delta)}x + \frac{1}{2} P_{ \{ \lambda-\delta, \lambda+\delta\} }x \Big) = \Big( P_{(\lambda-\delta,\lambda+\delta)}^- + \frac{1}{2} P_{ \{ \lambda-\delta, \lambda+\delta\} }^- \Big) j(x)
\end{align*}
for all $x \in H$, where $P$ and $P^-$ denote the spectral measure of $A_0(t)$ and $A_0^-(t)$, respectively. So, by the density of $j(H)$ in $H^-$, 
\begin{align*}
0 = P_{(\lambda-\delta,\lambda+\delta)}^- + \frac{1}{2} P_{ \{ \lambda-\delta, \lambda+\delta\} }^- 
\ge P_{(\lambda-\delta,\lambda+\delta)}^-
\ge 0
\end{align*}
and therefore 
$\lambda \in \C \setminus \operatorname{supp}(P^-) = \rho(A_0^-(t))$, as desired.
With the above preparations at hand, we can now easily conclude the proof. Indeed, let $z \in \C$ and $J_0$ be a non-trivial subinterval of $I$ such that $A_0(t)-z: D(A_0(t)) \subset H \to H$ is bijective for every $t \in J_0$. It then follows by the second step that $A_0^-(t)-z: j(H^+) \subset H^- \to H^-$ and hence $\tilde{A}_0(t)-z j: H^+ \to H^-$ is bijective as well. Also, by~\eqref{eq:vorber, step 1, 1} the claimed formula for $(A_0(t)-z)^{-1}$ follows and from this formula, in turn, we get the claimed $n$ times weak continuous differentiability of $J_0 \ni t \mapsto  (A_0(t)-z)^{-1}$ using part~(i) and the inverse rule for weak continuous differentiability (Section~1.4.2 of~\cite{Kisynski63}, for instance). 
\end{proof}

\subsection{Adiabatic theorems with spectral gap condition}

We will need the following condition depending on a parameter $m \in \{0\} \cup \N \cup \{ \infty \}$ for the adiabatic theorem with spectral gap condition below.

\begin{cond} \label{cond: vor adsatz mit sl für A(t)=iA_{a(t)}}
$A(t) = 1/i A_{a(t)}$ for $t \in I$, where the sesquilinear forms $a(t)$ satisfy Condition~\ref{cond: vor an a(t)} with $n = 2$. 
Also, 
$\sigma(t)$ for every $t \in I$ is a compact subset of $\sigma(A(t))$, 
$\sigma(\,.\,)$ falls into $\sigma(A(\,.\,)) \setminus \sigma(\,.\,)$ at exactly $m$ points that accumulate at only finitely many points, and $I \setminus N \ni t \mapsto \sigma(t)$ is continuous, where $N$ denotes the set of those $m$ points at which $\sigma(\,.\,)$ falls into $\sigma(A(\,.\,)) \setminus \sigma(\,.\,)$.
And finally, 
$P(t)$ for every $t \in I \setminus N$ is the projection associated with $A(t)$ and $\sigma(t)$ and $I \setminus N \ni t \mapsto P(t)$ extends to a twice strongly continuously differentiable map (again denoted by $P$) on the whole of $I$.
\end{cond}


In view of Lemma~\ref{lm: allg lm} it is 
easy to derive the following adiabatic theorem with uniform ($m =0$) or non-uniform ($m \in \N \cup \{ \infty \}$) spectral gap condition from the corresponding general adiabatic theorem with spectral gap condition 
(Theorem~\ref{thm: adsatz mit sl, zeitabh}).

\begin{thm} \label{thm: adsatz mit sl}
Suppose $A(t)$, $\sigma(t)$, $P(t)$ for $t \in I$ are as in Condition~\ref{cond: vor adsatz mit sl für A(t)=iA_{a(t)}} with $m = 0$ or with $m \in \N \cup \{\infty\}$, respectively. Then
\begin{align*}
\sup_{t \in I} \norm{ U_{\eps}(t) - V_{\eps}(t) } = O(\eps)  \text{ resp. } o(1) \quad (\eps \searrow 0), 
\end{align*}
whenever the evolution system $V_{\eps}$ for $\frac 1 \eps A + [P',P]$ exists on $D(A(t))$ for every $\eps \in (0, \infty)$.
\end{thm}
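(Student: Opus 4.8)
The plan is to deduce Theorem~\ref{thm: adsatz mit sl} from the general adiabatic theorem with spectral gap condition, Theorem~\ref{thm: adsatz mit sl, zeitabh}, by showing that the data $A(t)$, $\sigma(t)$, $P(t)$ of Condition~\ref{cond: vor adsatz mit sl für A(t)=iA_{a(t)}} also satisfy the more technical Condition~\ref{cond: vor adsatz mit sl}. Several items of Condition~\ref{cond: vor adsatz mit sl} are already granted verbatim by Condition~\ref{cond: vor adsatz mit sl für A(t)=iA_{a(t)}}, namely the spectral requirements on $\sigma(\,.\,)$ (compactness, the $m$ points at which it falls into $\sigma(A(\,.\,))\setminus\sigma(\,.\,)$, continuity of $\sigma(\,.\,)$ on $I\setminus N$) and the assumption that $P(t)$ for $t\in I\setminus N$ is the associated projection extending to a twice strongly continuously differentiable map on $I$. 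So what remains to be checked is Condition~\ref{cond: U_eps existiert und beschränkt}, the existence of the cycles $\gamma_{t_0}$ and intervals $J_{t_0}$, and the three regularity/boundedness conditions on the resolvent map.

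First I would dispose of Condition~\ref{cond: U_eps existiert und beschränkt}. Since $\frac 1 \eps A(t) = \frac 1 i A_{(1/\eps)a(t)}$ and the rescaled forms $(1/\eps)a(t)$ satisfy Condition~\ref{cond: vor an a(t)} with $n=2$ as well (same form domain $H^+$, coercivity constant rescaled), Kisy\'nski's Theorem~\ref{thm: Kisynski} yields, for every $\eps\in(0,\infty)$, a unique evolution system $U_\eps$ for $\frac 1 \eps A$ on the spaces $D(A(t)) = D(A_{a(t)})$ with $U_\eps(t,s)$ unitary in $H$ for every $(s,t)\in\Delta$. In particular $\norm{U_\eps(t,s)} = 1$ for all $(s,t)\in\Delta$ and all $\eps$, so Condition~\ref{cond: U_eps existiert und beschränkt} holds with $M = 1$.

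Next I would treat the resolvent. Fix $t_0\in I\setminus N$. Since $A(t) = \frac 1 i A_{a(t)}$ with $A_{a(t)}$ self-adjoint, one has $\sigma(A(t))\subset i\R$, and $z\in\rho(A(t))$ iff $iz\in\rho(A_{a(t)})$, with $(z-A(t))^{-1} = -i\,(A_{a(t)} - iz)^{-1}$. By Lemma~\ref{lm: allg lm}~(ii), whenever $A_{a(t)}-iz$ is bijective for all $t$ in a subinterval $J_0$ of $I$, one has $(A_{a(t)}-iz)^{-1}x = (\tilde A_0(t) - izj)^{-1}j(x)$ and $t\mapsto(A_{a(t)}-iz)^{-1}$ is (here twice) weakly continuously differentiable; the representation through the bounded, boundedly invertible operators $\tilde A_0(t) - izj\colon H^+\to H^-$ moreover lets one upgrade this to strong (in fact norm) continuous differentiability of $t\mapsto(z-A(t))^{-1}\in L(H)$, the point being that the pointwise $C^n$-dependence of the forms $a(t)$ forces, via a Banach--Steinhaus argument on the compact interval, an equi-Lipschitz (and then $C^1$-in-norm) dependence of $t\mapsto\tilde A_0(t)\in L(H^+,H^-)$, whence $t\mapsto(\tilde A_0(t)-izj)^{-1}$ is $C^1$ in $L(H^-,H^+)$ by the inverse rule. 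From the same $H^+$--$H^-$ representation one reads off the strong continuity of $z\mapsto\frac{d}{dt}(z-A(t))^{-1}$ (holomorphy in $z$ persists through inversion). Having norm continuity of the resolvent in $t$ at hand, $t\mapsto A(t)$ is continuous in the generalized sense, so -- using the continuity of $\sigma(\,.\,)$ on $I\setminus N$ and the isolatedness of $\sigma(t_0)$ in $\sigma(A(t_0))$ -- there exist a cycle $\gamma_{t_0}$ and a nontrivial closed interval $J_{t_0}\ni t_0$ with $\ran\gamma_{t_0}\subset\rho(A(t))$, $\operatorname{n}(\gamma_{t_0},\sigma(t)) = 1$ and $\operatorname{n}(\gamma_{t_0},\sigma(A(t))\setminus\sigma(t)) = 0$ for all $t\in J_{t_0}$. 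Finally, the uniform bound $\sup_{(t,z)\in J_{t_0}\times\ran\gamma_{t_0}}\norm{\frac{d}{dt}(z-A(t))^{-1}} < \infty$ follows from the continuity of $(t,z)\mapsto\frac{d}{dt}(z-A(t))^{-1}$ on the compact set $J_{t_0}\times\ran\gamma_{t_0}$.

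With Condition~\ref{cond: vor adsatz mit sl} thus verified (with the same $m$), Theorem~\ref{thm: adsatz mit sl, zeitabh} immediately gives $\sup_{t\in I}\norm{U_\eps(t)-V_\eps(t)} = O(\eps)$ if $m = 0$ and $o(1)$ if $m\in\N\cup\{\infty\}$, for every $V_\eps$ as in the statement. The main obstacle I anticipate is the resolvent step: reconciling the weak continuous differentiability delivered by Lemma~\ref{lm: allg lm} with the strong continuous differentiability demanded by Condition~\ref{cond: vor adsatz mit sl}, and securing the uniform bound on $\frac{d}{dt}(z-A(t))^{-1}$ over $J_{t_0}\times\ran\gamma_{t_0}$; everything else is bookkeeping or a direct appeal to Kisy\'nski's theorem and to the general theorem of Section~\ref{sect: adsätze mit sl, zeitabh}.
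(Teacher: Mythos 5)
Your proposal is correct and follows exactly the route the paper takes: choose the cycles $\gamma_{t_0}$ and intervals $J_{t_0}$, use Kisy\'nski's theorem (applied to the rescaled forms $\tfrac{1}{\eps}a(t)$) for Condition~\ref{cond: U_eps existiert und beschränkt}, use Lemma~\ref{lm: allg lm} for the resolvent regularity, and then invoke Theorem~\ref{thm: adsatz mit sl, zeitabh}. The paper's proof is only three sentences long, so your explicit upgrade from the twice weak continuous differentiability of Lemma~\ref{lm: allg lm} (with $n=2$) to the strong (indeed norm) continuous differentiability demanded by Condition~\ref{cond: vor adsatz mit sl} supplies precisely the detail the paper leaves implicit.
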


\begin{proof}
Choose, for every $t_0 \in I \setminus N$, non-trivial closed intervals $J_{t_0}$ and cycles $\gamma_{t_0}$ as in Condition~\ref{cond: vor adsatz mit sl}. 
Since $I \setminus N$ is relatively open in $I$, such choices are possible. 
Lemma~\ref{lm: allg lm} now shows that Condition~\ref{cond: vor adsatz mit sl} is satisfied and, therefore, the assertion follows from Theorem~\ref{thm: adsatz mit sl, zeitabh}.
\end{proof}

If the existence of the evolution $V_{\eps}$ for $\frac 1 \eps A +[P',P]$ cannot be ensured, 
one can still exploit 
the remark after Theorem~\ref{thm: adsatz mit sl, zeitabh}.
In the case of a uniform spectral gap, 
the existence of $V_{\eps}$ is guaranteed if, for instance, Condition~\ref{cond: vor an a(t)} is 
fulfilled with $n = 3$. Indeed, in that case $I \ni t \mapsto P(t)$ is thrice weakly continuously differentiable (by Lemma~\ref{lm: allg lm}~(ii)) 
so that the symmetric sesquilinear forms $\frac 1 \eps a(t) + b(t) := \frac 1 \eps a(t) +i \scprd{\,.\,,[P'(t),P(t)]\,..\,}$ corresponding to $\frac{1}{\eps}A(t) + [P'(t),P(t)]$ satisfy Condition~\ref{cond: vor an a(t)} with $n = 2$ and therefore Theorem~\ref{thm: Kisynski} can be applied. 
\smallskip

We finally note conditions under which the general adiabatic theorem of higher order (Theorem~\ref{thm: höherer adsatz}) can be applied to the case  of operators $A(t)$ defined by symmetric sesquilinear forms.

\begin{cond} \label{cond: adsatz höherer ord}
Suppose that $A(t) = 1/i A_{a(t)}$ for $t \in I$ where the sesquilinear forms $a(t)$ satisfy Condition~\ref{cond: vor an a(t)} with a certain $n \in \N \setminus \{1\}$ 
or with $n = \infty$, respectively. 
In the latter case suppose further that there is an open neighbourhood $U_I$ of $I$ in $\C$ and for each $w \in U_I$ there is a $\norm{\,.\,}^+$-bounded sesquilinear form $\tilde{a}(w)$ on $H^+$ such that $\tilde{a}(t) = a(t)$ for $t \in I$ and that $$U_I \ni w \mapsto \tilde{a}(w)(x,y)$$ is holomorphic for every $x,y \in H^+$. 
Suppose moreover that $\sigma(t)$ for every $t \in I$ is an isolated compact subset of $\sigma(A(t))$, that $\sigma(\,.\,)$ at no point falls into $\sigma(A(\,.\,)) \setminus \sigma(\,.\,)$, 
and that $t \mapsto \sigma(t)$ is continuous. 
And finally, suppose $P(t)$ for every $t \in I$ is the projection associated with $A(t)$ and $\sigma(t)$ and $t \mapsto P(t)$ is $n+1$ times 
times strongly continuously differentiable.
\end{cond}



It is not difficult -- albeit a bit technical -- to show that  under Condition~\ref{cond: adsatz höherer ord} the hypotheses of Theorem~\ref{thm: höherer adsatz} are really satisfied.
(In the case $n = \infty$ define $\tilde{A}_0(w)$ by $\tilde{A}_0(w)x := \tilde{a}(w)(\,.\, ,x)$ for $x \in H^+$. Then $\tilde{A}_0(w)$ is a bounded linear operator $H^+ \to H^-$ and  
$U_I \ni w \mapsto \tilde{A}_0(w) \in L(H^+,H^-)$ is weakly holomorphic and hence holomorphic w.r.t.~the norm operator topology. A simple perturbation argument and Cauchy's inequality (in conjunction with the formula in Lemma~\ref{lm: allg lm}~(ii)) then yield the estimates required in Condition~\ref{cond: vor höherer adsatz}.)

%

\subsection{An adiabatic theorem without spectral gap condition}  \label{sect: adsatz ohne sl für slinformen}

In the adiabatic theorem without spectral gap condition below, the following condition will be used. 

\begin{cond} \label{cond: vor adsatz ohne sl}
$A(t) = 1/i A_{a(t)}$ for $t \in I$ where the sesquilinear forms $a(t)$ satisfy Condition~\ref{cond: vor an a(t)} with $n = 2$. 
Also, 
$\lambda(t)$ for every $t \in I$ is an eigenvalue of $A(t)$ such that $t \mapsto \lambda(t)$ is continuous. 
And finally, $P(t)$ for every $t \in I$ is an orthogonal projection in $H$ such that $P(t)$ is weakly associated with $A(t)$ and $\lambda(t)$ for almost every $t \in I$, $\rk P(0) < \infty$ and $t \mapsto P(t)$ is strongly continuously differentiable.
\end{cond}

While in the case with spectral gap Lemma~\ref{lm: allg lm} was sufficient, 
we need an additional lemma in the case without spectral gap. 

\begin{lm} \label{lm: lm für adsatz ohne sl}
Suppose that Condition~\ref{cond: vor adsatz ohne sl} is satisfied and that, in addition, $t \mapsto \lambda(t)$ is even continuously differentiable. 
Then $t \mapsto (\lambda(t) + \delta - A(t))^{-1}$ is strongly continuously differentiable for every $\delta \in (0,\infty)$ and there is an $M_0' \in (0,\infty)$ such that
\begin{align*}
\norm{ \ddt{ (\lambda(t) + \delta - A(t))^{-1} }  } \le \frac{M_0'}{\delta^2}
\end{align*}
for all $t \in I$ and $\delta \in (0,1]$.
\end{lm}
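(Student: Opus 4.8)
The plan is to reduce the resolvent of the skew-adjoint operator $A(t) = 1/i A_{a(t)}$ to the inverse of a bounded operator between the fixed spaces $H^+$ and $H^-$, exactly as in Lemma~\ref{lm: allg lm}, and then differentiate that. Concretely: $(\lambda(t) + \delta - A(t))^{-1} = i(i\lambda(t) + i\delta - A_0(t))^{-1}$ where $A_0(t) = A_{a(t)}$, so it suffices to treat $\mu(t) := i\lambda(t) + i\delta$ and study $(A_0(t) - \mu(t))^{-1}$. Since $\lambda(t)$ is an eigenvalue of $A(t) = 1/i A_{a(t)}$, we have $i\lambda(t) \in \sigma(A_0(t)) \subset \R$, so $i\lambda(t)$ is real; hence $\mu(t) = i\lambda(t) + i\delta$ has nonzero imaginary part $\delta > 0$ and in particular $\mu(t) \in \rho(A_0(t))$ for every $t$. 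By Lemma~\ref{lm: allg lm}~(ii) (applied with the fixed $z$ replaced by the moving $\mu(t)$ — which is harmless since one only uses bijectivity pointwise in $t$), $(A_0(t) - \mu(t))^{-1}x = (\tilde A_0(t) - \mu(t) j)^{-1} j(x)$ for $x \in H$.

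Next I would establish strong continuous differentiability. By Lemma~\ref{lm: allg lm}~(i), $t \mapsto \tilde A_0(t) \in L(H^+, H^-)$ is $n = 2$ times weakly continuously differentiable, hence (since $H^-$ is reflexive, as noted in the proof of Lemma~\ref{lm: allg lm}) once strongly continuously differentiable, and so is $t \mapsto \tilde A_0(t) - \mu(t) j$, using that $t \mapsto \mu(t)$ is continuously differentiable (here the extra hypothesis that $t \mapsto \lambda(t)$ is continuously differentiable is what is used) and that $j \in L(H, H^-)$, $L(H^+,H^-)$ is a fixed space. The inverse rule for strong continuous differentiability then gives that $t \mapsto (\tilde A_0(t) - \mu(t) j)^{-1} \in L(H^-, H^+)$ is strongly continuously differentiable, and composing on the right with the fixed bounded map $j: H \to H^-$ and on the left with the fixed inclusion $H^+ \hookrightarrow H$ yields the strong continuous differentiability of $t \mapsto (A_0(t) - \mu(t))^{-1} = (\lambda(t)+\delta-A(t))^{-1}$ (up to the factor $i$).

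For the quantitative bound, differentiate using the inverse rule:
\begin{align*}
\ddt{(A_0(t)-\mu(t))^{-1}} = -(\tilde A_0(t)-\mu(t)j)^{-1} \bigl( \tilde A_0'(t) - \mu'(t) j \bigr) (\tilde A_0(t)-\mu(t)j)^{-1} j,
\end{align*}
read as operators and composed with $j$ appropriately. The point is to control $\norm{(\tilde A_0(t)-\mu(t)j)^{-1}}_{H^-,H^+}$ by $O(1/\delta)$. This follows from the self-adjointness of $A_0^-(t)$ on $H^-$ w.r.t.\ $\scprd{\,.\,,\,..\,}_t^-$ established in the proof of Lemma~\ref{lm: allg lm}: for a self-adjoint operator $\norm{(A_0^-(t)-\mu(t))^{-1}}_t^- \le 1/|\Im \mu(t)| = 1/\delta$, and transporting through the isometry $j_t^-$ and using that all the $t$-indexed norms $\norm{\,.\,}_t^\pm$ are equivalent to $\norm{\,.\,}^\pm$ uniformly in $t$ (by Condition~\ref{cond: vor an a(t)} together with a compactness argument on $I$), one gets a uniform bound $\norm{(\tilde A_0(t)-\mu(t)j)^{-1}}_{H^-,H^+} \le C/\delta$. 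Meanwhile $\norm{\tilde A_0'(t) - \mu'(t)j}_{H^+,H^-} \le \sup_t \norm{\tilde A_0'(t)} + \sup_t|\mu'(t)| \, \norm{j} =: C'$ is finite (again by compactness of $I$ and the established continuity), and $\norm{j}_{H,H^-}$, the inclusion norm $H^+ \hookrightarrow H$ are fixed constants. Multiplying the three resolvent/derivative factors gives the bound $C'' / \delta^2$, uniformly for $t \in I$ and $\delta \in (0,1]$, as claimed with $M_0' := C''$.

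The main obstacle I anticipate is the uniformity in $t$ of all the constants — in particular making precise that the equivalence constants between $\norm{\,.\,}_t^\pm$ and $\norm{\,.\,}^\pm$ can be chosen independent of $t \in I$, and likewise that $\sup_{t\in I}\norm{\tilde A_0'(t)}_{H^+,H^-} < \infty$. Both follow from Condition~\ref{cond: vor an a(t)} (boundedness and coercivity with a single constant $m$) and the compactness of $I$ together with the already-proven continuity statements, but this is the step that requires the most care; everything else is a routine application of the inverse and product rules for strong/weak continuous differentiability and the self-adjoint resolvent estimate.
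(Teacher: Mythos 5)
Your overall route is the paper's: reduce to the operator $\tilde{A}_0(t)-(\lambda_0(t)+i\delta)j \in L(H^+,H^-)$ with $\lambda_0(t):=i\lambda(t)$ real, get strong continuous differentiability and the derivative formula from Lemma~\ref{lm: allg lm} and the inverse rule, and reduce the $\delta^{-2}$ estimate to a bound $\norm{(\tilde{A}_0(t)-(\lambda_0(t)+i\delta)j)^{-1}}_{H^-,H^+}\le C/\delta$ uniform in $t$. That part is fine.

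The gap is in your justification of this last, central estimate. The self-adjoint resolvent bound for $A_0^-(t)$ in $(H^-,\norm{\,.\,}_t^-)$ controls $\norm{(A_0^-(t)-\lambda_0(t)-i\delta)^{-1}f}_t^-=\norm{j(x)}_t^-$, where $x:=(\tilde{A}_0(t)-(\lambda_0(t)+i\delta)j)^{-1}f$; what you need is $\norm{x}_t^+\le C\delta^{-1}\norm{f}_t^-$, which is strictly stronger. ``Transporting through the isometry $j_t^-$'' does not bridge this: the Riesz isometry identifies $\norm{x}_t^+$ with $\norm{j_t^+(x)}_t^-=\norm{(A_0^-(t)+m)j(x)}_t^-$, not with $\norm{j(x)}_t^-$, so the unbounded factor $A_0^-(t)+m$ sits exactly between what you have bounded and what you need. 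The step is repairable within your framework: by the spectral theorem, $\norm{(A_0^-(t)+m)(A_0^-(t)-\lambda_0(t)-i\delta)^{-1}}_t^-\le\sup_{s\in\sigma(A_0(t))}|s+m|/|s-\lambda_0(t)-i\delta|\le C/\delta$ for $\delta\in(0,1]$, using the semiboundedness of $a(t)$ and $\sup_{t\in I}|\lambda_0(t)|<\infty$; composing with $j_t^-$ then yields the $H^-\to H^+$ bound. The paper instead proves the lower bound $\norm{(\tilde{A}_0(t)-(\lambda_0(t)+i\delta)j)x}_t^-\ge(\delta/c_0)\norm{x}_t^+$ directly, by writing $j|_{H^+}=\frac1m\big(j_t^+-\tilde{A}_0(t)\big)$, factoring $\tilde{A}_0(t)-(\lambda_0(t)+i\delta)j$ into a scalar multiple of $\tilde{A}_0(t)-\tfrac{\lambda_0(t)+i\delta}{m+\lambda_0(t)+i\delta}j_t^+$, and estimating the imaginary part of the associated quadratic form; the $t$-uniformity of the norm equivalences is quoted from Lemma~7.3 of~\cite{Kisynski63} rather than re-derived by compactness. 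So: same strategy, but your key inequality does not follow from the facts you invoke and needs either the spectral-calculus correction above or the paper's direct coercivity computation.
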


\begin{proof}
Set $A_0(t) := A_{a(t)} = i A(t)$ and $\lambda_0(t) := i \lambda(t)$ and let $\tilde{A}_0(t): H^+ \to H^-$ be the bounded extension of $A_0(t)$. 
Since by Lemma~\ref{lm: allg lm} $t \mapsto \tilde{A}_0(t)$ is twice weakly and, in particular, once strongly continuously differentiable and $t \mapsto \lambda_0(t)$ is continuously differentiable, it follows that 
\begin{align*}
t \mapsto \big( A_0(t) - (\lambda_0(t) + i\delta) \big)^{-1} = \big( \tilde{A}_0(t) - (\lambda_0(t) + i \delta) j \big)^{-1} j
\end{align*}
is strongly continuously differentiable for every $\delta \in (0,\infty)$ and that
\begin{align} \label{eq: darst der abl}
&\ddt{ \big( A_0(t) - (\lambda_0(t) + i\delta) \big)^{-1} } \notag \\
&\qquad \quad = \big( \tilde{A}_0(t) - (\lambda_0(t) + i \delta) j \big)^{-1} \, \big( \lambda_0'(t)j -\tilde{A}_0'(t) \big) \, \big( \tilde{A}_0(t) - (\lambda_0(t) + i \delta) j \big)^{-1} j
\end{align}
for $t \in I$ and $\delta \in (0,\infty)$. 
We now show 
that there is a constant $c_0 \in (0,\infty)$ such that 
\begin{align} \label{eq: wesentl zwbeh}
\norm{ \big( \tilde{A}_0(t) - (\lambda_0(t) + i \delta)j \big) x }_t^- \ge \frac{ \delta}{c_0} \norm{x}_t^+
\end{align}
for all $x \in H^+$, $t \in I$ and $\delta \in (0,1]$. In order to do so, we observe the following simple fact: if instead of $j$ the natural isometric isomorphism 
\begin{align*}
j_t^+: (H^+, \norm{\,.\,}_t^+) \to (H^-, \norm{\,.\,}_t^-) \text{\, with \,} j_t^+(x) := \scprd{\,.\,,x}_t^+ \text{ for } x \in H^+ 
\end{align*}
occurred in~\eqref{eq: wesentl zwbeh}, this assertion would be trivial. 
We are therefore led to express $j$ in terms of $j_t^+$: by the definition of the scalar product $\scprd{ \,.\,,\,..\, }_t^+$ in Condition~\ref{cond: vor an a(t)}, we have 
\begin{align*}
j|_{H^+} = \frac{1}{m} \big( j_t^+ - \tilde{A}_0(t) \big)
\end{align*} 
for all $t \in I$, so that 
\begin{align} \label{eq:vorber-ohne-sl, zwrechnung}
\tilde{A}_0(t) - (\lambda_0(t) + i \delta)j = \frac{m+\lambda_0(t)+i \delta}{m} \Big( \tilde{A}_0(t) - \frac{ \lambda_0(t) +i\delta}{m + \lambda_0(t) +i\delta} j_t^+ \Big).
\end{align}
Since for all $x \in H^+$ with $\norm{x}_t^+ = 1$
\begin{align*}
\norm{  \Big( \tilde{A}_0(t) - \frac{ \lambda_0(t) +i\delta}{m + \lambda_0(t) +i\delta} j_t^+ \Big)x  }_t^- &\ge \Big| a(t)(x,x) - \frac{ \lambda_0(t) +i\delta}{m + \lambda_0(t) +i\delta} \big( j_t^+(x) \big)(x) \Big| \\
&\ge \Big| \Im \Big( \frac{ \lambda_0(t) +i\delta}{m + \lambda_0(t) +i\delta} \Big) \Big| = \frac{m \delta}{ |m + \lambda_0(t) +i\delta|^2 },
\end{align*}
it follows by~\eqref{eq:vorber-ohne-sl, zwrechnung} that
\begin{align*}
\norm{ \big( \tilde{A}_0(t) - (\lambda_0(t) + i \delta)j \big)x }_t^- \ge \Big| \frac{m+\lambda_0(t)+i \delta}{m} \Big| \, \frac{m \delta}{ |m + \lambda_0(t) +i\delta|^2 } \norm{x}_t^+ \ge \frac{\delta}{c_0} \norm{x}_t^+  
\end{align*}
for all $x \in H^+$ and all $t \in I$, $\delta \in (0,1]$, where $c_0 := m + \norm{\lambda}_{\infty} + 1$. So \eqref{eq: wesentl zwbeh} is proven and it follows that
\begin{align} \label{eq: absch inverse}
\norm{ \big( \tilde{A}_0(t) - (\lambda_0(t) + i \delta)j \big)^{-1} }_{H^-, H^+} \le \frac{c_0'}{ \delta}
\end{align}
for all $t \in I$ and $\delta \in (0,1]$, because the equivalence of the norms $\norm{\,.\,}_t^+$ with $\norm{\,.\,}$ required in Condition~\ref{cond: vor an a(t)} is uniform w.r.t.~$t$ by Lemma~7.3 of~\cite{Kisynski63}.  
In view of~\eqref{eq: darst der abl} and~\eqref{eq: absch inverse} 
the asserted estimate is now clear. 
\end{proof}

With this lemma at hand, it is now simple to derive the following adiabatic theorem without spectral gap condition, which 
generalizes an adiabatic theorem of Bornemann (Theorem~IV.1 of~\cite{Bornemann98}). See the discussion below for a detailed 
comparison of these results.

\begin{thm} \label{thm: adsatz ohne sl für A(t)=iA_{a(t)}}
Suppose $A(t)$, $\lambda(t)$, $P(t)$ for $t \in I$ are such that Condition~\ref{cond: vor adsatz ohne sl} is satisfied.
Then
\begin{align*}
\sup_{t \in I} \norm{ \big( U_{\eps}(t) - V_{0\,\eps}(t) \big) P(0) } \longrightarrow 0 \quad \text{and} \quad \sup_{t \in I} \norm{ P(t) \big( U_{\eps}(t) - V_{0\,\eps}(t) \big) } \longrightarrow 0 
\end{align*}
as $\eps \searrow 0$, where $V_{0\,\eps}$ denotes the evolution system for $\frac{1}{\eps}AP + [P',P] = \frac 1 \eps \lambda P + [P',P]$ for every $\eps \in (0,\infty)$. 
If, in addition, $t \mapsto P(t)$ is thrice weakly continuously differentiable, then the evolution system $V_{\eps}$ for $\frac 1 \eps A + [P',P]$ exists on $D(A(t))$ for every $\eps \in (0,\infty)$ and
\begin{align*}
\sup_{t \in I} \norm{U_{\eps}(t) - V_{\eps}(t) } \longrightarrow 0 \quad (\eps \searrow 0).
\end{align*}
\end{thm}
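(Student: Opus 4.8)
The plan is to derive both assertions from the general adiabatic theorem without spectral gap condition, Theorem~\ref{thm: erw adsatz ohne sl, zeitabh}, checking its hypotheses with the help of Lemmas~\ref{lm: allg lm} and~\ref{lm: lm für adsatz ohne sl}. The well-posedness input is immediate: the rescaled forms $\frac{1}{\eps}a(t)$ still satisfy Condition~\ref{cond: vor an a(t)} with $n=2$ (replace $m$ by $m/\eps$), so Kisy\'{n}ski's Theorem~\ref{thm: Kisynski} yields a \emph{unitary} evolution system $U_{\eps}$ for $\frac{1}{\eps}A$ on $D(A(t))$, and Condition~\ref{cond: U_eps existiert und beschränkt} holds with $M=1$. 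I would take $\vartheta(t)\equiv 0$ and $\delta_0:=1$: then $\lambda(t)+\delta\in\rho(A(t))$ for $\delta\in(0,1]$ because $A(t)$ is skew-adjoint, the spectral theorem gives $\norm{(\lambda(t)+\delta-A(t))^{-1}}\le 1/\delta$, and Lemma~\ref{lm: lm für adsatz ohne sl} supplies both the strong continuous differentiability of $t\mapsto(\lambda(t)+\delta-A(t))^{-1}$ and the bound $M_0'/\delta^2$ on its derivative; combined with $\sup_t\norm{P'(t)}<\infty$ this gives the required estimate on $\frac{d}{dt}\big((\lambda(t)+\delta-A(t))^{-1}(1-P(t))\big)$ with $m_0=1$.

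Two hypotheses need a little work, since Condition~\ref{cond: vor adsatz ohne sl} is weaker than what Theorem~\ref{thm: erw adsatz ohne sl, zeitabh} literally demands: $\lambda(\,.\,)$ is only assumed continuous (whereas Lemma~\ref{lm: lm für adsatz ohne sl} needs it $C^1$), and $P(t)$ is only weakly associated for almost every $t$ (whereas~\eqref{eq:commut and incl ass ad thm wo sg} is required for every $t$). I would dispose of both at once via Lemma~\ref{lm: allg lm}(ii): for the fixed $z=i\in\rho(A_{a(t)})$ the map $t\mapsto(A_{a(t)}-i)^{-1}$ is strongly continuously differentiable on $I$, and for every $t$ at which $P(t)$ is weakly associated one has $P(t)X=\ker(A(t)-\lambda(t))$ (of order $1$, $A(t)$ being skew-adjoint), hence $(A_{a(t)}-i)^{-1}P(t)x=(i\lambda(t)-i)^{-1}P(t)x$ for all $x\in H$. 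Both sides are continuous in $t$ and agree on a dense set, so this identity holds for \emph{every} $t\in I$; it gives $P(t)X\subset\ker(A(t)-\lambda(t))\subset D(A(t))$ for all $t$, and a short computation using $A(t)^*=-A(t)$, $P(t)^*=P(t)$ and $\overline{\lambda(t)}=-\lambda(t)$ then gives $P(t)A(t)\subset A(t)P(t)$ for all $t$, so~\eqref{eq:commut and incl ass ad thm wo sg} holds with $m_0=1$. Reading the same identity as $(i\lambda(t)-i)^{-1}=\scprd{x_0,(A_{a(t)}-i)^{-1}P(t)x_0}/\scprd{x_0,P(t)x_0}$ for a fixed $x_0$ with $P(t_0)x_0\ne 0$ exhibits $\lambda(\,.\,)$ locally as the reciprocal of a non-vanishing $C^1$ function, whence $\lambda(\,.\,)\in C^1(I)$.

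With all hypotheses in place, Theorem~\ref{thm: erw adsatz ohne sl, zeitabh}(i) gives $\sup_t\norm{(U_{\eps}(t)-V_{0\,\eps}(t))P(0)}\to 0$, where $V_{0\,\eps}$ is the evolution system for $\frac{1}{\eps}AP+[P',P]=\frac1\eps\lambda P+[P',P]$ on $H$. For the companion bound I would pass to adjoints: the bounded generator $\frac1\eps\lambda(t)P(t)+[P'(t),P(t)]$ of $V_{0\,\eps}$ is skew-adjoint (here one uses $P(t)^*=P(t)$, hence $P'(t)^*=P'(t)$, hence $[P',P]^*=-[P',P]$, together with $\overline{\lambda(t)}=-\lambda(t)$), so $V_{0\,\eps}(t)$ is unitary, and $V_{0\,\eps}$ is adiabatic w.r.t.~$P$ (as shown in the proof of Theorem~\ref{thm: erw adsatz ohne sl, zeitabh}, cf.~Proposition~\ref{prop: intertwining relation}). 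Hence
\begin{align*}
\norm{P(t)\big(U_{\eps}(t)-V_{0\,\eps}(t)\big)}
&=\norm{\big(U_{\eps}(t)^{-1}-V_{0\,\eps}(t)^{-1}\big)P(t)} \\
&=\norm{U_{\eps}(t)^{-1}\big(V_{0\,\eps}(t)-U_{\eps}(t)\big)V_{0\,\eps}(t)^{-1}P(t)} \\
&=\norm{U_{\eps}(t)^{-1}\big(V_{0\,\eps}(t)-U_{\eps}(t)\big)P(0)V_{0\,\eps}(t)^{-1}} \\
&\le \norm{\big(U_{\eps}(t)-V_{0\,\eps}(t)\big)P(0)}\longrightarrow 0,
\end{align*}
using $V_{0\,\eps}(t)^{-1}P(t)=P(0)V_{0\,\eps}(t)^{-1}$ and the unitarity of $U_{\eps}(t)$ and $V_{0\,\eps}(t)$.

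Finally, if $t\mapsto P(t)$ is thrice weakly continuously differentiable, then it is in particular norm continuously differentiable (a strongly $C^1$ family of self-adjoint projections of constant finite rank is automatically norm $C^1$), and---as in the remark following Theorem~\ref{thm: adsatz mit sl}---one checks that the symmetric form $c(t)(\,.\,,\,..\,):=\frac1\eps a(t)(\,.\,,\,..\,)+\scprd{\,.\,,i[P'(t),P(t)]\,..\,}$ satisfies Condition~\ref{cond: vor an a(t)} with $n=2$ (for $m$ large). Since $\frac1\eps A(t)+[P'(t),P(t)]=\frac1i A_{c(t)}$ and $D(A_{c(t)})=D(A(t))$, Kisy\'{n}ski's theorem provides the unitary evolution system $V_{\eps}$ for $\frac1\eps A+[P',P]$ on $D(A(t))$, and Theorem~\ref{thm: erw adsatz ohne sl, zeitabh}(ii) then gives $\sup_t\norm{U_{\eps}(t)-V_{\eps}(t)}\to 0$. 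I expect the genuine obstacle to lie in the second paragraph---bridging the gap between the almost-everywhere weak associatedness that is assumed and the everywhere-valid relations~\eqref{eq:commut and incl ass ad thm wo sg} the general theorem needs, and simultaneously extracting the $C^1$-regularity of $\lambda$---and it is exactly there that Lemma~\ref{lm: allg lm}(ii), which trades the resolvent of $A(t)$ for the inverse of an operator with time-independent domain, is indispensable.
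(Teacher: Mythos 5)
Your proposal is correct and follows the paper's overall route: reduce everything to the general Theorem~\ref{thm: erw adsatz ohne sl, zeitabh} with $m_0=1$, using Kisy\'{n}ski's theorem for Condition~\ref{cond: U_eps existiert und beschränkt}, Lemma~\ref{lm: allg lm} and Lemma~\ref{lm: lm für adsatz ohne sl} for the resolvent regularity and estimates, and then doing the real work of upgrading the almost-everywhere weak associatedness to the everywhere-valid relations~\eqref{eq:commut and incl ass ad thm wo sg} and extracting $\lambda \in C^1(I)$ -- you correctly identified this as the crux. Two of your steps differ from the paper in a way worth recording. First, to pass from ``almost every $t$'' to ``every $t$'' the paper applies a closedness argument to the operator $A_0^-(t)$ acting in $H^-$ with time-independent domain $j(H^+)$, whereas you exploit the \emph{bounded} resolvent: the identity $(A_{a(t)}-i)^{-1}P(t)x=(i\lambda(t)-i)^{-1}P(t)x$ holds a.e., both sides are continuous in $t$, and the complement of a null set is dense; this is a clean and entirely adequate substitute, and it simultaneously delivers the $C^1$-regularity of $\lambda$ by the same quotient formula the paper uses (with $(A(t)-1)^{-1}$ in place of $(A_{a(t)}-i)^{-1}$). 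Second, for $\sup_t\|P(t)(U_\eps(t)-V_{0\,\eps}(t))\|\to 0$ the paper extends both evolution systems to all of $I^2$ by inverses and reruns the proof of Theorem~\ref{thm: erw adsatz ohne sl, zeitabh} to get $\sup_{(s,t)\in I^2}\|(U_\eps(t,s)-V_{0\,\eps}(t,s))P(s)\|\to 0$ before taking adjoints; your identity $U_\eps(t)^{-1}-V_{0\,\eps}(t)^{-1}=U_\eps(t)^{-1}(V_{0\,\eps}(t)-U_\eps(t))V_{0\,\eps}(t)^{-1}$ combined with unitarity and the intertwining relation $V_{0\,\eps}(t)^{-1}P(t)=P(0)V_{0\,\eps}(t)^{-1}$ reduces the second convergence directly to the first and avoids that detour entirely -- a genuine simplification. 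The only cosmetic blemish is your justification that $P$ is norm $C^1$ in the last part: the ``constant finite rank'' argument is unnecessary, since thrice weak continuous differentiability already gives norm continuous differentiability by the standard hierarchy (weakly $C^{k+2}$ implies norm $C^{k}$ via uniform boundedness); the conclusion is right either way.
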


\begin{proof}
We have to verify the hypotheses of the general adiabatic theorem without spectral gap condition for time-dependent domains (Theorem~\ref{thm: erw adsatz ohne sl, zeitabh}) with $m_0 = 1$. In view of Lemma~\ref{lm: lm für adsatz ohne sl} it remains to establish three small things, namely the inclusions $P(t)H \subset \ker(A(t)-\lambda(t))$ and $P(t)A(t) \subset A(t)P(t)$ for every $t \in I$ (required in Theorem~\ref{thm: erw adsatz ohne sl, zeitabh}) and the continuous differentiability of $t \mapsto \lambda(t)$ (required in  Theorem~\ref{thm: erw adsatz ohne sl, zeitabh} and Lemma~\ref{lm: lm für adsatz ohne sl}). 
We know by assumption that $P(t)H = \ker(A(t)-\lambda(t)) = \ker(A_0(t)-\lambda_0(t))$ for almost every $t \in I$ so that $P(t)H \subset D(A_0(t)) \subset H^+$ and
\begin{align*}
0 = j\big( (A_0(t)-\lambda_0(t))P(t)x \big) 
= \big( A_0^-(t)-\lambda_0(t) \big) j(P(t)x)
\end{align*} 
for all $x \in H$ and almost every $t \in I$ (where $A_0(t)$, $\lambda_0(t)$ are defined as in the proof of Lemma~\ref{lm: lm für adsatz ohne sl} and where $A_0^-(t)$ is the self-adjoint operator in $(H^-, \norm{\,.\,}_t^-)$ from the proof of Lemma~\ref{lm: allg lm}). Applying the  closedness argument 
after Theorem~3.2 of~\cite{zeitunabh} 
to the closed 
operator $1/i A_0^-(t): j(H^+) \subset H^- \to H^-$ (with time-independent domain!), 
we see that $j(P(t)H) \subset j(H^+)$ 
and 
\begin{align*}
0 = \big( A_0^-(t)-\lambda_0(t) \big) j(P(t)x) = a(t)(\,.\,,P(t)x) - \lambda_0(t) \scprd{\,.\,,P(t)x}
\end{align*}  
for all $x \in H$ and every (not only almost every) $t \in I$. In particular, for every $t \in I$, 
\begin{align*}
0= a(t)(y,P(t)x) - \lambda_0(t) \scprd{y,P(t)x} = \scprd{(A_0(t)-\lambda_0(t))y, P(t)x}
\end{align*}
for $y \in D(A_0(t))$ and $x \in H$, so that 
\begin{align} \label{eq: inklusion für alle t}
P(t)H \subset \ker(A_0(t)-\lambda_0(t))^* = \ker(A(t)-\lambda(t))
\end{align}
for every $t \in I$, as desired.
In other words, $A(t)P(t) = \lambda(t)P(t)$ for every $t \in I$ and therefore we also obtain
\begin{align*}
P(t)A(t) = -P(t)^* A(t)^* \subset -\big( A(t)P(t) \big)^* 
= \lambda(t)P(t) = A(t)P(t)
\end{align*}
for every $t \in I$, as desired.
Since, finally, for every $t_0 \in I$ there is a neighbourhood $J_{t_0} \subset I$ and an $x_0 \in H$ such that $P(t)x_0 \ne 0$ for $t \in J_{t_0}$, 
it follows from~\eqref{eq: inklusion für alle t} that
\begin{align*}
\frac{1}{\lambda(t)-1} = \frac{   \scprd{P(t)x_0, (A(t)-1)^{-1} P(t)x_0}  }{   \scprd{P(t)x_0, P(t)x_0}   } 
\end{align*} 
for every $t \in J_{t_0}$, from which in turn it follows (by Lemma~\ref{lm: allg lm}) 
that $t \mapsto \lambda(t)$ is continuously differentiable, as desired.
\smallskip

According to what has been said at the beginning of the proof, it is now clear that Lemma~\ref{lm: lm für adsatz ohne sl} can be applied and that the hypotheses of the first part of Theorem~\ref{thm: erw adsatz ohne sl, zeitabh} are satisfied. 
Since the evolution system $U_{\eps}$ is unitary (by Theorem~\ref{thm: Kisynski}) and $V_{0\,\eps}$ is unitary as well (by Theorem~X.69 of~\cite{ReedSimon}), we see by modifying the proof of Theorem~\ref{thm: erw adsatz ohne sl, zeitabh} in an obvious manner that
\begin{align} \label{eq: aussage adsatz für unitäre zeitentw}
\sup_{(s,t)\in I^2} \norm{ \big( U_{\eps}(t,s) - V_{0\,\eps}(t,s) \big) P(s) } \longrightarrow 0 \quad (\eps \searrow 0),
\end{align} 
where $U_{\eps}(t,s) := U_{\eps}(s,t)^{-1} = U_{\eps}(s,t)^*$ and $V_{0\,\eps}(t,s) := V_{0\,\eps}(s,t)^{-1} = V_{0\,\eps}(s,t)^*$ for $(s,t) \in I^2$ with $s > t$. Since 
\begin{align*}
\norm{ P(t) \big( U_{\eps}(t) -V_{0\,\eps}(t) \big) } = \norm{ \big(U_{\eps}(0,t)-V_{0\,\eps}(0,t)\big)P(t) }
\end{align*}
for $t \in I$ (take adjoints), the first two of the asserted convergences follow from~\eqref{eq: aussage adsatz für unitäre zeitentw}. 
\smallskip

Suppose finally that $t \mapsto P(t)$ is thrice weakly continuously differentiable. Then the symmetric sesquilinear forms $$\frac 1 \eps a(t) + b(t) := \frac 1 \eps a(t) +i \scprd{\,.\,,[P'(t),P(t)]\,..\,}$$ corresponding to the operators $\frac{1}{\eps}A(t) + [P'(t),P(t)]$ satisfy Condition~\ref{cond: vor an a(t)} with $n = 2$ and therefore the evolution system $V_{\eps}$ for $\frac 1 \eps A + [P',P]$ exists on $D(A(t))$ for every $\eps \in (0,\infty)$ by virtue of Theorem~\ref{thm: Kisynski}. Also, $t \mapsto P(t)$ is obviously norm continuously differentiable and so the hypotheses of the second part of Theorem~\ref{thm: erw adsatz ohne sl, zeitabh} are satisfied, which gives the third and last 
of the asserted convergences.
\end{proof}

What are the differences between the above theorem and Bornemann's adiabatic theorem of~\cite{Bornemann98}? 
While in Theorem~IV.1 of~\cite{Bornemann98} $\lambda(t)$ is required to belong to the discrete spectrum of $A(t)$ (and hence to be an isolated eigenvalue) for every $t \in I$, 
in the above theorem it is only required that $\lambda(t)$ has finite multiplicity for almost every $t \in I$: the eigenvalues $\lambda(t)$ are allowed to have infinite multiplicity on a set of measure zero and, moreover, they are allowed to be non-isolated in $\sigma(A(t))$ for every $t \in I$.
Also, the regularity conditions on $A$ and $P$ of the above theorem are slightly weaker than those of Theorem~IV.1: for instance, $t \mapsto \tilde{A}_0(t)$ is required to be twice continuously differentiable w.r.t.~the norm operator topology in~\cite{Bornemann98} whereas above it is only required that $t \mapsto a(t)(x,y)$ be twice continuously differentiable for $x, y \in H^+$ (or equivalently (Lemma~\ref{lm: allg lm}), that $t \mapsto \tilde{A}_0(t)$ be twice weakly continuously differentiable). 
%
And finally, the assertion of the theorem above is 
more general than the conclusion of Theorem~IV.1 in~\cite{Bornemann98} which says that, for all $x \in H^+$ (and hence for all $x \in H$)  
and uniformly in $t \in I$,
\begin{align*}
& \scprd{ U_{\eps}(t)x, P(t)U_{\eps}(t)x } 
 = \scprd{ U_{\eps}(t)x, P(t)U_{\eps}(t)x - U_{\eps}(t)P(0)x } + \scprd{ x, P(0)x } \\
& \qquad \qquad \qquad \qquad \quad \longrightarrow  \scprd{x, P(0)x} \quad (\eps \searrow 0). 
\end{align*} 

\section*{Acknowledgement}

I would like to thank Marcel Griesemer for numerous discussions and for introducing me to adiabatic theory in the first place. I would also like to thank the German Research Foundation (DFG) for financial support through the research training group ``Spectral theory and dynamics of quantum systems'' (GRK 1838).

\begin{small}
%
%
%


\end{small}

\end{document}